\def\bs{\boldsymbol}
\definecolor{red}{rgb}{1,0,0}
\definecolor{green}{rgb}{0,1,0}
\definecolor{blue}{rgb}{0,0,1}
\definecolor{lxy}{RGB}{180,0,180}
\def\bs{\boldsymbol}
\newcommand{\thetheorem}{{\thesection. \arabic{theorem}}}
\newcommand{\thelemma}{{\thesection. \arabic{lemma}}}
\newcommand{\theproposition}{{\thesection. \arabic{proposition}}}
\newcommand{\thecorollary}{{\thesection. \arabic{corollary}}}
\newtheorem{theorem}{{\sc Theorem}}
\newtheorem{lemma}{{\sc Lemma}}
\newtheorem{corollary}{{\sc Corollary}}
\begin{document}
\renewcommand{\baselinestretch}{1.2}
\markboth{\hfill{\footnotesize\rm Guannan Wang and Jue Wang}\hfill}
{\hfill {\footnotesize\rm On Selection of Semiparametric Spatial Regression Models} \hfill}
\renewcommand{\thefootnote}{}
$\ $\par \fontsize{10.95}{14pt plus.8pt minus .6pt}\selectfont
\vspace{0.8pc} \centerline{\large\bf On Selection of Semiparametric Spatial Regression Models}
\vspace{.4cm} \centerline{Guannan Wang$^{a}$ and Jue Wang$^{b}$
\footnote{\emph{Address for correspondence}: Guannan Wang, Department of Mathematics, College of William \& Mary, Williamsburg, VA, USA. Email: gwang01@wm.edu}} \vspace{.4cm} \centerline{\it $^{a}$College of William \& Mary and $^{b}$Iowa State University} \vspace{.55cm}
\fontsize{9}{11.5pt plus.8pt minus .6pt}\selectfont

\begin{quotation}
\noindent \textit{Abstract:} In this paper, we focus on the variable selection techniques for a class of semiparametric spatial regression models which allow one to study the effects of explanatory variables in the presence of the spatial information. The spatial smoothing problem in the nonparametric part is tackled by means of bivariate splines over triangulation, which is able to deal efficiently with data distributed over irregularly shaped regions. In addition, we develop a unified procedure for variable selection to identify significant covariates under a double penalization framework, and we show that the penalized estimators enjoy the ``oracle" property. The proposed method can simultaneously identify non-zero spatially distributed covariates and solve the problem of ``leakage" across complex domains of the functional spatial component. To estimate the standard deviations of the proposed estimators for the coefficients, a sandwich formula is developed as well. In the end, Monte Carlo simulation examples and a real data example are provided to illustrate the proposed methodology. All technical proofs are given in the appendixes.

\vspace{9pt} \noindent \textit{Key words and phrases:} Bivariate splines, Partially linear models, Penalized regression, Semiparametric regression, Spatial data.
\end{quotation}

\fontsize{10.95}{14pt plus.8pt minus .6pt}\selectfont

\thispagestyle{empty}

%%%%%%%%%%%%%%%%%%%%%%%%%%%%%%%%%%%%%%%%%%%%%%%%%%%%%%%%%%%%%%%%%%%%%%
%%%%%%%%%%%%%%%%%%%%%%%%%%%%%%%%%%%%%%%%%%%%%%%%%%%%%%%%%%%%%%%%%%%%%%
%%%%%%%%%%%%%%%%%%%%%%%%%%%%%%%%%%%%%%%%%%%%%%%%%%%%%%%%%%%%%%%%%%%%%%
\setcounter{chapter}{1}
\setcounter{equation}{0}
\noindent \textbf{1. Introduction} \vskip 0.1in
\label{SEC:introduction}

In many economic and geographic data studies, we may have spatially-referenced covariates providing information regarding the spatial distribution which impact the response variable of interest. Meanwhile, many other explanatory variables could be introduced to the model to help explain the response variable. For example, the mortality dataset described in Section 6 consists of aggregated data from each of 3,037 counties in the United States; see Figure \ref{FIG:Mortality}. The explanatory variables contain the county level social, economic and ethnic information that could affect the mortality rate.

\begin{figure}[t]
	\begin{center}
		\includegraphics[scale=0.25]{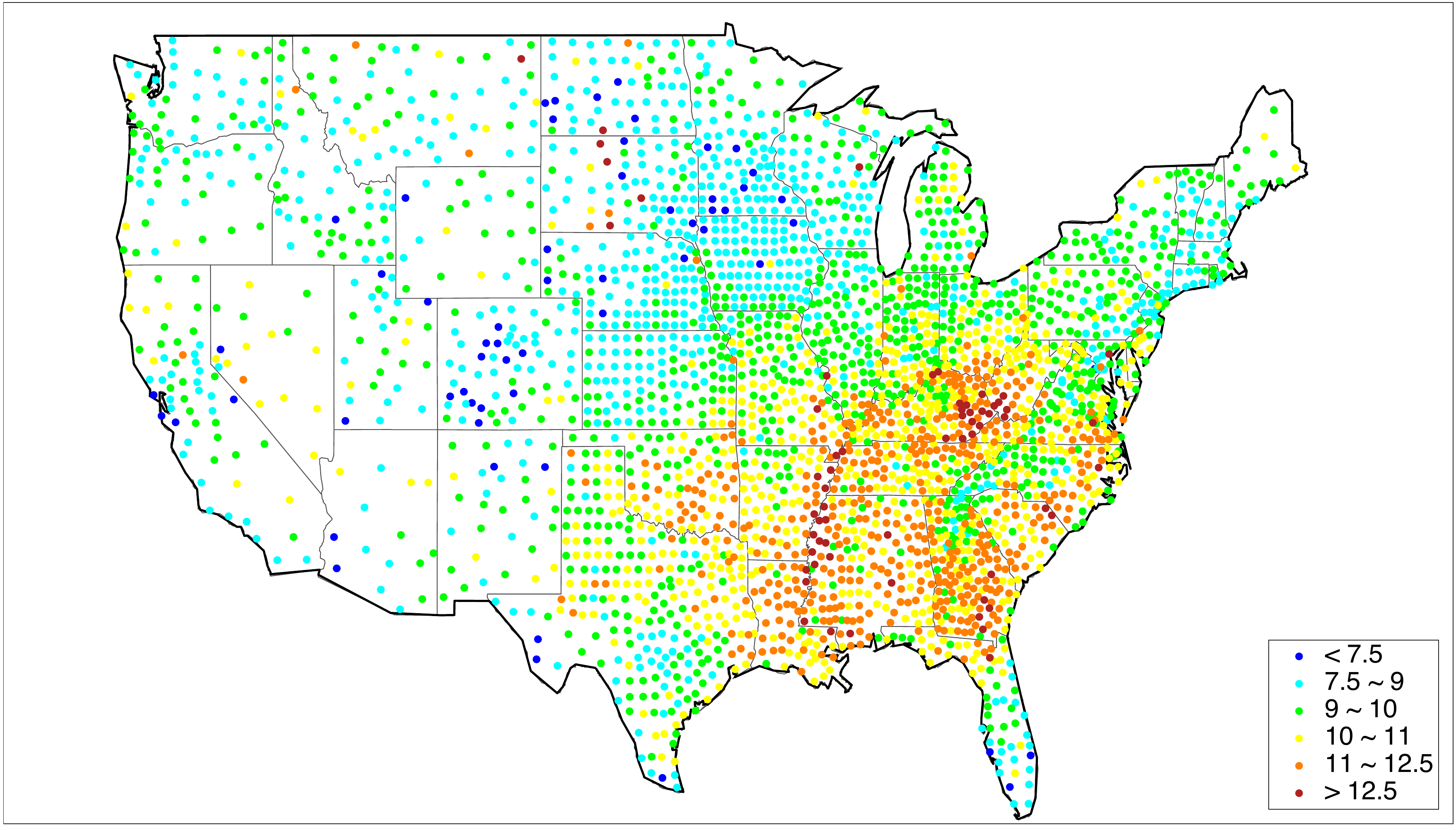}
		\caption{Mortality rate from 3,037 counties in the U.S.\label{FIG:Mortality}}
	\end{center}
\end{figure}

To incorporate the spatial information into the regression, there are mainly two kinds of modeling approaches. The first approach adds spatial correlation structure (or weights) to a regression modeling to include spatial information, for example, \cite{Leung:Cooley:14} provided a through comparison of the predictive ability of a traditional geostatistical model with that of a non-traditional Gaussian process model; \citep{Lee:2004,Hoshino:18,LeSage:Pace:09,Wall:04} studied the spatial autoregressive (SAR) model and the conditional autoregressive (CAR) model; and \cite{Nandy:Lim:Maiti:17} considered the spatially weighted regression (SWR) method. A second approach is based on some smoothing techniques, for example, kernel, wavelet or spline smoothing, which uses a deterministic smooth bivariate function to describe the variations and connections among values at different locations; see, for example, \cite{Gheriballah:Laksaci:Rouane:10}, \cite{Ramsay:02}, \cite{Wood:03}, \cite{strand2006wavelet}, \cite{Sangalli:Ramsay:Ramsay:13} and \cite{Lai:Wang:13}. In this paper, we take the second approach. We focus on the partially linear spatial model (PLSM) containing both linear components and a nonparametric component with spatial information involved for data distributed over a two-dimensional (2-D) domain.

Suppose there are $n$ location points ranging over a bounded domain $\Omega \subseteq \mathbb{R}^2$ of arbitrary shape. For the $i$th location point, we observe $(\mathbf{Z}_{i},\mathbf{X}_{i},Y_{i})$, where $\mathbf{Z}_{i}=(Z_{i1},\ldots,Z_{ip})^{\top}$ is a $p$-dimensional vector of the predictor variables. For example, in the mortality data analysis, the vector $\mathbf{Z}$ contains socioeconomic and race/ethnicity information such as Gini coefficient, social affluence and proportion of African-American, etc. Variable $\mathbf{X}_{i}=(X_{i1},X_{i2})^{\top}$ represents the location (longitude and latitude), and $Y_i$ is the response variable of interest. We consider the following semiparametric regression model
\begin{equation}
Y_{i}=\mathbf{Z}_i^{\top} \bs{\beta}+\alpha\left(\mathbf{X}_{i}\right)+\epsilon _{i}, \quad i=1,\ldots ,n,
\label{model}
\end{equation}
where $\bs{\beta}=(\beta_{1},\ldots,\beta_{p})^{\top}$ are unknown parameters and $\alpha(\cdot)$
is some unknown but smooth bivariate function. When $\alpha(\cdot)$ is a univariate function, this model is the traditional partially linear model (PLM), and its estimation and theoretical properties have been well studied; see, for example, %\cite{Ma:Chiou:Wang:06}, 
\cite{Huang:Zhang:Zhou:07}, \cite{He:Tang:Zuo:14} and \cite{Brown:Levine:Wang:16}. Following the nonparametric smoothing approach, the random noises, $\epsilon_{i}$'s, are assumed to be i.i.d with $E\left( \epsilon_{i}\right)=0$ and $\mathrm{Var}\left(\epsilon _{i}\right) =\sigma^2$, and each $\epsilon_{i}$ is independent of $\mathbf{X}_{i}$ and $\mathbf{Z}_{i}$.

While it may be desirable to include many predictors in the model, there are practical constraints that limit the feasibility of such an approach. For example, one problem that arises when analyzing high dimensional data is the computation efficiency. Increasing model sparsity enforces a lower dimensional model structure; thus the model can be estimated more efficiently. In addition, it makes inference more tractable, models easier to interpret, and leads to more robustness against noise.

Variable selection has been well studied in the partially linear model (PLM) literature with univariate functional component $\alpha(\cdot)$; see \cite{Bunea:Wegkamp:04,Liang:Li:09,Xie:Huang:09} and among others. When $\mathbf{X}$ is bivariate or multivariate, existing variable selection methods have been largely limited to the additive model (AM) or partially linear additive model (PLAM) which approximates the surface by a combination of an additive collection of univariate functions; see, for example, \cite{Ma:Yang:11,Ma:Song:Wang:13,Lian:Liang:Wang:14,Liu:Wang:Liang:11,Wang:Liu:Liang:Carroll:11,Lian:12}. However, many spatial studies showed that the additive assumption is questionable in the two-dimensional (2-D) domain.

Traditional bivariate smoothing tools require that data distributed nicely on a rectangular domain. However, the domain over which variables of interest are defined in many of the spatial data is often found to be irregular and complicated. It is challenging to achieve variable selection for irregularly spaced spatial data distributed over complex domains, and the challenges include (i) how to identify those important covariates in $\mathbf{Z}$, (ii) how to estimate the bivariate function $\alpha(\cdot)$ ranging over some irregular 2-D domains, (iii) how to deal with unevenly distributed data with observations dense at some locations while sparse at others, (iv) how to make the estimation and selection both computationally efficient and theoretically reliable.

To approximate the bivariate $\alpha(\cdot)$, many of the approaches involve tensor product estimation. However, the structure of tensor products is most useful when the data are observed in a rectangular domain, and is undesirable when data are located in spatial domains with complex boundary features such as the domain of the US; see Figure \ref{FIG:Mortality}. Many conventional smoothing tools (kernel smoothing, wavelet smoothing and tensor product splines) suffer from the problem of ``leakage'' across the complex domains, which refers to the poor estimation over difficult regions by smoothing inappropriately across boundary features; see more discussions in \cite{Ramsay:02} and \cite{Wood:Bravington:Hedley:08}.

To this end, we aim to address questions (i)-(iv). To deal with the irregular domain problem in (ii), \cite{Sangalli:Ramsay:Ramsay:13} proposed to use finite element analysis, a method mainly developed and used to solve partial differential equations \cite{Wood:Bravington:Hedley:08} proposed the soap film smoothing method. Although their method is useful in many practical applications, the theoretical properties of the smoothing were not studied in their paper. In this paper, we will approximate the nonparametric function $\alpha(\cdot)$ using bivariate splines, i.e., smooth piecewise polynomial functions, over triangulations \citep{Lai:Schumaker:07}. This method solves the problem of ``leakage'' across the complex domains, and it does not require constructing finite elements or locally supported basis functions. It is also computationally efficient, and provides excellent approximation properties \citep{Lai:Wang:13}, thus, it can handle part of the challenges mentioned in (iv).

To further meet the challenges in (i), (iii) and (iv), we incorporate a variable selection mechanism into the PLSM and propose a double penalized least squares approach based on  bivariate spline approximation over the spatial domain. Roughness penalty based on the second-order derivatives is employed to regularize the spline fit, and shrinkage penalty on parametric components is used to achieve the variable selection. When we have regions of sparse data, penalized splines provide a more convenient tool for data fitting than the unpenalized splines. We demonstrate that the estimator possesses the ``oracle" property in the sense that it is as efficient as the estimator when the true model is known prior to statistical analysis. We also propose a coordinate descent based algorithm to perform the variable selection efficiently.

The rest of this article is organized as follows. In Section 2, we first introduce the triangulations and bivariate spline spaces, then we propose the double-penalized least squares method for joint variable selection and model estimation, and define the penalized estimator $(\widehat{\bs{\beta}}, \widehat{\alpha})$. In Section 3 , we further study the asymptotic properties of the estimator $\widehat{\bs{\beta}}$ including the consistency and ``oracle" property, as well as the rate of convergence of $\widehat{\alpha}$. In Section 4, we discuss some implementation details such as how to select the tuning parameters.  Sections 5 and 6 present simulation results and a real data application of the mortality data. Section 7 concludes the paper with a discussion of related issues. Technical details are provided in the appendixes.

%%%%%%%%%%%%%%%%%%%%%%%%%%%%%%%%%%%%%%%%%%%%%%%%%%%%%%%%%%%%%%
%%%%%%%%%%%%%%%%%%%%%%%%%%%%%%%%%%%%%%%%%%%%%%%%%%%%%%%%%%%%%%
%%%%%%%%%%%%%%%%%%%%%%%%%%%%%%%%%%%%%%%%%%%%%%%%%%%%%%%%%%%%%%
\setcounter{chapter}{2} \label{SEC:method} \renewcommand{\theproposition}{2.\arabic{proposition}}
\renewcommand{\thetable}{2.\arabic{table}} \setcounter{table}{0} 
\renewcommand{\thefigure}{2.\arabic{figure}} \setcounter{figure}{0}  \setcounter{equation}{0} \setcounter{lemma}{0} \setcounter{theorem}{0} \setcounter{proposition}{0} \setcounter{corollary}{0}
\vskip .12in \noindent \textbf{2. Methodology} \vskip 0.10in

We approximate the function $\alpha(\cdot)$ by bivariate splines defined over a 2D triangulated domain. In the following, we use $\tau$ to denote a triangle which is a convex hull of three points not located in one line. A collection $\triangle=\{\tau_1,\ldots,\tau_K\}$ of $K$ triangles is called a triangulation of $\Omega=\cup_{k=1}^{K}\tau_k$ provided that if a pair of triangles in $\triangle$ intersect, then their intersection is either a common vertex or a common edge. See Figures \ref{FIG:eg1-2} and \ref{FIG:Mortality-tri} for some examples of triangulations.

Various packages have been developed for triangulation; see for example, the ``Delaunay'' algorithm (MATLAB program \textit{delaunay.m} or MATHEMATICA function \textit{DelaunayTriangulation}), the ``Triangle" (\url{http://www.cs.cmu.edu/~quake/triangle.html}) by \cite{Shewchuk:96}, and the ``DistMesh'' (\url{http://persson.berkeley.edu/distmesh}).

%%%%%%%%%%%%%%%%%%%%%%%%%%%%%%%%%%%%%%%%%%%%%%%%%%%%%%%%%%%%%%
\vskip .10in \noindent \textbf{2.1. Penalized spline estimators} \vskip .10in

For a nonnegative integer $r$, let $\mathbb{C}^r(\Omega)$ be the collection of all $r$-th continuously differentiable functions over $\Omega$. Given a triangulation $\triangle$, let $\mathbb{S}_{d}^{r}(\triangle)=\{s\in\mathbb{C}^{r}(\Omega):s|_{\tau}\in \mathbb{P}_{d}(\tau), \tau \in \triangle\}$ be a spline space of degree $d$ and smoothness $r$ over triangulation $\triangle $, where $s|_{\tau}$ is the polynomial piece of spline $s$ restricted on triangle $\tau$, and  $\mathbb{P}_{d}$ is the space of all polynomials of degree less than or equal to $d$. It has been proved in \cite{Lai:Schumaker:07} that for a fixed smoothness $r\geq1$, the spline space $\mathbb{S}_{3r+2}^{r}(\triangle)$ achieves the optimal rate of convergence for noise-free datasets, thus, for notation simplicity, we let $\mathbb{S}=\mathbb{S}_{3r+2}^{r}(\triangle)$. Given a $\lambda >0$ and $\{(\mathbf{Z}_i,\mathbf{X}_i,Y_i)\}_{i=1}^{n}$, we consider the following minimization problem:
\begin{eqnarray}
\min_{s\in \mathbb{S}}\frac{1}{2}\sum_{i=1}^{n}\left\{Y_{i}-\mathbf{Z}_{i}^{\top} \bs{\beta}-s\left(
\mathbf{X}_{i}\right)\right\} ^{2}+\frac{1}{2}\lambda
\mathcal{E}(s), \label{DEF:minimization}
\end{eqnarray}
where
\begin{equation*}
\mathcal{E}(s)=\int_{\Omega} \left\{\left(\frac{\partial^{2}}{\partial x_{1}^2} s\right)^2+2\left(\frac{\partial^{2}}{\partial x_{1} \partial x_{2}} s\right)^2+\left(\frac{\partial^{2}}{\partial x_{2}^2} s\right)^2\right\}dx_{1}dx_{2}.
\label{penalty-2}
\end{equation*}

We use Bernstein basis polynomials to represent the bivariate splines. Let $\{B_{\xi }\}_{\xi \in \mathcal{K}}$ be the set of degree-$d$ bivariate Bernstein basis polynomials for $\mathbb{S}$ constructed in \cite{Lai:Schumaker:07}, where $\mathcal{K}$ stands for an index set of $K$ Bernstein basis polynomials. Then we can write the function $s(\mathbf{x})=\sum_{\xi \in \mathcal{K}} B_{\xi}(\mathbf{x})\gamma_{\xi}
=\mathbf{B}(\mathbf{x})^{\top}\bs{\gamma}$, where $\bs{\gamma}^{\top} =(\gamma_{\xi},\xi \in \mathcal{K})$
is the spline coefficient vector. To meet the smoothness requirement of the splines, we need to impose some constraints on the spline coefficients. Denote $\mathbf{H}$ the constraint matrix  on the coefficients $\bs{\gamma}$, which depends on $r$ and the structure of the triangulation and enforces smoothness across shared edges of triangles. A simple example of $\mathbf{H}$ is given in \cite{Zhou:Pan:14}. The smoothness conditions are linear, and can be written as $\mathbf{H}\bs{\gamma}=\mathbf{0}$.

Let $\mathbf{Y} = (Y_1,\ldots,Y_n)^{\top}$ be the vector of $n$ observations of the response variable. Denote by $\mathbf{X} _{n\times 2}= \{(X_{i1}, X_{i2})\}_{i=1}^{n}$ the design matrix of the locations and  $\mathbf{Z} _{n\times p}= \{(Z_{i1}, \ldots,Z_{ip})\}_{i=1}^{n}$ the collection of all covariates. Denote by $\mathbf{B}$ the $n\times K$ evaluation matrix of Bernstein basis polynomials whose $i$-th row is given by $\mathbf{B}_{i}^{\top}=\{B_{\xi}(\mathbf{X}_{i}), \xi\in \mathcal{K}\}$. Then the minimization problem in (\ref{DEF:minimization}) reduces to
\begin{equation}
\min_{\bs{\beta},\bs{\gamma}} \frac{1}{2}\left\{\|\mathbf{Y}-\mathbf{Z}\bs{\beta}-\mathbf{B}\bs{\gamma}\|^{2}+\lambda\bs{\gamma}^{\top}\mathbf{P}\bs{\gamma}\right\}~~ \mathrm{subject~to} ~~ \mathbf{H}\bs{\gamma}=\mathbf{0},
\label{EQ:minimization}
\end{equation}
where $\mathbf{P}$ is the diagonally block penalty matrix satisfying that $\bs{\gamma}^{\top}\mathbf{P}\bs{\gamma}=\mathcal{E}(\mathbf{B}\bs{\gamma})$.

To solve the constrained minimization problem (\ref{EQ:minimization}), we first remove the constraint via a QR decomposition of the transpose of matrix $\mathbf{H}$ and convert the problem to a conventional penalized regression problem without any restriction. More specifically, we assume
$\mathbf{H}^{\top}=\mathbf{Q}\mathbf{R}=\left(\mathbf{Q}_1 ~\mathbf{Q}_2\right)
\binom{\mathbf{R}_{1}}{\mathbf{R}_{2}}$,
where $\mathbf{Q}$ is an orthogonal matrix and $\mathbf{R}$ is an upper triangle matrix; the submatrix $\mathbf{Q}_1$ is the first $r_H$ columns of $\mathbf{Q}$, where $r_H$ is the rank of matrix $\mathbf{H}$, and $\mathbf{R}_2$ is a matrix of zeros. We reparameterize using $\bs{\gamma} = \mathbf{Q}_2\bs{\theta}$ for some $\bs{\theta}$, and it has been proved in \cite{Wang:Wang:Lai:Gao:18} that after the reparameterization $\mathbf{H}\bs{\gamma}$ is guaranteed to be $\mathbf{0}$. Then the problem (\ref{EQ:minimization}), is now changed to
\begin{equation}
\min_{\bs{\beta},\bs{\theta}} \left\{\frac{1}{2}\|\mathbf{Y}-\mathbf{Z}\bs{\beta}-\mathbf{B}\mathbf{Q}_{2}\bs{\theta}\|^{2}+\frac{\lambda}{2}(\mathbf{Q}_{2}\bs{\theta})^{\top}\mathbf{P}(\mathbf{Q}_{2}\bs{\theta})\right\}.
\label{EQ:minimization1}
\end{equation}

%%%%%%%%%%%%%%%%%%%%%%%%%%%%%%%%%%%%%%%%%%%%%%%%%%%%%%%%%%%%%%
\vskip .12in \noindent \textbf{2.2. Doubly penalized spline estimators} \vskip .10in

Note that for any fixed $\bs{\beta}$, the minimizer of (\ref{EQ:minimization1}) with respect to $\bs{\theta}$ is
\begin{equation}
\bs{\theta}(\bs{\beta};\lambda)=\left\{\mathbf{Q}_{2}^{\top}(\mathbf{B}^{\top} \mathbf{B}+\lambda\mathbf{P})\mathbf{Q}_{2}\right\}^{-1}
\mathbf{Q}_{2}^{\top}\mathbf{B}^{\top} (\mathbf{Y}-\mathbf{Z}\bs{\beta}),
\label{EQ:theta(beta)}
\end{equation}
Replacing $\bs{\theta}$ by $\bs{\theta}(\bs{\beta};\lambda)$ in (\ref{EQ:minimization1}), we define
\begin{align}
L(\bs{\beta}) \equiv L(\bs{\beta};\lambda) &=\frac{1}{2}\|\mathbf{Y}-\mathbf{Z}\bs{\beta}-\mathbf{B}\mathbf{Q}_{2}\bs{\theta}(\bs{\beta};\lambda)\|^{2}
+\frac{\lambda}{2}\{\mathbf{Q}_{2}\bs{\theta}(\bs{\beta};\lambda)\}^{\top}
\mathbf{P}\{\mathbf{Q}_{2}\bs{\theta}(\bs{\beta};\lambda)\}\notag\\
&=\frac{1}{2}(\mathbf{Y}-\mathbf{Z}\bs{\beta})^{\top}
\{\mathbf{I}-\mathbf{H}_{\mathbf{B}}(\lambda)\}(\mathbf{Y}-\mathbf{Z}\bs{\beta}),
\label{DEF:Lbeta}
\end{align}
where
\begin{equation}
\mathbf{H}_{\mathbf{B}}(\lambda)=\mathbf{B}\mathbf{Q}_{2}\left\{\mathbf{Q}_{2}^{\top}(\mathbf{B}^{\top} \mathbf{B}+\lambda\mathbf{P})\mathbf{Q}_{2}\right\}^{-1}
\mathbf{Q}_{2}^{\top}\mathbf{B}^{\top}.
\label{DEF:HB}
\end{equation}

To achieve the simultaneous estimation of the bivariate function $\alpha(\cdot)$ and the selection of important covariates, we propose a double-penalized least squares method via minimizing
\begin{align}
R(\bs{\beta};\lambda_{1},\lambda_{2}) &=L(\bs{\beta};\lambda_{1})+n\sum_{j=1}^{p}p_{\lambda_{2}}(|\beta_{j}|),
\label{DEF:penalized-min}
\end{align}
where $\lambda_{1}$ and $\lambda_{2}$ are tuning parameters. The first penalty term in (\ref{DEF:penalized-min}) penalizes the roughness of the nonparametric fit $\alpha(\cdot)$ and the second penalty is the shrinkage penalty which shrinks small components of the linear estimates to zero. Various penalty functions have been used in the literature of variable selection for regression models. For example, the LASSO penalty, $p_{\lambda_{2}}(|\beta|)=\lambda_{2}|\beta|$, the Adaptive LASSO (ALASSO) penalty in \cite{Zou:06} is given by $p_{\lambda_{2}}(\beta)=\lambda_{2} w^{*}|\beta|$ for a known data-driven weight $w^{*}$, and the smoothly clipped absolute deviation (SCAD) penalty in \cite{Fan:Li:01}. In this paper, we consider the SCAD penalty defined below:
\begin{equation*}
p_{\lambda_{2}}'(\beta)=\lambda_{2}\left\{I(\beta\leq\lambda_{2})
+\frac{(a\lambda_{2}-\beta)_{+}}{(a-1)\lambda_{2}}I(\beta>\lambda_{2})\right\},
\end{equation*}
for some $a>2$ and $\beta>0$ and $a=3.7$ is used as suggested in \cite{Fan:Li:01}.

The SCAD-penalized estimator of the coefficient $\bs{\beta}$ is then defined as follows:
$
\widehat{\bs{\beta}}=\arg \min_{\bs{\beta}\in \mathbb{R}^{p}}R(\bs{\beta};\lambda_{1},\lambda_{2})
$, 
and the bivariate spline estimator of $\alpha(\mathbf{x})$ is
\begin{equation}
\widehat{\alpha}(\mathbf{x})=\mathbf{B}(\mathbf{x})^{\top}\mathbf{Q}_{2}\left\{\mathbf{Q}_{2}^{\top}(\mathbf{B}^{\top} \mathbf{B}+\lambda_{1}\mathbf{P})\mathbf{Q}_{2}\right\}^{-1}
\mathbf{Q}_{2}^{\top}\mathbf{B}^{\top} (\mathbf{Y}-\mathbf{Z}\widehat{\bs{\beta}}).
\label{DEF:g_hat}
\end{equation}

%%%%%%%%%%%%%%%%%%%%%%%%%%%%%%%%%%%%%%%%%%%%%%%%%%%%%%%%%%%%
%%%%%%%%%%%%%%%%%%%%%%%%%%%%%%%%%%%%%%%%%%%%%%%%%%%%%%%%%%%%
%%%%%%%%%%%%%%%%%%%%%%%%%%%%%%%%%%%%%%%%%%%%%%%%%%%%%%%%%%%%
\setcounter{chapter}{3} \label{SEC:asymptotics} \renewcommand{\thetheorem}{3.\arabic{theorem}}
\renewcommand{\thelemma}{3.\arabic{lemma}}
\renewcommand{\theproposition}{3.\arabic{proposition}}
\renewcommand{\thetable}{3.\arabic{table}} \setcounter{table}{0} 
\renewcommand{\thefigure}{3.\arabic{figure}} \setcounter{figure}{0} 
\setcounter{equation}{0} \setcounter{lemma}{0} \setcounter{theorem}{0}
\setcounter{proposition}{0}\setcounter{corollary}{0}
\vskip .12in \noindent \textbf{3. Asymptotic Results} \vskip 0.10in

In this section, we study the asymptotic properties of the SCAD-penalized partially linear bivariate spline estimator $(\widehat{\bs{\beta}}, \widehat{\alpha})$. We first introduce some notation. For any function $f$ over the closure of domain $\Omega$, denote $\Vert f\Vert _{\infty} =\sup_{\mathbf{x}\in \Omega}|f(\mathbf{x})|$ the supremum norm of function $f$  over $\Omega$, and denote $|f|_{\upsilon,\infty}=\max_{i+j=\upsilon}\left\Vert \frac{\partial^{\upsilon}}{\partial x_{1}^{i} \partial x_{2}^{j}}f(x_{1},x_{2})\right\Vert _{\infty}$
the maximum norm of all the $\upsilon $th order derivatives of $f$ over $\Omega$. Let
\begin{equation}
W^{\ell,\infty }(\Omega)=\left\{f \mathrm{~on~} \Omega:|f|_{k,\infty}<\infty, 0\le k\le \ell \right\}
\label{DEF:Sobolev}
\end{equation}
be the standard Sobolev space. For any $j=1,\ldots, p$, let $z_{j}$ be the coordinate mapping that maps $\mathbf{z}$ to its $j$th component so that $z_{j}(\mathbf{Z}_{i})=Z_{ij}$, and let
\begin{equation}
h_j=\mathrm{argmin}_{h\in L^{2}}\|z_{j}-h\|_{L^{2}}^{2}=\mathrm{argmin}_{h\in L^{2}}E\{Z_{ij}-h(\mathbf{X}_{i})\}^{2}
\label{EQ:h_j}
\end{equation}
be the orthogonal projection of $z_{j}$ onto $L^{2}$.

%%%%%%%%%%%%%%%%%%%%%%%%%%%%%%%%%%%%%%%%%%%%%%%%%%%%%%%%%%%%%%
\vskip .12in \noindent \textbf{3.1. Assumptions} \vskip .10in

Given a triangle $\tau\in \triangle$, let $|\tau|$ be its longest edge length, and $\rho_{\tau}$ be the radius of the largest disk which can be inscribed in $\tau$. Define the shape parameter of $\tau$ as the ratio $\nu_{\tau}=|\tau|/\rho_\tau$. When $\nu_\tau$ is small, the triangle is relatively uniform in the sense that all angles are relatively the same. Denote the size of $\triangle$ by $|\triangle |:=\max \{|\tau|,\tau \in \triangle \}$, i.e., the length of the longest edge of $\triangle$.

Before we state the results, we make the following assumptions:\\
\textbf{Assumption 1.}  The covariates $Z_{ij}$ are bounded uniformly in $i=1, \ldots, n$, $j =1,\ldots, p$.\\
\textbf{Assumption 2.}  The eigenvalues of $E\{(1 ~~\mathbf{Z}_{i}^{\top})^{\top}(1 ~~\mathbf{Z}_{i}^{\top})| \mathbf{X}_{i}\}$ are bounded away from 0.\\
\textbf{Assumption 3.}   The noise $\epsilon$ satisfies that $\lim_{\eta\rightarrow\infty}E\left[\epsilon^{2}I(\epsilon>\eta)\right]
=0$.\\
\textbf{Assumption 4.}  The bivariate functions $h_{j}(\cdot)$, $j=1,\ldots, p$, and the true function in model (\ref{model}), $\alpha(\cdot) \in W^{\ell+1 ,\infty}(\Omega)$, in (\ref{DEF:Sobolev}) for an integer $\ell \ge 2$.\\
\textbf{Assumption 5.}  The joint density of $\mathbf{X}=(X_1,X_2)$ is bounded away from zero and infinity.\\
\textbf{Assumption 6.}  The triangulation $\triangle $ is $\nu$-quasi-uniform, that is, there exists a positive constant $\nu$ such that the triangulation $\triangle$ satisfies  $\nu_{\tau} \leq \nu$,  for all $\tau\in \triangle$.\\
\textbf{Assumption 7.}  The number of the triangles $K$ and the sample size $n$ satisfy that $K=Cn^{\gamma}$ for some constant $C>0$ and $1/(\ell+1)\leq \gamma \leq 1/3$.\\
\textbf{Assumption 8.}  The roughness penalty parameter $\lambda_{1}$ satisfies $\lambda_{1}=o(n^{1/2}K^{-1})$.

Assumptions 1--3 are typical in semiparametric smoothing literature, see for instance, \cite{Huang:Zhang:Zhou:07} and \cite{Wang:Liu:Liang:Carroll:11}. The purpose of Assumption 2 is to ensure that the covariate vector $\mathbf{Z}$ is not multi-collinear. Assumption 4 describes the requirement for the true bivariate function as usually used in the literature of nonparametric or semiparametric estimation; see \cite{Lai:Wang:13}. Assumptions 5--6 require that the partition is quasi-uniform, and suggest that we should not put too few or too many observations in one triangle. Assumption 7 requires that the number of triangles is above some minimum depending upon the degree of the spline, which is similar to the requirement of \cite{Li:Ruppert:08} in the univariate case. Assumption 8 is required to reduce the bias of the bivariate spline approximation through ``under smoothing'' and ``choosing smaller roughness penalty".

%%%%%%%%%%%%%%%%%%%%%%%%%%%%%%%%%%%%%%%%%%%%%%%%%%%%%%%%%%%%%%
\vskip .12in \noindent \textbf{3.2. Sampling properties for the penalized estimators} \vskip .10in
\label{SEC:penalized estimators}

We next show that with a proper choice of $\lambda _{1}$ and $\lambda _{2}$, the penalized
estimator $\widehat{\bs{\beta}}$ has an ``oracle" property. To avoid confusion, let $\bs{\beta}_0$ and $\alpha_{0}$ be the true parameter value and function in model (\ref{model}). Let $q$
be the number of nonzero components of $\bs{\beta}_{0}$.
Let $\bs{\beta}_{0}=(\beta _{10},\cdots ,\beta _{p0})^{\top}=(\bs{\beta}_{10}^{\top},\bs{\beta}_{20}^{\top})^{\top}$, where $\bs{\beta}_{10}$ is assumed to consist of all $q$ nonzero components of $\bs{\beta}_{0}$,
and $\bs{\beta}_{20}=\mathbf{0}$ without loss of generality. Then $\bs{\widehat{\beta}}_{1}$ and $\bs{\widehat{\beta}}_{2}$ are the corresponding estimators. In a similar fashion to $\bs{\beta}$, {we write $\mathbf{Z}=(\mathbf{Z}_{1},\mathbf{Z}_{2})$, and $\widetilde{\mathbf{Z}}=(\widetilde{\mathbf{Z}}_{1},
	\widetilde{\mathbf{Z}}_{2})$, where
	\begin{equation}
	\widetilde{\mathbf{Z}}_{1}=\left\{h_{1}(\mathbf{X}_{i}), \ldots, h_{q}(\mathbf{X}_{i})\right\}_{i=1}^{n},~\widetilde{\mathbf{Z}}_{2}=\left\{h_{q+1}(\mathbf{X}_{i}), \ldots, h_{p}(\mathbf{X}_{i})\right\}_{i=1}^{n}
	\label{DEF:Z1tilde}
	\end{equation}
	with $h_{j}(\cdot)$ defined in (\ref{EQ:h_j}).} Next we denote $a_{n,\lambda_{2}}=\max_{1\leq j\leq p}\{|p_{\lambda _{2}}^{\prime}(|\beta_{j0}|)|,\beta _{j0}\neq 0\}$, $b_{n,\lambda_{2}}=\max_{1\leq j\leq p}\{|p_{\lambda _{2}}^{\prime \prime }(|\beta _{j0}|)|,\beta _{j0}\neq 0\}$.

\begin{theorem}
	\label{THM:ROOTn} Under Assumptions 1--8, and if $a_{n,\lambda_{2}}\to 0$ and $b_{n,\lambda_{2}}\to 0$ as $n\to\infty$, then there exists a local solution $\widehat{\bs{\beta}}$ in (\ref{DEF:penalized-min}) such that $\|\widehat{\bs{\beta}}-\bs{\beta}_{0}\|=O_{P}(n^{-1/2}+a_{n,\lambda_{2}})$.
\end{theorem}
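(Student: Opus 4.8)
The plan is to follow the now-standard ball-construction argument of Fan and Li. Write $\alpha_n = n^{-1/2} + a_{n,\lambda_2}$, let $\bs{\alpha}_0 = (\alpha_0(\mathbf{X}_1),\ldots,\alpha_0(\mathbf{X}_n))^{\top}$ and $\bs{\epsilon}=(\epsilon_1,\ldots,\epsilon_n)^{\top}$, and set $\mathbf{M} = \mathbf{I} - \mathbf{H}_{\mathbf{B}}(\lambda_1)$ for the symmetric smoother matrix in (\ref{DEF:HB}). It suffices to show that for every $\varepsilon>0$ there is a constant $C$ with
$$P\Big\{\inf_{\|\mathbf{u}\|=C} R(\bs{\beta}_0+\alpha_n\mathbf{u};\lambda_1,\lambda_2) > R(\bs{\beta}_0;\lambda_1,\lambda_2)\Big\}\ge 1-\varepsilon,$$
since this forces a local minimizer of $R$ inside the ball $\{\bs{\beta}:\|\bs{\beta}-\bs{\beta}_0\|\le C\alpha_n\}$. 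I would therefore analyze the increment $D_n(\mathbf{u}) = R(\bs{\beta}_0+\alpha_n\mathbf{u}) - R(\bs{\beta}_0)$ and prove it is positive on the sphere $\|\mathbf{u}\|=C$ once $C$ is large.

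Because $L$ in (\ref{DEF:Lbeta}) is quadratic in $\bs{\beta}$, its expansion about $\bs{\beta}_0$ is exact; using $\mathbf{Y}-\mathbf{Z}\bs{\beta}_0 = \bs{\alpha}_0+\bs{\epsilon}$,
$$L(\bs{\beta}_0+\alpha_n\mathbf{u}) - L(\bs{\beta}_0) = -\alpha_n\mathbf{u}^{\top}\mathbf{Z}^{\top}\mathbf{M}(\bs{\alpha}_0+\bs{\epsilon}) + \tfrac{1}{2}\alpha_n^2\mathbf{u}^{\top}(\mathbf{Z}^{\top}\mathbf{M}\mathbf{Z})\mathbf{u}.$$
The quadratic term is the engine of the argument. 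The key structural fact I would establish first, as a preliminary lemma invoking the bivariate-spline approximation bounds of \cite{Lai:Schumaker:07} together with Assumptions 4--8, is that $\mathbf{M}$ acts asymptotically as the projection orthogonal to smooth functions of $\mathbf{X}$: $\mathbf{M}\mathbf{Z}_{\cdot j}\approx \mathbf{Z}_{\cdot j}-\widetilde{\mathbf{Z}}_{\cdot j}$, so that $n^{-1}\mathbf{Z}^{\top}\mathbf{M}\mathbf{Z}\to\bs{\Sigma}:=E\{(\mathbf{Z}_i-\widetilde{\mathbf{Z}}_i)(\mathbf{Z}_i-\widetilde{\mathbf{Z}}_i)^{\top}\}$, whose eigenvalues are bounded away from $0$ by Assumption 2. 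Hence the quadratic term is bounded below by $c\,n\alpha_n^2\|\mathbf{u}\|^2$ with probability tending to one.

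For the linear term I would split $\bs{\alpha}_0$ from $\bs{\epsilon}$. The stochastic part $\alpha_n\mathbf{u}^{\top}\mathbf{Z}^{\top}\mathbf{M}\bs{\epsilon}$ has conditional mean zero and variance $\sigma^2\alpha_n^2\mathbf{u}^{\top}\mathbf{Z}^{\top}\mathbf{M}^2\mathbf{Z}\mathbf{u}=O(n\alpha_n^2\|\mathbf{u}\|^2)$, using near-idempotency and boundedness of $\mathbf{M}$ together with $\mathbf{Z}^{\top}\mathbf{Z}=O(n)$; thus it is $O_P(n^{1/2}\alpha_n\|\mathbf{u}\|)$, which the quadratic lower bound dominates for large $C$ since $n^{1/2}\alpha_n\ge1$. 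The bias part $\alpha_n\mathbf{u}^{\top}\mathbf{Z}^{\top}\mathbf{M}\bs{\alpha}_0$ I would control by writing $\mathbf{M}\bs{\alpha}_0 = \mathbf{M}(\bs{\alpha}_0-\mathbf{s}^*)+\mathbf{M}\mathbf{s}^*$, where $\mathbf{s}^*$ collects the evaluations of the best spline approximant $s^*$ to $\alpha_0$; the first piece is $O(n^{1/2}|\triangle|^{\ell+1})$ by approximation theory and Assumption 4, while the second is negligible because $\lambda_1$ is small (Assumption 8). With $K=Cn^{\gamma}$ and $\gamma\ge 1/(\ell+1)$ (Assumption 7) this yields $\|\mathbf{Z}^{\top}\mathbf{M}\bs{\alpha}_0\|=O(n^{1/2})$, so the bias linear term is again $O(n^{1/2}\alpha_n\|\mathbf{u}\|)$ and dominated.

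Finally, the SCAD increment $n\sum_{j}\{p_{\lambda_2}(|\beta_{j0}+\alpha_nu_j|)-p_{\lambda_2}(|\beta_{j0}|)\}$ splits by support: for the $q$ nonzero coordinates a second-order Taylor expansion gives a contribution of order $n\alpha_n a_{n,\lambda_2}\|\mathbf{u}\| + n\alpha_n^2 b_{n,\lambda_2}\|\mathbf{u}\|^2 = O(n\alpha_n^2\|\mathbf{u}\|) + o(n\alpha_n^2\|\mathbf{u}\|^2)$, since $a_{n,\lambda_2}\le\alpha_n$ and $b_{n,\lambda_2}\to0$; for the zero coordinates $p_{\lambda_2}(|\alpha_nu_j|)\ge 0 = p_{\lambda_2}(0)$, so these terms only help and may be discarded. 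Collecting the pieces, $D_n(\mathbf{u})\ge c\,n\alpha_n^2\|\mathbf{u}\|^2 - O_P(n\alpha_n^2\|\mathbf{u}\|) - o(n\alpha_n^2\|\mathbf{u}\|^2)$ on $\|\mathbf{u}\|=C$, which is positive for $C$ large enough, yielding the desired local minimizer. I expect the main obstacle to be the preliminary lemma on $\mathbf{M}$: quantifying that the penalized spline hat matrix reproduces smooth functions of $\mathbf{X}$ up to $O(|\triangle|^{\ell+1})$ bias while the roughness penalty contributes negligibly under Assumption 8, since both the positive definiteness of the limiting information matrix $\bs{\Sigma}$ and all of the bias bounds rest on it.
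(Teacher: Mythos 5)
Your proposal is correct and follows essentially the same route as the paper's Appendix B: the Fan--Li ball construction on the sphere $\|\mathbf{u}\|=C$, the exact quadratic expansion of $L$ with gradient $-\mathbf{Z}^{\top}\{\mathbf{I}-\mathbf{H}_{\mathbf{B}}(\lambda_1)\}(\bs{\alpha}_0+\bs{\epsilon})$ split into a bias part and a noise part, a conditional-variance bound giving $O_P(n^{1/2})$ for the noise part, spline approximation theory plus Assumptions 7--8 for the bias part, and the Taylor-expansion/nonnegativity treatment of the SCAD increment restricted to the $q$ nonzero coordinates. The differences are only in execution: you bound the bias term via operator norms around the best spline approximant (getting $O(n^{1/2})$, which suffices), whereas the paper uses the finer three-term decomposition $W_{j,1}+W_{j,2}+W_{j,3}$ through the projections $h_j$, their spline approximants $\widetilde h_j$, and the energy inner product (getting $o_P(n^{1/2})$); and you state explicitly the positive-definite lower bound $c\,n\alpha_n^2\|\mathbf{u}\|^2$ on the quadratic term (resting on the same Hessian-convergence lemma the paper cites from Wang et al.), which the paper leaves implicit when it asserts that the terms in (B.5) dominate $U_{n,2}$.
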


Next we define $\mathbf{\kappa }_{n,\lambda_{2}}=\{p_{\lambda _{2}}^{\prime }(|\beta _{10}|)\mathrm{sgn}(\beta _{10}),\cdots ,p_{\lambda _{2}}^{\prime }(|\beta _{q0}|)\mathrm{sgn}
(\beta _{q0})\}^{\top}$ and a diagonal matrix $\mathbf{\Sigma}_{\lambda_{2}}=\mathrm{diag}
\{p_{\lambda _{2}}^{\prime \prime }(|\beta _{10}|),\cdots ,p_{\lambda_{2}}^{\prime \prime }(|\beta _{q0}|)\}$. The theorem below shows that under regularity conditions, all the covariates with zero coefficients can be detected simultaneously with probability tending to one, and the estimators of all the nonzero coefficients are asymptotically normally distributed. 
\begin{theorem}
	\label{THM:ORACLE} Under Assumptions 1--8, if
	$\lim_{n\rightarrow \infty}\sqrt{n}\lambda _{2}\rightarrow \infty$,  and\\ $
	\liminf_{n\rightarrow \infty }\liminf_{\beta _{k}\rightarrow 0^{+}}\lambda
	_{2}^{-1}p_{\lambda _{2}}^{\prime }(|\beta _{k}|)>0$, then the $\sqrt{n}$-consistent estimator $\widehat{\bs{\beta}}$
	in Theorem \ref{THM:ROOTn} satisfies $P(\widehat{\bs{\beta}}_{2}=\mathbf{0})\rightarrow 1$, as $n\rightarrow \infty $, and
	\begin{equation*}
	\sqrt{n}(\bs{\Sigma}_{s} +\bs{\Sigma} _{\lambda_{2} }) \left\{ \widehat{\bs{\beta}}_{1}-\bs{\beta}_{10}+(\bs{\Sigma}_{s} +\bs{\Sigma} _{\lambda_{2} }) ^{-1}\bs{\kappa}_{n,\lambda_{2}}\right\} \rightarrow \mathrm{N}(\mathbf{0},\sigma^{2}\bs{\Sigma}_{s}),
	\end{equation*}
	where
	\begin{equation}
	\bs{\Sigma}_{s}=\sigma^{-2}E[(\mathbf{Z}_{1}-\widetilde{\mathbf{Z}}_{1})
	(\mathbf{Z}_{1}-\widetilde{\mathbf{Z}}_{1})^{\top}]
	\label{DEF:Sigma}
	\end{equation}
	with $\widetilde{\mathbf{Z}}_{1}$ given in (\ref{DEF:Z1tilde}).
	
\end{theorem}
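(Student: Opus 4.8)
The plan is to follow the two-stage template for the oracle property, applied to the profiled doubly penalized objective $R(\bs{\beta};\lambda_1,\lambda_2)=L(\bs{\beta};\lambda_1)+n\sum_{j=1}^{p}p_{\lambda_2}(|\beta_j|)$. I will work throughout on the event that $\widehat{\bs{\beta}}$ is the $\sqrt{n}$-consistent local minimizer supplied by Theorem \ref{THM:ROOTn}, and write $\bs{\epsilon}=(\epsilon_1,\ldots,\epsilon_n)^{\top}$ and $\bs{\alpha}_0=\{\alpha_0(\mathbf{X}_i)\}_{i=1}^{n}$, so that under model (\ref{model}) one has $\mathbf{Y}=\mathbf{Z}_1\bs{\beta}_{10}+\bs{\alpha}_0+\bs{\epsilon}$. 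Since $L$ is a quadratic form in the smoother residual, $\partial L/\partial\bs{\beta}=-\mathbf{Z}^{\top}\{\mathbf{I}-\mathbf{H}_{\mathbf{B}}(\lambda_1)\}(\mathbf{Y}-\mathbf{Z}\bs{\beta})$, which is the object I differentiate componentwise below.

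\textbf{Sparsity.} To prove $P(\widehat{\bs{\beta}}_2=\mathbf{0})\to 1$, I would show that for every $j>q$ and every $\bs{\beta}$ with $\|\bs{\beta}-\bs{\beta}_0\|=O_P(n^{-1/2})$, the partial derivative $\partial R/\partial\beta_j$ has the same sign as $\beta_j$ whenever $\beta_j\ne 0$; this forces the local minimizer to take $\widehat{\beta}_j=0$. The quantity $n^{-1}\partial R/\partial\beta_j$ splits into the smoother piece, namely the $j$th component of $-n^{-1}\mathbf{Z}^{\top}\{\mathbf{I}-\mathbf{H}_{\mathbf{B}}\}(\mathbf{Y}-\mathbf{Z}\bs{\beta})$, and the penalty piece $p_{\lambda_2}'(|\beta_j|)\mathrm{sgn}(\beta_j)$. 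Using $\sqrt{n}$-consistency together with Assumptions 1--8 and the approximation bounds for the spline smoother, the first piece is $O_P(n^{-1/2})$, whereas the imposed conditions $\sqrt{n}\lambda_2\to\infty$ and $\liminf_n\liminf_{\beta_k\to 0^+}\lambda_2^{-1}p_{\lambda_2}'(|\beta_k|)>0$ make the penalty piece of exact order $\lambda_2\gg n^{-1/2}$, so the penalty dominates and fixes the sign. This is the standard sign argument of \cite{Fan:Li:01} and I expect it to be routine in the present setting.

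\textbf{Limiting distribution.} On the event $\{\widehat{\bs{\beta}}_2=\mathbf{0}\}$ I restrict to the nonzero block and set $\partial R/\partial\bs{\beta}_1=\mathbf{0}$. A Taylor expansion of the penalty gradient about $\bs{\beta}_{10}$ gives $n\{p_{\lambda_2}'(|\widehat{\beta}_j|)\mathrm{sgn}(\widehat{\beta}_j)\}_{j\le q}=n\bs{\kappa}_{n,\lambda_2}+n\bs{\Sigma}_{\lambda_2}(\widehat{\bs{\beta}}_1-\bs{\beta}_{10})+o_P(n^{1/2})$, using $\|\widehat{\bs{\beta}}_1-\bs{\beta}_{10}\|=O_P(n^{-1/2})$ and $b_{n,\lambda_2}\to 0$. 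Substituting $\mathbf{Y}=\mathbf{Z}_1\bs{\beta}_{10}+\bs{\alpha}_0+\bs{\epsilon}$ into the stationarity equation and rearranging yields
\begin{equation*}
\Big[\tfrac{1}{n}\mathbf{Z}_1^{\top}\{\mathbf{I}-\mathbf{H}_{\mathbf{B}}\}\mathbf{Z}_1+\bs{\Sigma}_{\lambda_2}\Big](\widehat{\bs{\beta}}_1-\bs{\beta}_{10})+\bs{\kappa}_{n,\lambda_2}=\tfrac{1}{n}\mathbf{Z}_1^{\top}\{\mathbf{I}-\mathbf{H}_{\mathbf{B}}\}\bs{\alpha}_0+\tfrac{1}{n}\mathbf{Z}_1^{\top}\{\mathbf{I}-\mathbf{H}_{\mathbf{B}}\}\bs{\epsilon}+o_P(n^{-1/2}).
\end{equation*}
It then remains to identify the terms. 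First, the smoother $\mathbf{H}_{\mathbf{B}}(\lambda_1)$ approximately removes the part of $\mathbf{Z}_1$ explained by the locations, replacing $\{\mathbf{I}-\mathbf{H}_{\mathbf{B}}\}\mathbf{Z}_1$ by the projection residual $\mathbf{Z}_1-\widetilde{\mathbf{Z}}_1$ up to a term negligible on the relevant scale; consequently $n^{-1}\mathbf{Z}_1^{\top}\{\mathbf{I}-\mathbf{H}_{\mathbf{B}}\}\mathbf{Z}_1\to\bs{\Sigma}_s$ in probability by the law of large numbers applied to $(\mathbf{Z}_1-\widetilde{\mathbf{Z}}_1)(\mathbf{Z}_1-\widetilde{\mathbf{Z}}_1)^{\top}$. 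Second, the same replacement gives $n^{-1/2}\mathbf{Z}_1^{\top}\{\mathbf{I}-\mathbf{H}_{\mathbf{B}}\}\bs{\epsilon}=n^{-1/2}(\mathbf{Z}_1-\widetilde{\mathbf{Z}}_1)^{\top}\bs{\epsilon}+o_P(1)$, a sum of conditionally independent, mean-zero terms to which the Lindeberg-type Assumption 3 applies, producing limiting covariance $\sigma^2\bs{\Sigma}_s$. Multiplying the displayed identity by $\sqrt{n}$ and applying Slutsky's theorem then gives exactly the stated normal limit.

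\textbf{Main obstacle.} The crux is controlling the nonparametric bias term $n^{-1/2}\mathbf{Z}_1^{\top}\{\mathbf{I}-\mathbf{H}_{\mathbf{B}}\}\bs{\alpha}_0$ and showing it is $o_P(1)$, together with the replacement $\{\mathbf{I}-\mathbf{H}_{\mathbf{B}}\}\mathbf{Z}_1\approx\mathbf{Z}_1-\widetilde{\mathbf{Z}}_1$ used above. Both rely on the approximation power of the bivariate spline space $\mathbb{S}_{3r+2}^{r}(\triangle)$ over a quasi-uniform triangulation: for $\alpha_0,h_j\in W^{\ell+1,\infty}(\Omega)$ the spline approximation error is of order $|\triangle|^{\ell+1}$, while the roughness-penalty perturbation is governed by $\lambda_1$. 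Combining Assumption 7 ($K=Cn^{\gamma}$, $1/(\ell+1)\le\gamma\le 1/3$) with the undersmoothing Assumption 8 ($\lambda_1=o(n^{1/2}K^{-1})$) is precisely what forces this bias to vanish faster than $n^{-1/2}$, so that it does not contaminate the limiting distribution; this is the delicate, domain-specific ingredient, in contrast to the largely routine sign and CLT arguments. I would package the required smoother estimates as auxiliary lemmas (bias and variance of $\mathbf{H}_{\mathbf{B}}$ acting on smooth functions and on $\mathbf{Z}_1$), invoking the bivariate spline approximation theory of \cite{Lai:Schumaker:07} and \cite{Lai:Wang:13}, after which the remaining assembly is algebraic.
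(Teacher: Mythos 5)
Your proposal follows essentially the same route as the paper: the sparsity claim is proved by the same Fan--Li sign-of-the-derivative argument (the paper's Lemma C.1), and the normality claim by setting the restricted gradient of $R$ to zero, Taylor-expanding the penalty, and identifying $n^{-1}\mathbf{Z}_1^{\top}\{\mathbf{I}-\mathbf{H}_{\mathbf{B}}(\lambda_1)\}\mathbf{Z}_1$ and $n^{-1/2}\mathbf{Z}_1^{\top}\{\mathbf{I}-\mathbf{H}_{\mathbf{B}}(\lambda_1)\}\bs{\epsilon}$ with their projection-residual counterparts before applying the CLT and Slutsky's theorem, exactly as in the paper's Appendix C. The two technical estimates you defer to auxiliary lemmas --- the bias bound $\mathbf{Z}_1^{\top}\{\mathbf{I}-\mathbf{H}_{\mathbf{B}}(\lambda_1)\}\bs{\alpha}_0=o_P(n^{1/2})$ and the replacement of $\{\mathbf{I}-\mathbf{H}_{\mathbf{B}}(\lambda_1)\}\mathbf{Z}_1$ by $\mathbf{Z}_1-\widetilde{\mathbf{Z}}_1$ --- are precisely what the paper establishes via the $W_{j,1},W_{j,2},W_{j,3}$ decomposition in Appendix B and the inner-product decomposition in Appendix C, using the same spline approximation theory and Assumptions 7--8 that you invoke.
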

The next result provides the global convergence of the nonparametric estimator $\widehat{\alpha}(\cdot)$.
\begin{corollary}
	\label{THM:g-convergence}
	Suppose Assumptions 1--8 hold, then the bivariate penalized estimator $\widehat{\alpha}(\cdot)$, given in (\ref{DEF:g_hat}), is consistent with the true function, $\alpha_{0}$, and satisfies that
	\[
	\Vert \widehat{\alpha}-\alpha_{0}\Vert _{L^2}=O_{P}\left\{
	\frac{\lambda_{1} }{n\left| \triangle \right| ^{3}}|\alpha_{0}|_{2,\infty}
	+\left(1+\frac{\lambda_{1} }{n\left| \triangle \right| ^{5}}\right)|\triangle |^{\ell +1}|\alpha_{0}|_{\ell,\infty}+\frac{1}{\sqrt{n}|\triangle|}
	\right\}.
	\]
\end{corollary}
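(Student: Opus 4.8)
The plan is to view $\widehat{\alpha}$ as the penalized spline smoother applied to the partial residuals $\mathbf{Y}-\mathbf{Z}\widehat{\bs{\beta}}$ and to decompose its error into a parametric plug-in part, a deterministic bias part, and a stochastic part. Writing $\bs{\alpha}_0=\{\alpha_0(\mathbf{X}_i)\}_{i=1}^n$ and $\bs{\epsilon}=(\epsilon_1,\ldots,\epsilon_n)^\top$, model (\ref{model}) gives $\mathbf{Y}-\mathbf{Z}\widehat{\bs{\beta}}=\mathbf{Z}(\bs{\beta}_0-\widehat{\bs{\beta}})+\bs{\alpha}_0+\bs{\epsilon}$, so from (\ref{DEF:g_hat}), with $\mathbf{M}=\mathbf{Q}_2^\top(\mathbf{B}^\top\mathbf{B}+\lambda_1\mathbf{P})\mathbf{Q}_2$,
\[
\widehat{\alpha}(\mathbf{x})=\mathbf{B}(\mathbf{x})^\top\mathbf{Q}_2\mathbf{M}^{-1}\mathbf{Q}_2^\top\mathbf{B}^\top\{\mathbf{Z}(\bs{\beta}_0-\widehat{\bs{\beta}})+\bs{\alpha}_0+\bs{\epsilon}\}=:T_1(\mathbf{x})+T_2(\mathbf{x})+T_3(\mathbf{x}),
\]
and I would study $\widehat{\alpha}-\alpha_0=(T_2-\alpha_0)+T_3+T_1$ term by term.

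The two workhorses are bivariate spline approximation theory and stability estimates for splines over quasi-uniform triangulations. First, under Assumption 4 the results of \cite{Lai:Schumaker:07} (see also \cite{Lai:Wang:13}) supply a spline $\alpha^*\in\mathbb{S}$, with Bernstein coefficient $\bs{\gamma}^*=\mathbf{Q}_2\bs{\theta}^*$, satisfying $|\alpha_0-\alpha^*|_{k,\infty}\lesssim|\triangle|^{\ell+1-k}|\alpha_0|_{\ell+1,\infty}$; I then write $\bs{\alpha}_0=\mathbf{B}\bs{\gamma}^*+\mathbf{r}$ for the sampled approximation residual $\mathbf{r}$. Second, under Assumptions 5--6 the Bernstein basis obeys the mass-matrix equivalence $\|s\|_{L^2}^2\asymp|\triangle|^2\|\bs{\gamma}\|^2$ and the inverse (Markov) inequality $\mathcal{E}(s)\lesssim|\triangle|^{-4}\|s\|_{L^2}^2$; combined with the empirical-to-population transition for $n^{-1}\mathbf{B}^\top\mathbf{B}$ (valid on an event of probability tending to one) these give $\lambda_{\min}(\mathbf{Q}_2^\top\mathbf{B}^\top\mathbf{B}\mathbf{Q}_2)\asymp n|\triangle|^2$, $\|\mathbf{P}\|_{\mathrm{op}}\lesssim|\triangle|^{-2}$, hence $\|\mathbf{M}^{-1}\|_{\mathrm{op}}\lesssim(n|\triangle|^2)^{-1}$ together with $\|\mathbf{B}\|_{\mathrm{op}}\asymp\sqrt{n}|\triangle|$.

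With these in hand I would bound the three pieces. For the stochastic term, condition on $\mathbf{X}$ and use the coefficient-norm equivalence to get $E(\|T_3\|_{L^2}^2\mid\mathbf{X})\asymp|\triangle|^2\sigma^2\,\mathrm{tr}(\mathbf{M}^{-2}\mathbf{Q}_2^\top\mathbf{B}^\top\mathbf{B}\mathbf{Q}_2)\lesssim|\triangle|^2\sigma^2\,\mathrm{tr}(\mathbf{M}^{-1})\lesssim|\triangle|^2\sigma^2\,\dim\mathbb{S}/(n|\triangle|^2)$; since $\dim\mathbb{S}\asymp K\asymp|\triangle|^{-2}$ this yields $\|T_3\|_{L^2}=O_P(n^{-1/2}|\triangle|^{-1})$, the last term of the rate. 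For the bias term, the identity $\mathbf{M}^{-1}\mathbf{Q}_2^\top\mathbf{B}^\top\mathbf{B}\mathbf{Q}_2=\mathbf{I}-\lambda_1\mathbf{M}^{-1}\mathbf{Q}_2^\top\mathbf{P}\mathbf{Q}_2$ gives $T_2=\alpha^*-\lambda_1\mathbf{B}(\cdot)^\top\mathbf{Q}_2\mathbf{M}^{-1}\mathbf{Q}_2^\top\mathbf{P}\bs{\gamma}^*+\mathbf{B}(\cdot)^\top\mathbf{Q}_2\mathbf{M}^{-1}\mathbf{Q}_2^\top\mathbf{B}^\top\mathbf{r}$. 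The pure approximation error $\alpha^*-\alpha_0$ and the residual term (bounded through $\|\mathbf{B}^\top\mathbf{r}\|\le\|\mathbf{B}\|_{\mathrm{op}}\|\mathbf{r}\|$) together contribute the $|\triangle|^{\ell+1}|\alpha_0|_{\ell,\infty}$ part; the roughness-penalty bias $\lambda_1\mathbf{M}^{-1}\mathbf{Q}_2^\top\mathbf{P}\bs{\gamma}^*$, controlled through $\sqrt{\mathcal{E}(\alpha^*)}\le\sqrt{\mathcal{E}(\alpha_0)}+\|\partial^2(\alpha^*-\alpha_0)\|_{L^2}\lesssim|\alpha_0|_{2,\infty}+|\triangle|^{\ell-1}|\alpha_0|_{\ell+1,\infty}$, contributes exactly the two $\lambda_1$-terms $\lambda_1(n|\triangle|^3)^{-1}|\alpha_0|_{2,\infty}$ and $\lambda_1(n|\triangle|^5)^{-1}|\triangle|^{\ell+1}|\alpha_0|_{\ell,\infty}$. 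Finally, for the plug-in term, Theorem \ref{THM:ROOTn} (under the stated tuning, which forces $a_{n,\lambda_2}=O(n^{-1/2})$) gives $\|\widehat{\bs{\beta}}-\bs{\beta}_0\|=O_P(n^{-1/2})$, and $\|T_1\|_{L^2}\lesssim|\triangle|\,\|\mathbf{M}^{-1}\|_{\mathrm{op}}\|\mathbf{B}^\top\mathbf{Z}\|_{\mathrm{op}}\|\widehat{\bs{\beta}}-\bs{\beta}_0\|=O_P(n^{-1/2})$, which is dominated by the variance term. Collecting the bounds delivers the stated rate.

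The main obstacle I anticipate is the sharp bookkeeping of the roughness-penalty bias $\lambda_1\mathbf{M}^{-1}\mathbf{Q}_2^\top\mathbf{P}\bs{\gamma}^*$, since there the ill-conditioning of $\mathbf{M}$, the operator norm of the energy matrix $\mathbf{P}$, and the spline approximation rate must be tracked simultaneously to extract the correct powers of $|\triangle|$; in particular, the appearance of the seminorm $|\alpha_0|_{2,\infty}$ (rather than a bare coefficient norm) relies on estimating $\mathbf{P}\bs{\gamma}^*$ through the quadratic energy $\mathcal{E}(\alpha^*)=\bs{\gamma}^{*\top}\mathbf{P}\bs{\gamma}^*$ rather than through $\|\bs{\gamma}^*\|$, and on splitting $\sqrt{\mathcal{E}(\alpha^*)}$ into a true-energy piece and a second-derivative approximation piece. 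A secondary technical point is verifying that the eigenvalue and stability bounds for $n^{-1}\mathbf{B}^\top\mathbf{B}$ hold uniformly with probability tending to one under Assumption 5, so that $\mathbf{M}^{-1}$ can be controlled on the relevant high-probability event.
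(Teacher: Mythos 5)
Your argument is correct in substance, but it is not the paper's argument: the paper offers no self-contained proof at all, stating only that the corollary ``is a direct result from \cite{Wang:Wang:Lai:Gao:18}'' and omitting the details. In effect, the paper's route is to observe that once Theorem \ref{THM:ROOTn} gives $\|\widehat{\bs{\beta}}-\bs{\beta}_{0}\|=O_{P}(n^{-1/2})$, the estimator $\widehat{\alpha}$ in (\ref{DEF:g_hat}) is the penalized bivariate spline smoother of that earlier work applied to responses whose parametric part is estimated at the parametric rate, so the nonparametric convergence rate proved there carries over verbatim. Your decomposition $\widehat{\alpha}-\alpha_{0}=(T_{2}-\alpha_{0})+T_{3}+T_{1}$ reconstructs from scratch exactly what that citation hides: the bias analysis through the approximant $\alpha^{*}$, the energy bound $\sqrt{\mathcal{E}(\alpha^{*})}\lesssim|\alpha_{0}|_{2,\infty}+|\triangle|^{\ell-1}|\alpha_{0}|_{\ell+1,\infty}$, and the identity $\mathbf{M}^{-1}\mathbf{Q}_{2}^{\top}\mathbf{B}^{\top}\mathbf{B}\mathbf{Q}_{2}=\mathbf{I}-\lambda_{1}\mathbf{M}^{-1}\mathbf{Q}_{2}^{\top}\mathbf{P}\mathbf{Q}_{2}$; the variance computation via $\mathrm{tr}(\mathbf{M}^{-1})\lesssim K/(n|\triangle|^{2})$ with $K\asymp|\triangle|^{-2}$; and the plug-in term $T_{1}=O_{P}(n^{-1/2})$, absorbed by the last term of the rate. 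What your route buys is transparency and a sharper result: your bookkeeping actually yields penalty-bias terms of order $\lambda_{1}(n|\triangle|^{2})^{-1}$ times the relevant seminorms, smaller than the stated $\lambda_{1}(n|\triangle|^{3})^{-1}$ and $\lambda_{1}(n|\triangle|^{5})^{-1}$ factors, which is consistent since the corollary asserts only an upper bound. What the paper's route buys is brevity, with the delicate spline stability estimates certified once in the cited work rather than re-derived. Two caveats you should make explicit in a full write-up: first, the root-$n$ rate for $\widehat{\bs{\beta}}$ requires the $\lambda_{2}$-conditions of Theorem \ref{THM:ROOTn} (for SCAD, $a_{n,\lambda_{2}}=0$ for large $n$ once $\lambda_{2}\to0$), which Assumptions 1--8 alone do not supply --- you flag this correctly, and the same implicit assumption underlies the paper's citation; second, your bound $\|\mathbf{B}\|_{\mathrm{op}}\lesssim\sqrt{n}|\triangle|$ is applied to vectors such as $\mathbf{B}^{\top}\mathbf{r}$ that are not coefficient vectors of smooth splines, so it cannot be obtained from Lemma \ref{LEM:Rnorder} (which is stated for $f_{1},f_{2}\in\mathbb{S}$) and needs a per-triangle point-count concentration argument under Assumptions 5--7, valid since $n|\triangle|^{2}\gtrsim n^{2/3}$ under Assumption 7.
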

This is a direct result from \cite{Wang:Wang:Lai:Gao:18}, thus the proof is omitted.

%%%%%%%%%%%%%%%%%%%%%%%%%%%%%%%%%%%%%%%%%%%%%%%%%%%%%%%%%%%%%%
%%%%%%%%%%%%%%%%%%%%%%%%%%%%%%%%%%%%%%%%%%%%%%%%%%%%%%%%%%%%%%
%%%%%%%%%%%%%%%%%%%%%%%%%%%%%%%%%%%%%%%%%%%%%%%%%%%%%%%%%%%%%%
\setcounter{chapter}{4}\setcounter{equation}{0} 
\renewcommand{\thetable}{4.\arabic{table}} \setcounter{table}{0} 
\renewcommand{\thefigure}{4.\arabic{figure}} \setcounter{figure}{0} 
\vskip .12in \noindent \textbf{4. Implementation} \label{SEC:implementation} \vskip 0.10in

Since the SCAD penalty function is singular at the origin, and it does not have continuous second order derivatives. To solve the minimization problem in (\ref{DEF:penalized-min}), one can locally approximate it by a quadratic function \citep{Fan:Li:01,Lian:12}, then the minimization problem of $R(\bs{\beta}; \lambda_{1},\lambda_{2})$ can be solved using quadratic minimization. However, employing the local quadratic approximation can be extremely expensive since it requires the repeated factorization of large matrices repeatedly for different smoothing parameters. In addition, quadratic minimization is not able to provide naturally sparse estimates. In the implementation of our method, we consider the use of the coordinate descent algorithm \citep{Breheny:Huang:15}, which fits the penalized regressions more stably and efficiently.

The classical coordinate descent algorithm deals with the optimization problem with one tuning parameter, and there are several ways to address the double-penalization. A natural idea is to solve the optimization problem by searching over a 2D grid for tuning parameters, which can be computationally expensive. We propose the following algorithm based on coordinate descent:
\begin{itemize}
	\item[Step 0.] Obtain $\widetilde{\bs{\pi}}$ by minimizing objective function w.r.t. $\bs{\pi}$:
	$
	\frac{1}{2}\|\mathbf{Y}-\mathbf{B}\mathbf{Q}_{2}\bs{\pi}\|^{2}
	+\lambda_{0}(\mathbf{Q}_{2}\bs{\pi})^{\top}\mathbf{P}(\mathbf{Q}_{2}\bs{\pi})
	$
	with $\lambda_0$ selected via GCV, and obtain $\widetilde{\mathbf{Y}}=\mathbf{B}\mathbf{Q}_{2}\widetilde{\bs{\pi}}$ and $\widetilde{\mathbf{Z}}=\mathbf{H}_\mathbf{B}(\lambda_0)\mathbf{Z}$;\
	
	\item[Step 1.] Obtain $\widehat{\bs{\beta}}$ by minimizing objective function w.r.t. $\bs{\beta}$:
	$
	\frac{1}{2}\|\mathbf{Y}-\widetilde{\mathbf{Y}}-(\mathbf{Z}-\widetilde{\mathbf{Z}})\bs{\beta}\|^{2}
	+n\sum_{j=1}^{p}p_{\lambda_{2}}(|\beta_{j}|)
	$
	with $\lambda_2$ selected via BIC;
	
	\item[Step 2.] Let $\mathbf{Z}^{*}$ be the selected covariates from Step 1. Based on data $\{(\mathbf{Z}_{i}^{*},\mathbf{X}_{i},Y_i)\}_{i=1}^{n}$ refit model (\ref{model})  to obtain $\widehat{\bs{\beta}}$ and $\widehat{\bs{\theta}}$ by minimizing the following objective function w.r.t. $\bs{\beta}$ and $\bs{\theta}$:
	$
	\|\mathbf{Y}-\mathbf{Z}^{*}\bs{\beta}-\mathbf{B}\mathbf{Q}_{2}\bs{\theta}\|^{2}
	+\lambda_{1}(\mathbf{Q}_{2}\bs{\theta})^{\top}\mathbf{P}(\mathbf{Q}_{2}\bs{\theta}).
	$
\end{itemize}

%%%%%%%%%%%%%%%%%%%%%%%%%%%%%%%%%%%%%%%%%%%%%%%%%%%%%%%%%%%%%%
\vskip .12in \noindent \textbf{4.1. Standard error formula} \vskip .10in

The standard errors for the estimated parameters can be obtained directly because we are estimating parameters and selecting variables at the same time. Note that for any $\lambda_{1}$ and $\lambda_{2}$ the fitted values at the $n$ data points are
$
\widehat{\mathbf{Y}}=\mathbf{Z}\widehat{\bs{\beta}}+\mathbf{B} \mathbf{Q}_{2} \bs{\theta}(\widehat{\bs{\beta}})
=\mathbf{S}(\lambda_{1},\lambda_{2})\mathbf{Y}
$, 
where $\bs{\theta}(\bs{\beta})$ is given in (\ref{EQ:theta(beta)}).  Therefore, the smoothing or hat matrix can be written as
\begin{align}
\mathbf{S}&(\lambda_{1},\lambda_{2})=\left(\begin{array}{cc}
\mathbf{Z}-\widehat{\mathbf{Z}} \!&\!\mathbf{B}\mathbf{Q}_{2}
\end{array}\right) \notag \\%\label{EQ:Hat-mtx}
& \times \left( \begin{array}{cc}
\{(\mathbf{Z}-\widehat{\mathbf{Z}})^{\top}(\mathbf{Z}-\widehat{\mathbf{Z}}) + n\bs{\Sigma}_{\lambda_{2}}(\widehat{\bs{\beta}})\}^{-1}&  \mathbf{0}\\
\mathbf{0}  & \{\mathbf{Q}_{2}^{\top}(\mathbf{B}^{\top}\mathbf{B}+\lambda_{1}\mathbf{P})\mathbf{Q}_{2}\}^{-1}
\end{array} \right) \notag
\left(\begin{array}{c}
\mathbf{Z}^{\top}-\widehat{\mathbf{Z}}^{\top}\\
\mathbf{Q}_{2}^{\top}\mathbf{B}^{\top}
\end{array}\right), \notag
\end{align}
where $\widehat{\mathbf{Z}}=\mathbf{H}_{\mathbf{B}}(\lambda_{1})\mathbf{Z}$ and $\bs{\Sigma}_{\lambda_{2}}(\bs{\beta})\approx \mathrm{diag}\left\{{p_{\lambda_{2}}'(|\beta_{1}|)}/{|\beta_{1}|},
\ldots,{p_{\lambda_{2}}'(|\beta_{p}|)}/{|\beta_{p}|}\right\}$.

Finally, we derive a sandwich formula for the standard error of $\widehat{ \bs{\beta}}$
\begin{align*}
\widehat{\mathrm{Cov}}(\widehat{\bs{\beta}})
=&\widehat{\sigma}^{2}\left\{(\mathbf{Z}-\widehat{\mathbf{Z}})^{\top}
(\mathbf{Z}-\widehat{\mathbf{Z}}) +n \bs{\Sigma} _{\lambda_{2}}( \widehat{\bs{
		\beta}}) \right\} ^{-1}(\mathbf{Z}-\widehat{\mathbf{Z}})^{\top} (\mathbf{Z}-\widehat{\mathbf{Z}}) \\
&\times \left\{ (\mathbf{Z}-\widehat{\mathbf{Z}})^{\top}(\mathbf{Z}-\widehat{\mathbf{Z}})
+n\bs{\Sigma} _{\lambda_{2}}( \widehat{\bs{\beta }}) \right\}^{-1},  \notag
\label{EQ:std}
\end{align*}
where $\widehat{\sigma}^{2}=\|\mathbf{Y}-\widehat{\mathbf{Y}}\|^{2}/
\{n-\mathrm{tr}(\mathbf{S}(\lambda_{1},\lambda_{2}))\}$.
Applying conventional techniques that arise in the bivariate splines setting, we can show that the above sandwich formula is a consistent estimator and has good accuracy in our simulation study for moderate sample sizes.

%%%%%%%%%%%%%%%%%%%%%%%%%%%%%%%%%%%%%%%%%%%%%%%%%%%%%%%%%%%%%%
%%%%%%%%%%%%%%%%%%%%%%%%%%%%%%%%%%%%%%%%%%%%%%%%%%%%%%%%%%%%%%
%%%%%%%%%%%%%%%%%%%%%%%%%%%%%%%%%%%%%%%%%%%%%%%%%%%%%%%%%%%%%%
\setcounter{chapter}{5} \renewcommand{\thetheorem}{5.\arabic{theorem}}
\renewcommand{\thelemma}{5.\arabic{lemma}}
\renewcommand{\theproposition}{5.\arabic{proposition}}
\renewcommand{\thetable}{5.\arabic{table}} \setcounter{table}{0} 
\renewcommand{\thefigure}{5.\arabic{figure}} \setcounter{figure}{0} 
\setcounter{equation}{0} \setcounter{lemma}{0} \setcounter{theorem}{0}
\setcounter{proposition}{0}\setcounter{corollary}{0}
\vskip .12in \noindent \textbf{5. Simulation} \vskip 0.10in
\label{sec:application} \label{SEC:simulation}

In this section, we conduct Monte Carlo simulation studies to evaluate the finite-sample performance of the proposed doubly-penalized method in terms of both model estimation and variable selection. We compare our method (PLSM) with the spatial weighted regression method (SWR) proposed by \cite{Nandy:Lim:Maiti:17} and linear model method (LM).

%%%%%%%%%%%%%%%%%%%%%%%%%%%%%%%%%%%%%%%%%%%%%%%%%%%%%%%%%%%%%%
\vskip .12in \noindent \textbf{5.1. Example 1} \vskip .10in
\label{SUBSEC:example1}

In this example, we consider a modified horseshoe shaped domain $\Omega$ with the surface test function used by \cite{Wood:Bravington:Hedley:08}. First, we generated 80$\times$180 grid points over the domain. Then, for 100 Monte Carlo experiments, we randomly sample $n$ grid points on $\Omega$ with $n=100$ or $200$. The response variable $Y_{i}$'s are generated from the following PLSM:
$Y_{i}=\mathbf{Z}_{i}^{\top}\bs{\beta}+\alpha(\mathbf{X}_{i})+\varepsilon_{i}$, $i=1,\ldots,n$,
where the true coefficients are $\bs{\beta}=(1,-1,0,0,0,0,0,0)^{\top}$ and $\varepsilon_{i},~i=1,\ldots,n$ are generated independently from $N(0,\sigma^2)$ with $\sigma=0.2$. Figure \ref{FIG:eg1-1} (a) and (b) show the surface plot and the contour map of the true function $\alpha(\cdot)$, respectively. Note that the design of the function $\alpha(\cdot)$ makes it hard to have a linear approximation or nonlinear additive approximation of $\alpha(\cdot)$ on a rectangular domain. As a result, many traditional parametric and nonparametric methods do not work well in this case.

\begin{figure}[htbp]
	\begin{center}
		\begin{tabular}{cc}
			\includegraphics[height=4.5cm]{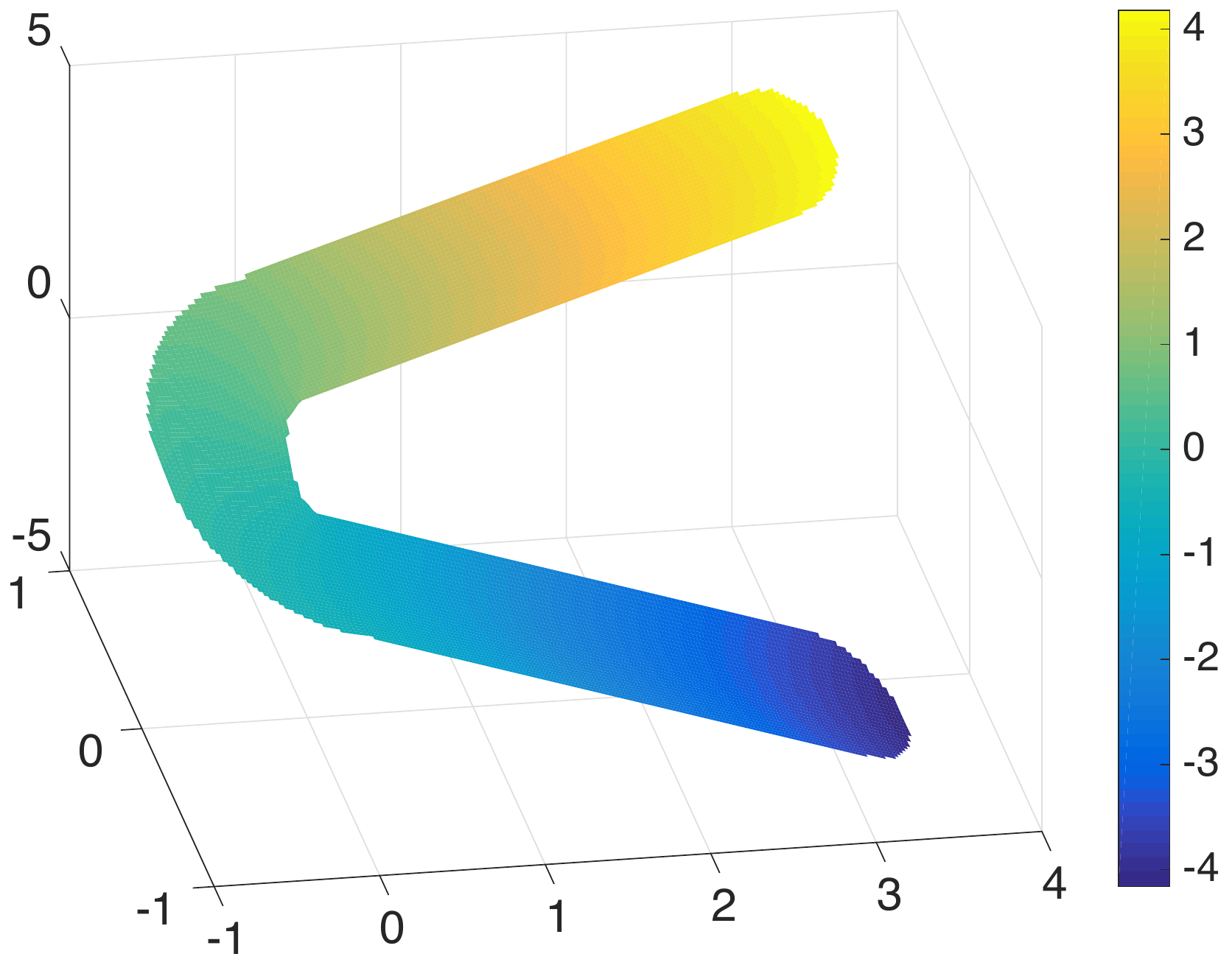} &\includegraphics[height=4.3cm]{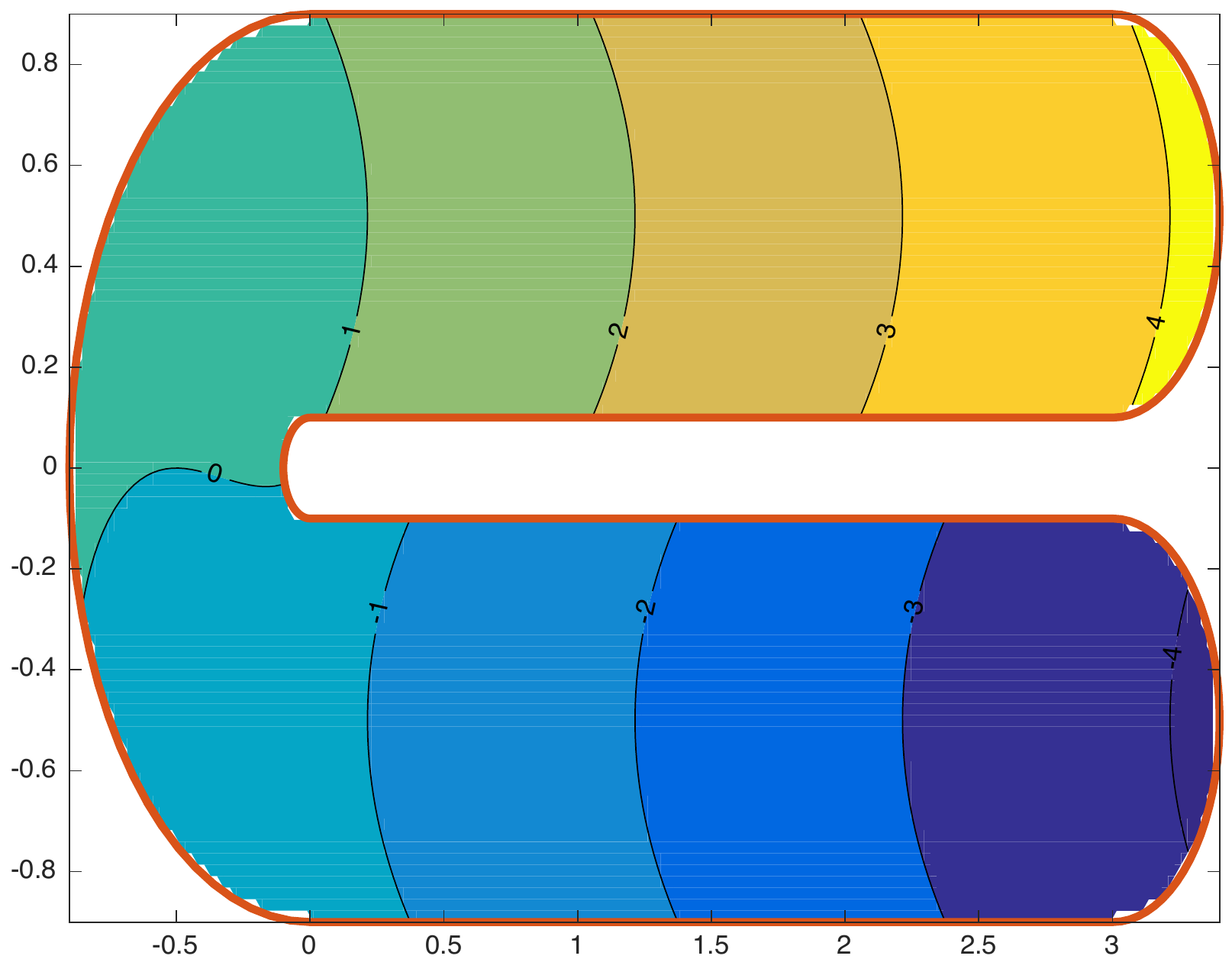}\\
			(a) &(b)\\
		\end{tabular}
	\end{center}
	\caption{Example 1. (a) true function of $\alpha(\cdot)$; (b) contour map of true function $\alpha(\cdot)$.}
	\label{FIG:eg1-1}
\end{figure}

In practice, some covariates may vary over space, that is, they may be correlated with spatial locations. To study the performance of variable selection at different correlation levels, similar as in \cite{Wang:Wang:Lai:Gao:18}, we generate the covariates as follows:
$Z_{i1}=-\frac{2}{3}\left\{\arctan{\pi\left(\rho\frac{X_{i1}}{X_{i2}}+(1-\rho)U_{i}\right)}\right\}$,
$Z_{i3}=\cos{\pi\left(\rho\frac{X_{i1}}{X_{i2}}+(1-\rho)U_{i}\right)}$, 
$Z_{ij} \sim\mathrm{Uniform}(-1,1)$, $j=2,4,\ldots,8$, $U_{i}\sim \mathrm{Uniform}(-1,1)$.
In particular, we consider the following three cases: (i) low correlation ($\rho=0.3$); (ii) medium correlation ($\rho=0.5$); and (iii) high correlation ($\rho=0.7$).

Figure \ref{FIG:eg1-2} (a) demonstrates the sampled location points of replicate 1. For the bivariate spline approximation, we consider three different triangulations on the horseshoe domain with (i) 90 triangles and 74 vertices; (ii) 158 triangles and 114 vertices; and (iii) 286 triangles and 186 vertices as illustrated in Figure \ref{FIG:eg1-1} (b)--(d), respectively.

Columns 4-6 in Table \ref{TAB:eg1-1} report the average number of two nonzero coefficients incorrectly set to zero (denoted as ``F''), the average number of six zero coefficients correctly set to zero (denoted as ``T''), and how often a correct model is chosen among 100 replications (denoted as ``C'').  We compare the sparse PLSM ($\mathcal{S}$-PLSM) estimator with the ``oracle" estimator (ORACLE), the estimator when the true model is known prior to statistical analysis. In this example, the ORACLE is calculated using triangulation $\triangle_2$.  We also compare the $\mathcal{S}$-PLSM with the sparse spatially weighted regression method ($\mathcal{S}$-SWR) proposed by \cite{Nandy:Lim:Maiti:17}. From Table \ref{TAB:eg1-1}, one sees that, the proposed method performs very well regardless of the level of correlation, and the ``F'', ``T'' and ``C'' are very close to the ORACLE. However, the $\mathcal{S}$-SWR is very sensitive to the correlation level between the covariates and spatial locations. When some of the covariates are highly correlated with the spatial locations, the correct selection rate of the $\mathcal{S}$-SWR is low, especially when the sample size is small. The $\mathcal{S}$-PLSM selection results also indicate that the number of triangles has little effect on the performance of variable selection.

\begin{figure}[htbp]
	\begin{center}
		\begin{tabular}{cc}
			\includegraphics[height=3.4cm]{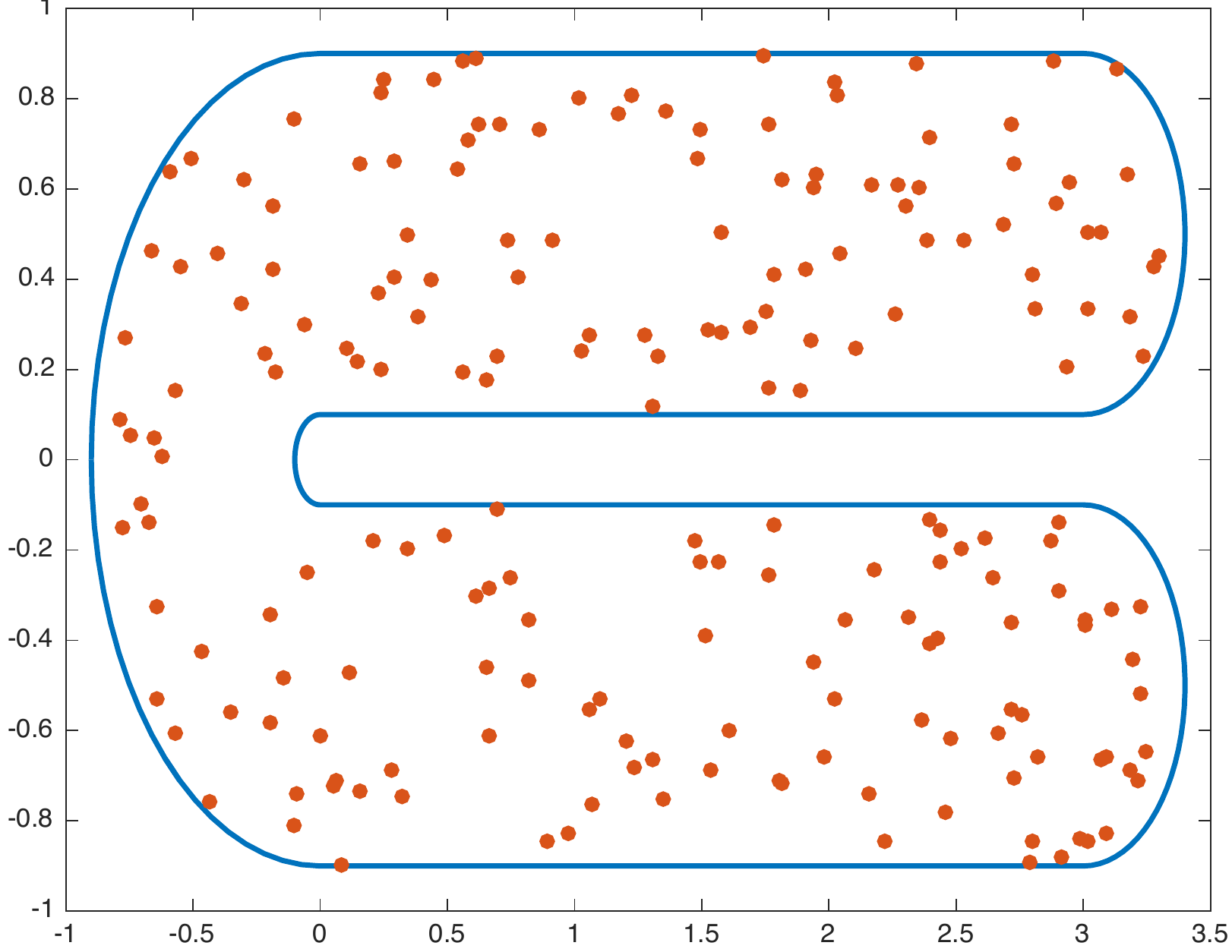} &\includegraphics[height=3.4cm]{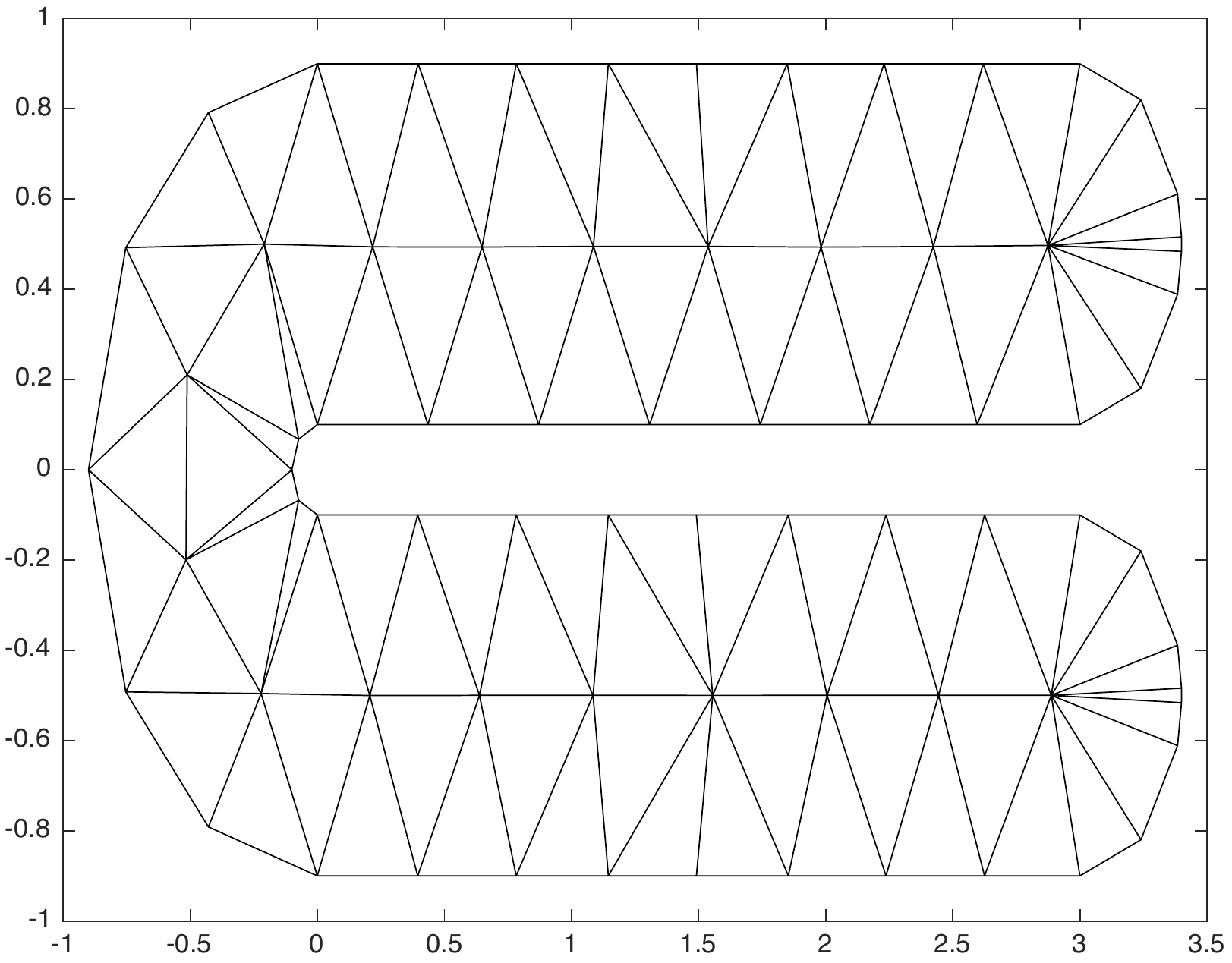}\\[-6pt]
			(a) &(b)\\
			\includegraphics[height=3.4cm]{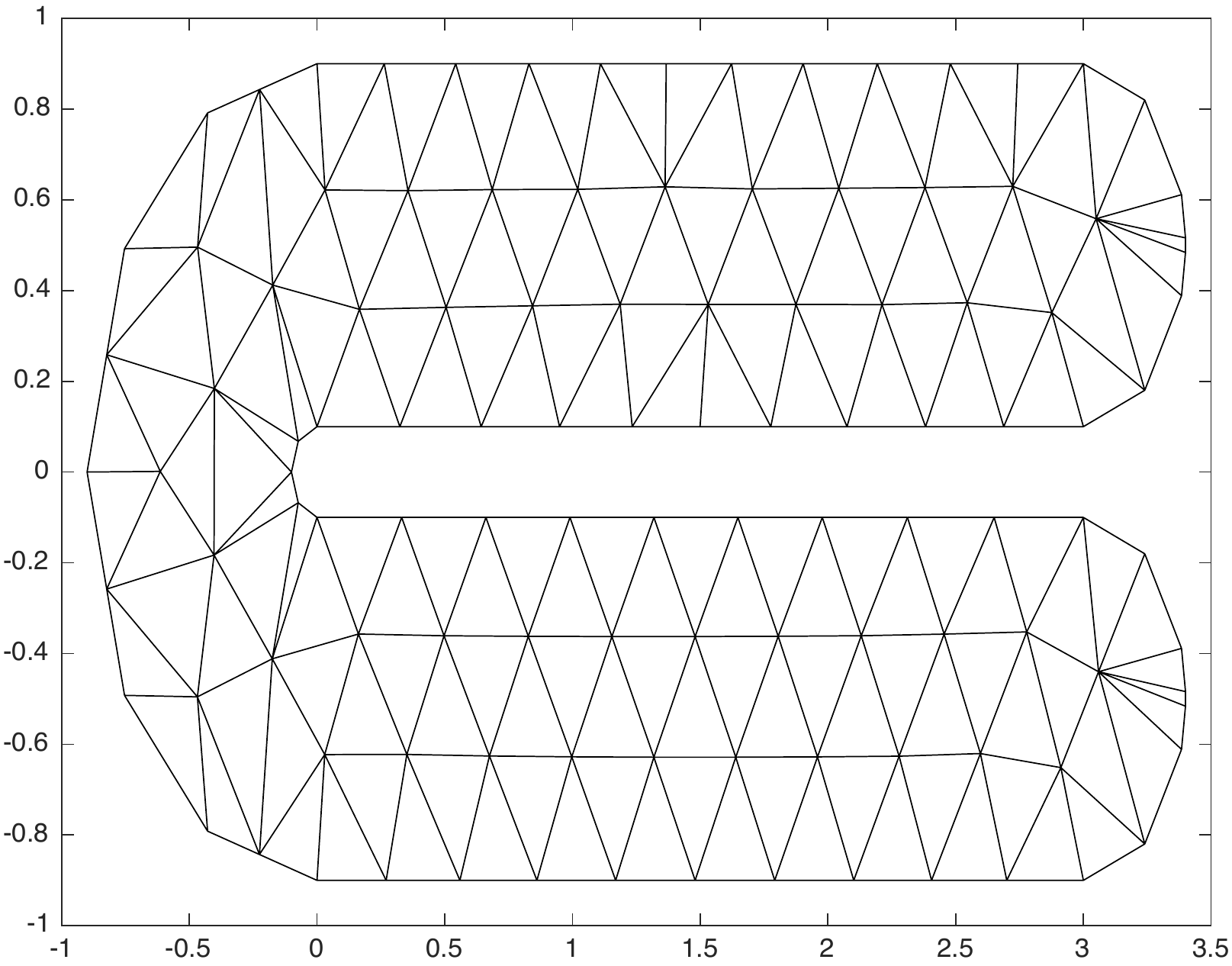} &\includegraphics[height=3.4cm]{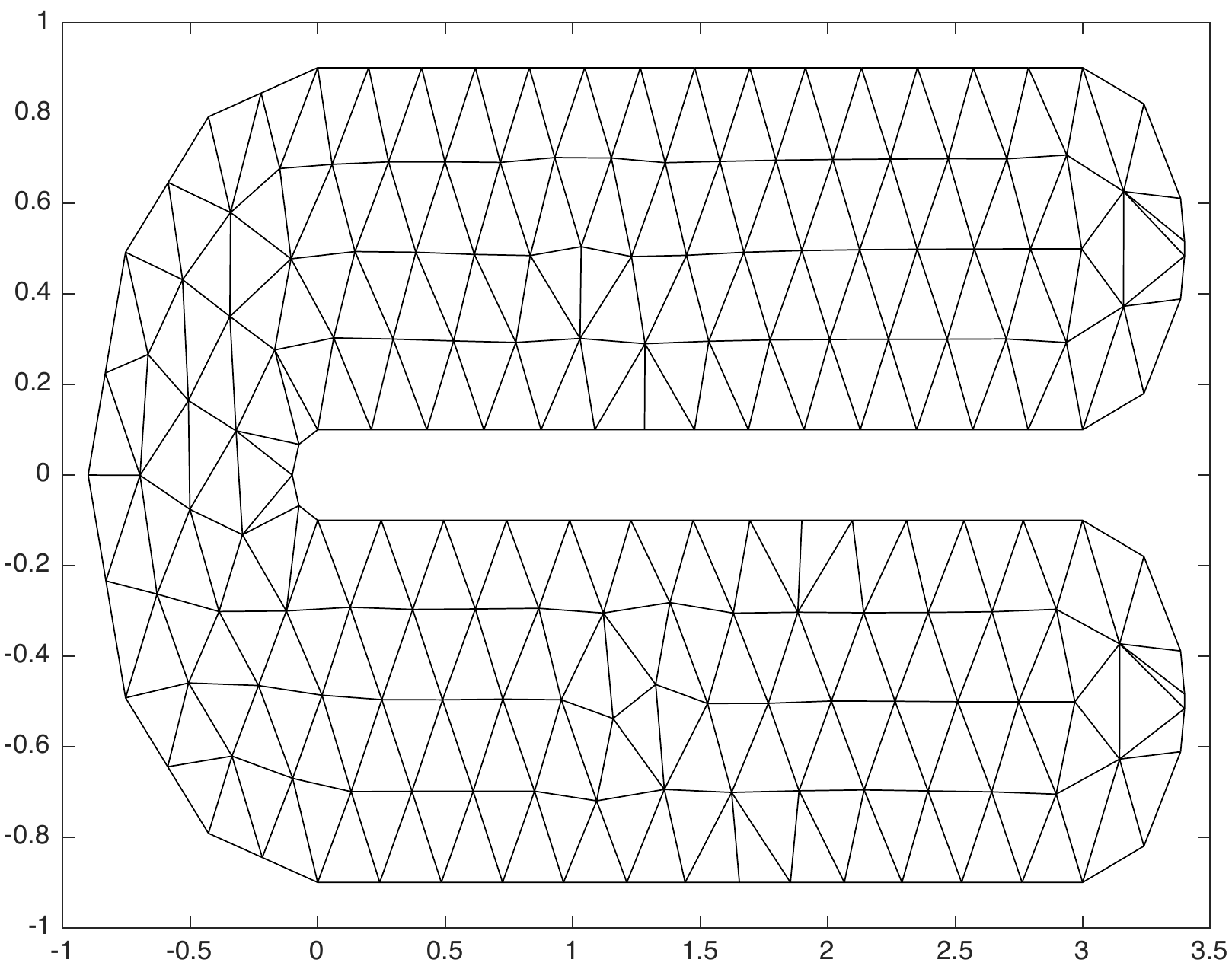}\\[-6pt]
			(c) &(d)\\
		\end{tabular}
	\end{center}
	\caption{Example 1. (a) sampled location points of replicate 1; (b) $\triangle_1$ over the domain; (c) $\triangle_2$ over the domain and (d) $\triangle_3$ over the domain.}
	\label{FIG:eg1-2}
\end{figure}

Next, to see the accuracy of the estimators, we compute the root mean squared error (RMSE) for each of the estimators based on 100 Monte Carlo samples and compare them with the ORACLE estimator. Columns 7-9 in Table \ref{TAB:eg1-1} show the RMSEs of the estimate of the parameters $\beta_{1}$, $\beta_{2}$ as well as the nonlinear function $\alpha(\cdot)$. In general, the table clearly indicates that the proposed method estimates unknown parameters and function very well even when the correlation is high. Regardless of the choice of triangulation, the $\mathcal{S}$-PLSM with the SCAD penalty always provides accurate estimators in the sense that they are very close to the ``ORACLE". Figure \ref{FIG:eg1-3} shows the estimator of $\alpha(\cdot)$ using different triangulations with the SCAD penalty for a typical data with $n=200$ observations generated from different correlation levels. The proposed PLSM estimator looks globally close to the true surface regardless of the $\rho$ used.

\begin{table}
	\begin{center}
		\caption{Example 1. model selection and estimation results.}
		\label{TAB:eg1-1}
		\scalebox{0.8}{\begin{tabular*}{\columnwidth}{@{\extracolsep{\fill}}ccccccccc}\hline\hline
				\multirow{2}{*}{$\rho$} & \multirow{2}{*}{$n$} &\multirow{2}{*}{Method} &\multicolumn{3}{c}{Selection}
				&\multicolumn{3}{c}{RMSE} \\ \cline{4-9}
				& & &F &T &C &$\beta_{1}$ &$\beta_{2}$ &$\alpha(\cdot)$\\ \hline
				
				%----------------------------------------------------------------------
				\multirow{10}{*}{0.3} &\multirow{5}{*}{100} &ORACLE &0.00 &6.00 &100 &0.103 &0.041 &0.137\\
				& &$\mathcal{S}$-SWR &0.39 &5.81 &48 &0.823 &0.416 &--\\
				& &$\mathcal{S}$-PLSM-$\triangle_1$ &0.00 &5.86 &87 &0.082 &0.049 &0.125\\
				& &$\mathcal{S}$-PLSM-$\triangle_2$ &0.00 &5.94 &95 &0.107 &0.041 &0.138\\
				& &$\mathcal{S}$-PLSM-$\triangle_3$ &0.00 &5.86 &89 &0.085 &0.049 &0.126\\ \cline{2-9}
				
				%----------------------------------------------------------------------
				&\multirow{5}{*}{200} &ORACLE &0.00 &6.00 &100 &0.066 &0.027 &0.104\\
				& &$\mathcal{S}$-SWR &0.00 &5.95 &96 &0.507 &0.419 &--\\
				& &$\mathcal{S}$-PLSM-$\triangle_1$ &0.00 &5.90 &95 &0.052 &0.032 &0.097\\
				& &$\mathcal{S}$-PLSM-$\triangle_2$ &0.00 &5.98 &98 &0.066 &0.027 &0.104\\
				& &$\mathcal{S}$-PLSM-$\triangle_3$ &0.00 &5.90 &95 &0.052 &0.032 &0.096\\ \hline
				
				%----------------------------------------------------------------------
				\multirow{10}{*}{0.5} &\multirow{5}{*}{100} &ORACLE &0.00 &6.00 &100 &0.095 &0.041 &0.132\\
				& &$\mathcal{S}$-SWR &0.87 &5.91 &9 &0.999 &0.420 &-- \\
				& &$\mathcal{S}$-PLSM-$\triangle_1$ &0.00 &5.89 &90 &0.099 &0.042 &0.136\\
				& &$\mathcal{S}$-PLSM-$\triangle_2$ &0.00 &5.87 &90 &0.095 &0.041 &0.132 \\
				& &$\mathcal{S}$-PLSM-$\triangle_3$ &0.00 &5.82 &86 &0.117 &0.042 &0.148\\ \cline{2-9}
				
				%----------------------------------------------------------------------
				&\multirow{5}{*}{200} &ORACLE &0.00 &6.00 &100 &0.066 &0.028 &0.104\\
				& &$\mathcal{S}$-SWR &0.32 &5.80 &50 &0.814 &0.424 &--\\
				& &$\mathcal{S}$-PLSM-$\triangle_1$ &0.00 &5.98 &98 &0.055 &0.032 &0.099\\
				& &$\mathcal{S}$-PLSM-$\triangle_2$ &0.00 &5.95 &97 &0.066 &0.028 &0.104\\
				& &$\mathcal{S}$-PLSM-$\triangle_3$ &0.00 &5.92 &96 &0.055 &0.032 &0.098\\ \hline
				
				%----------------------------------------------------------------------
				\multirow{10}{*}{0.7} &\multirow{5}{*}{100} &ORACLE &0.00 &6.00 &100 &0.132 &0.041 &0.161\\
				& &$\mathcal{S}$-SWR &0.90 &5.92 &8 &1.001 &0.420 &--\\
				& &$\mathcal{S}$-PLSM-$\triangle_1$ &0.00 &5.86 &89 &0.141 &0.048 &0.164\\
				& &$\mathcal{S}$-PLSM-$\triangle_2$ &0.00 &5.83 &89 &0.159 &0.041 &0.179\\
				& &$\mathcal{S}$-PLSM-$\triangle_3$ &0.00 &5.89 &92 &0.154 &0.049 &0.173\\ \cline{2-9}
				
				%----------------------------------------------------------------------
				&\multirow{5}{*}{200} &ORACLE &0.00 &6.00 &100 &0.076 &0.027 &0.110\\
				& &$\mathcal{S}$-SWR &0.70 &5.93 &25 &1.129 &0.418 &--\\
				& &$\mathcal{S}$-PLSM-$\triangle_1$ &0.00 &5.95 &96 &0.077 &0.031 &0.110\\
				& &$\mathcal{S}$-PLSM-$\triangle_2$ &0.00 &5.99 &99 &0.076 &0.027 &0.110\\
				& &$\mathcal{S}$-PLSM-$\triangle_3$ &0.00 &5.94 &95 &0.075 &0.031 &0.108\\ \hline\hline
		\end{tabular*}}
	\end{center}
\end{table}

\begin{figure}[htbp]
	\begin{center}
		\begin{tabular}{cccc}
			&$\triangle_1$ &$\triangle_2$ &$\triangle_3$\\
			$\rho=0.3$ & \includegraphics[height=3cm]{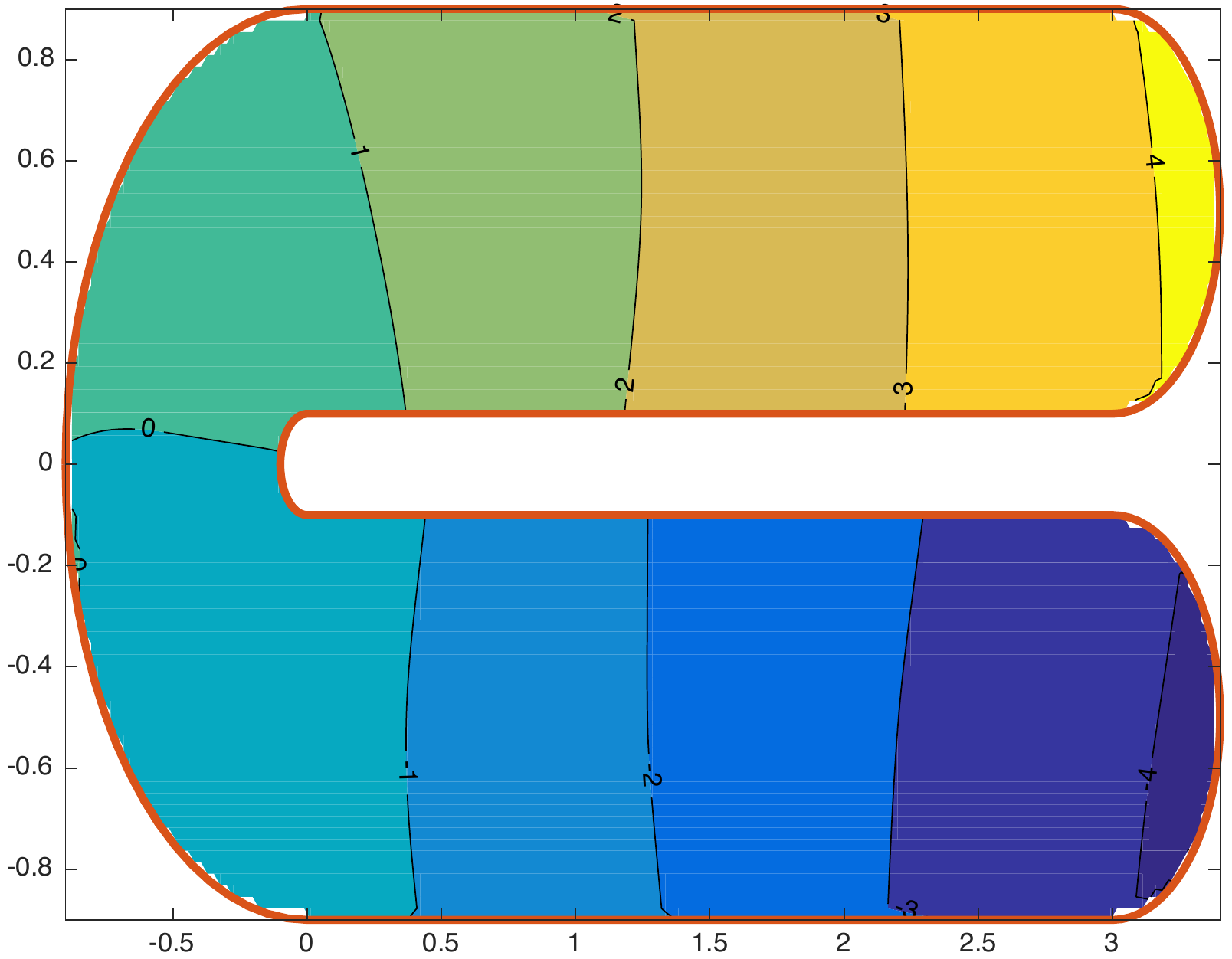} &\includegraphics[height=3cm]{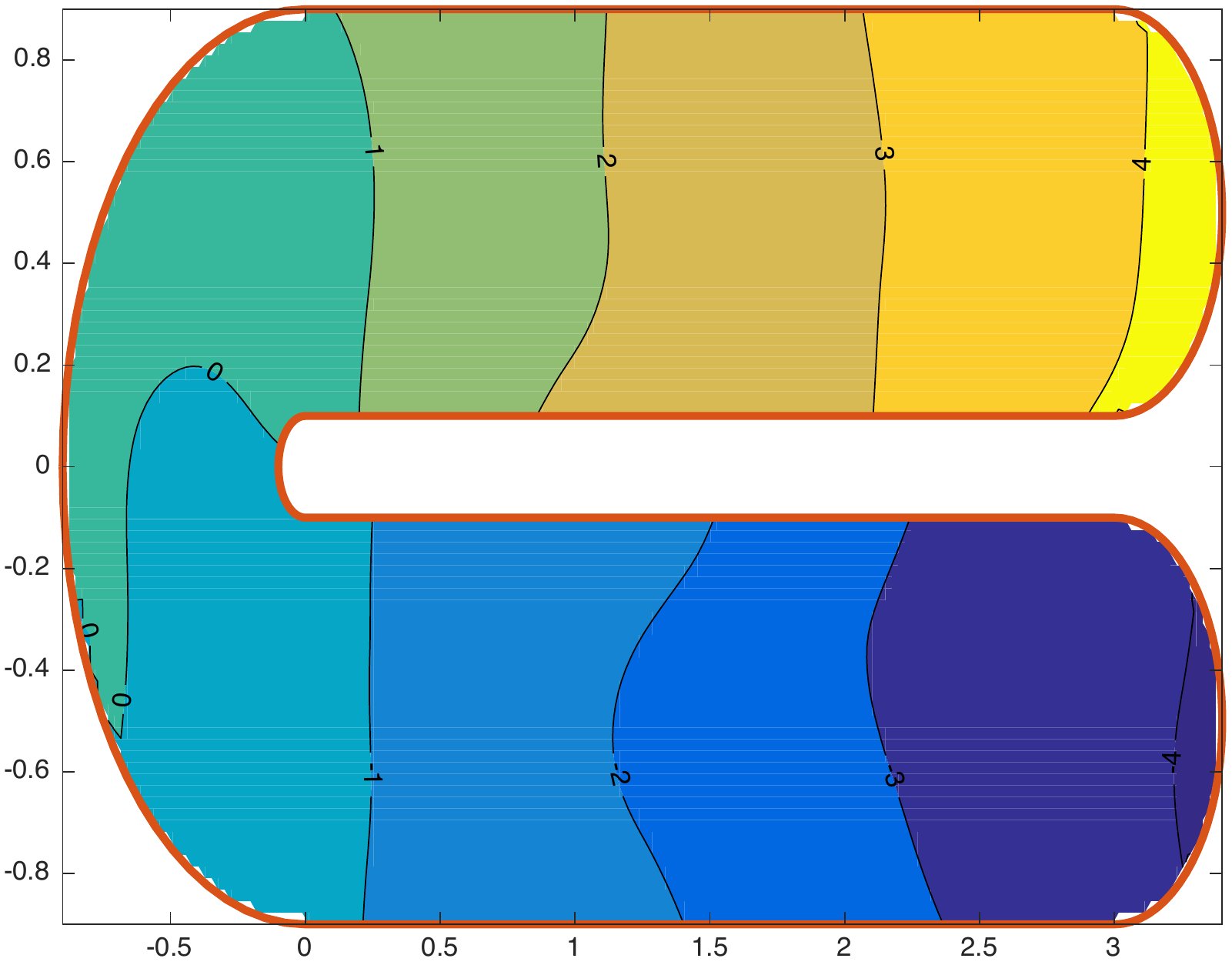} &\includegraphics[height=3cm]{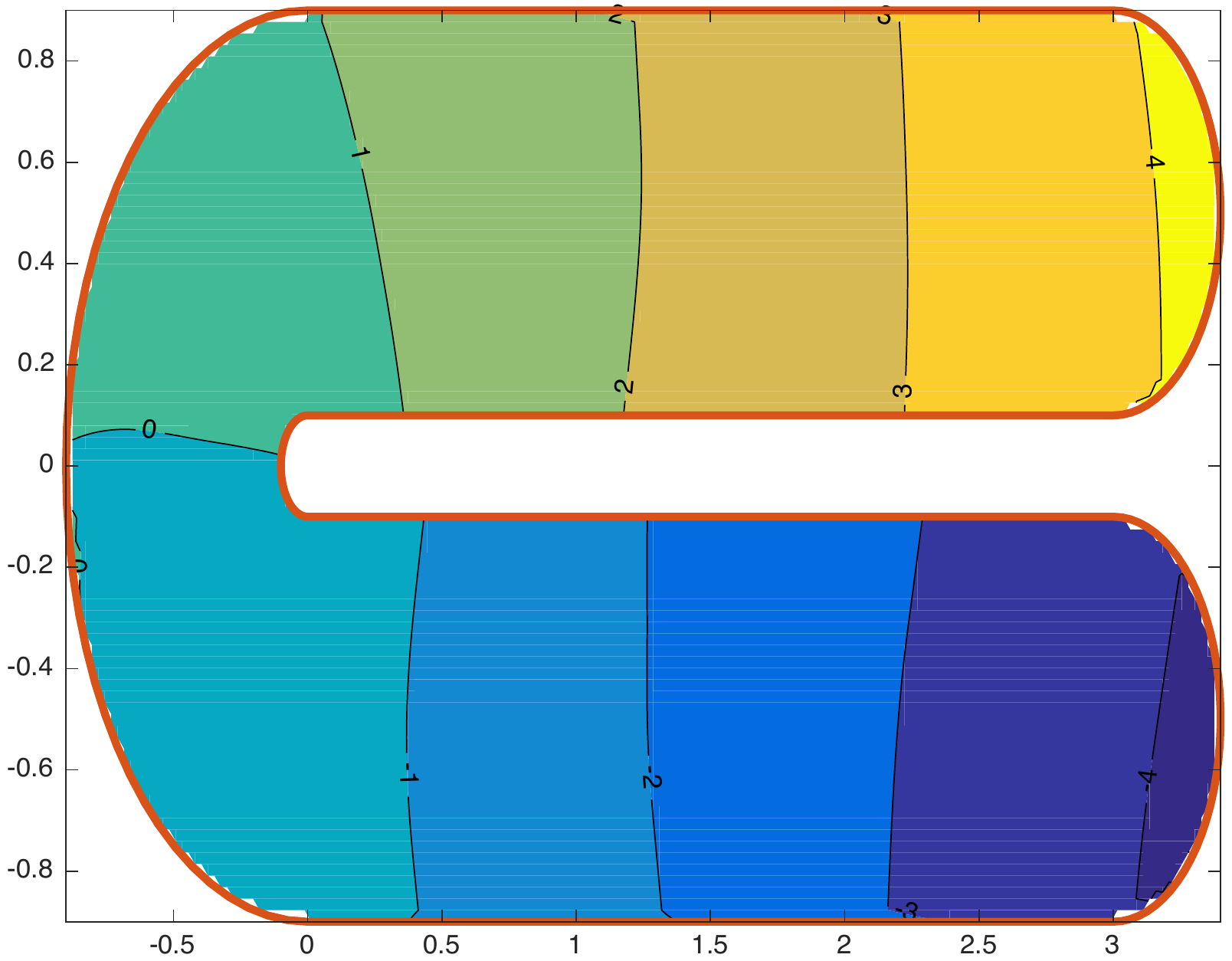}\\ [-6pt]
			$\rho=0.5$ & \includegraphics[height=3cm]{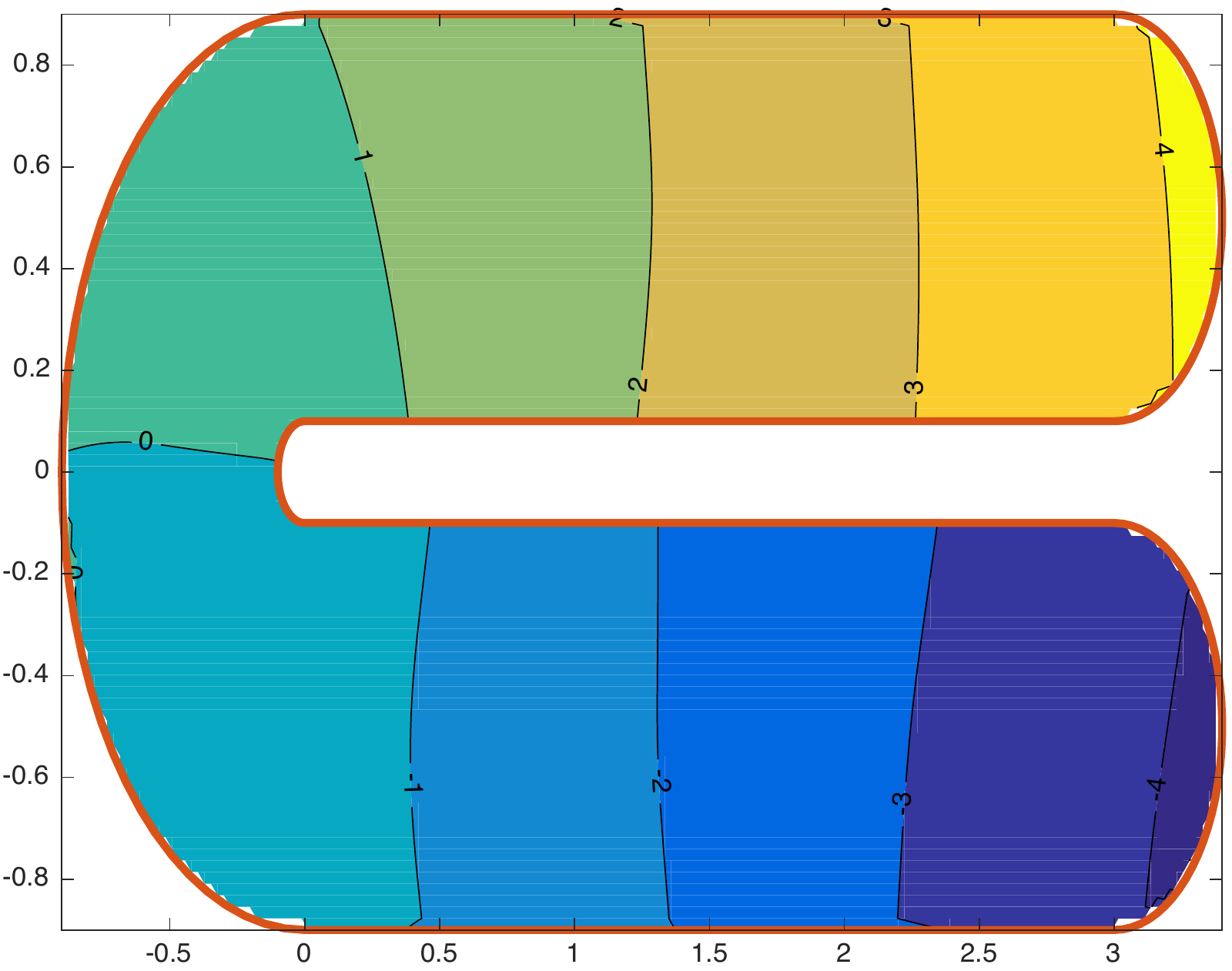} &\includegraphics[height=3cm]{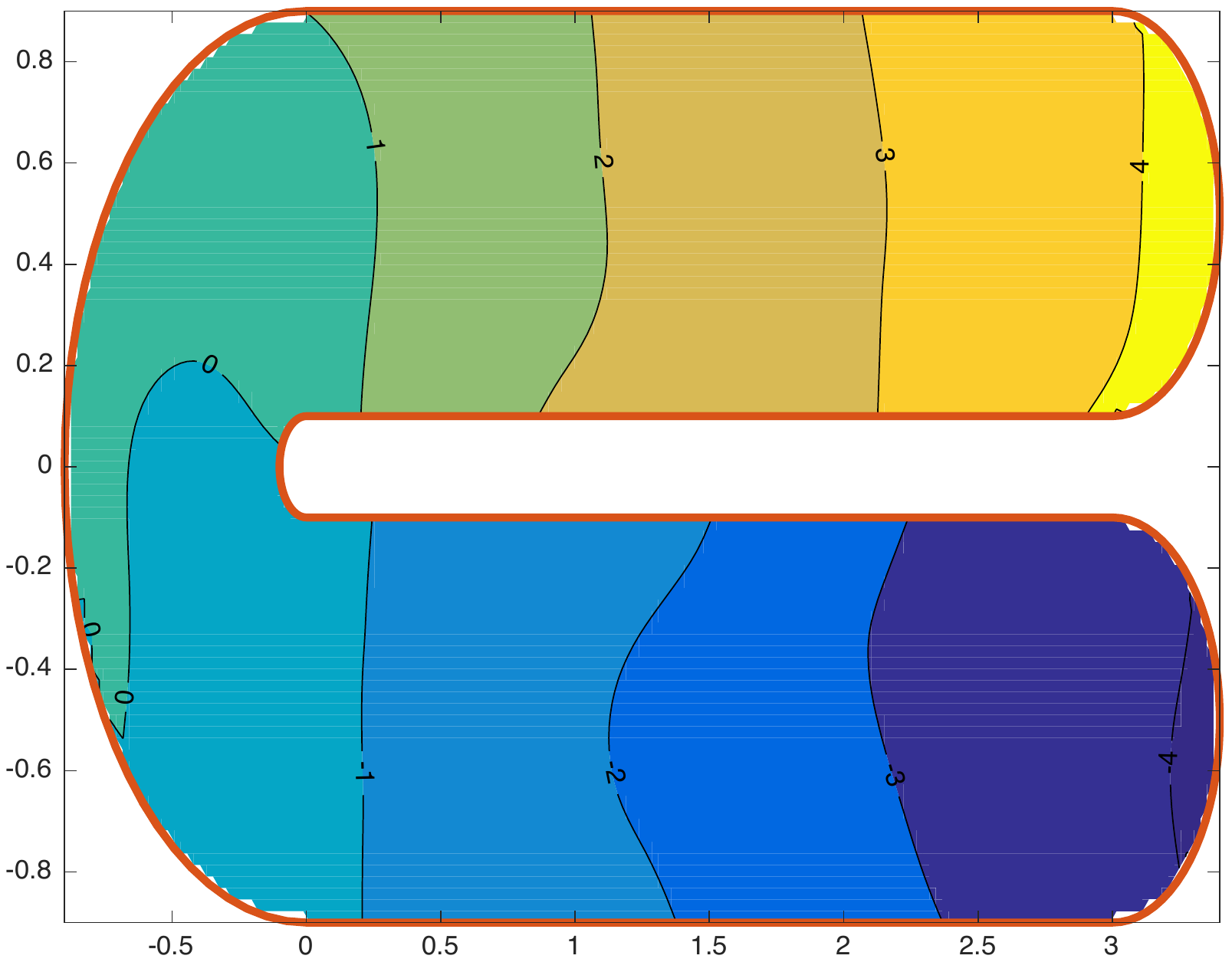} &\includegraphics[height=3cm]{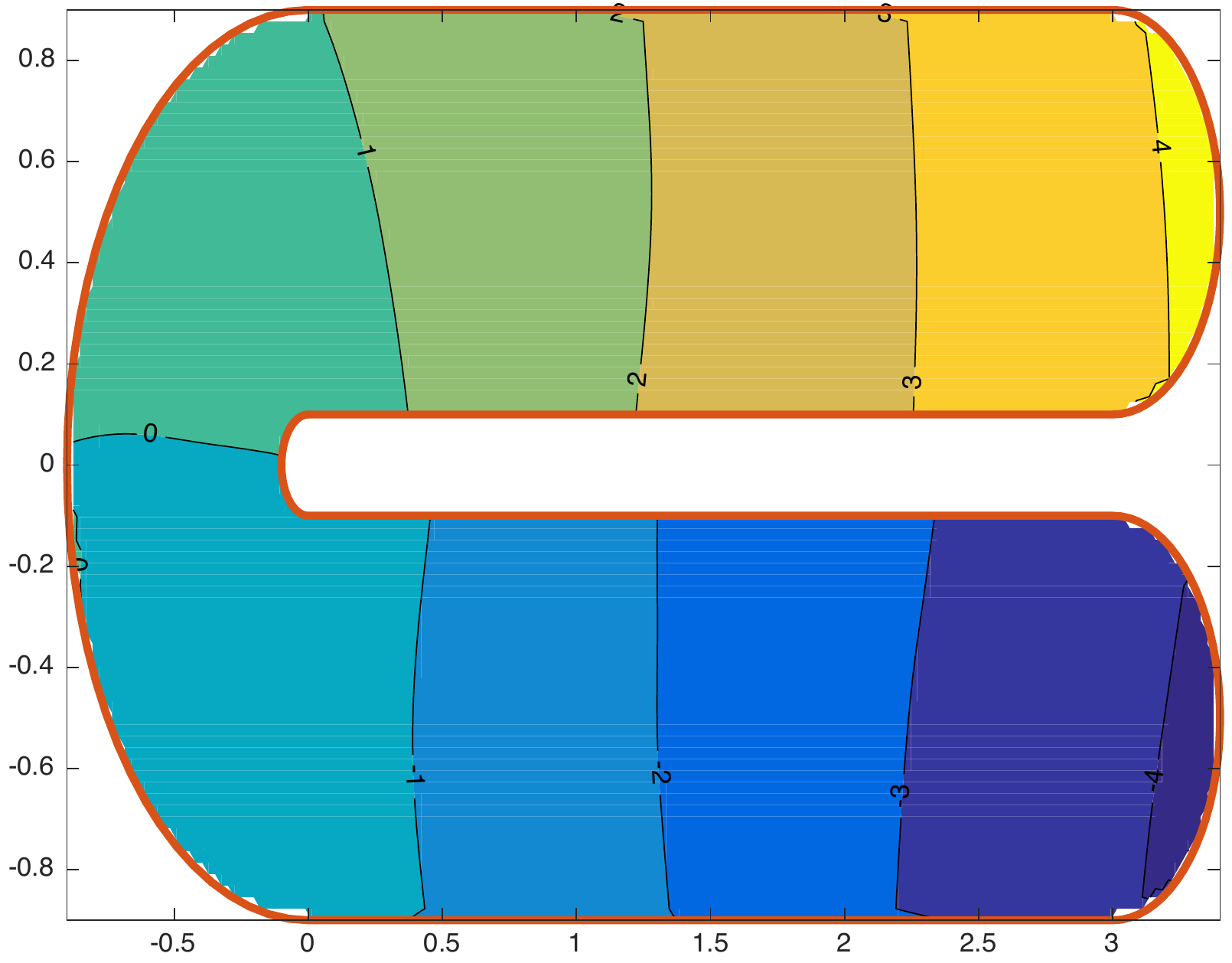}\\ [-6pt]
			$\rho=0.7$ & \includegraphics[height=3cm]{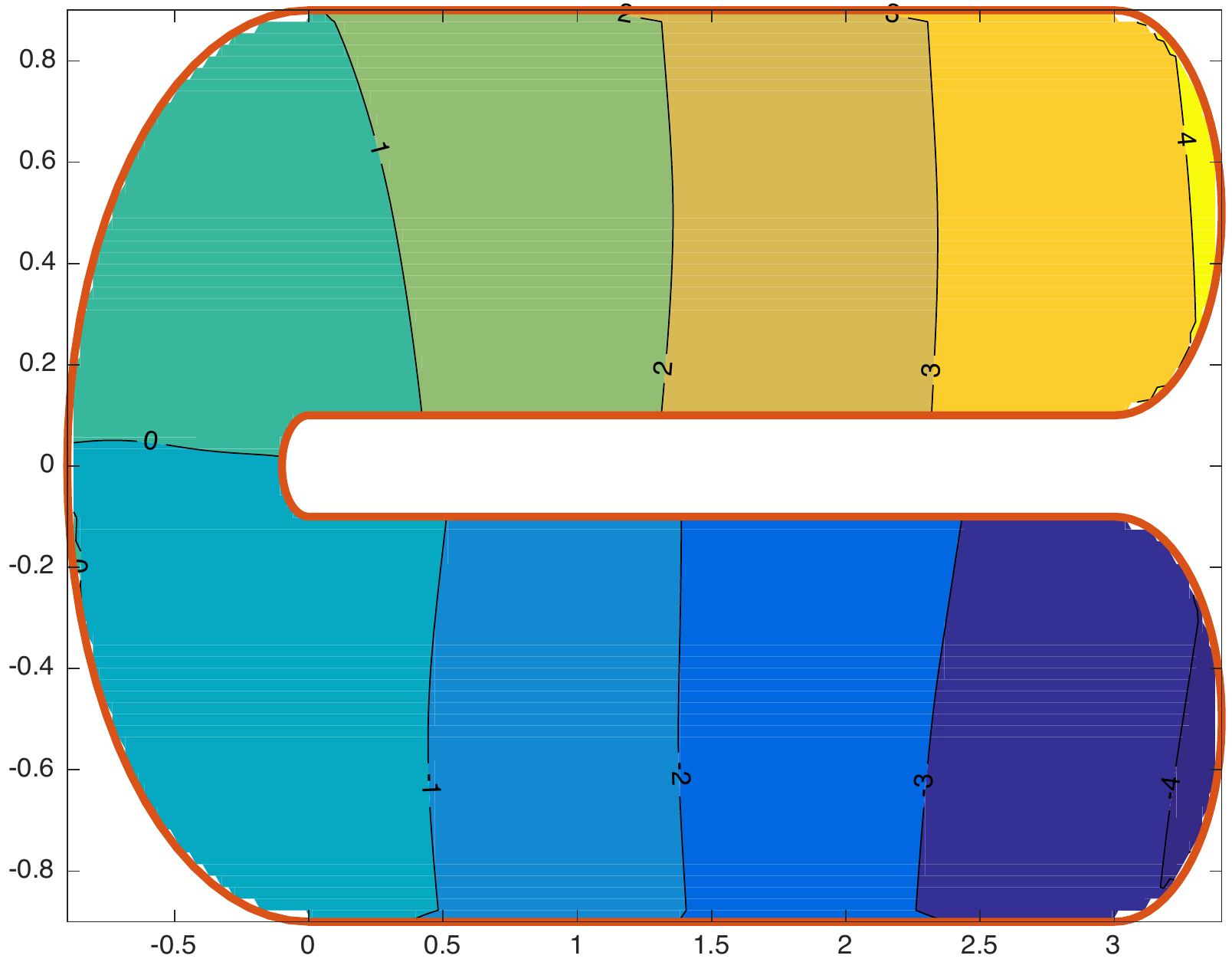} &\includegraphics[height=3cm]{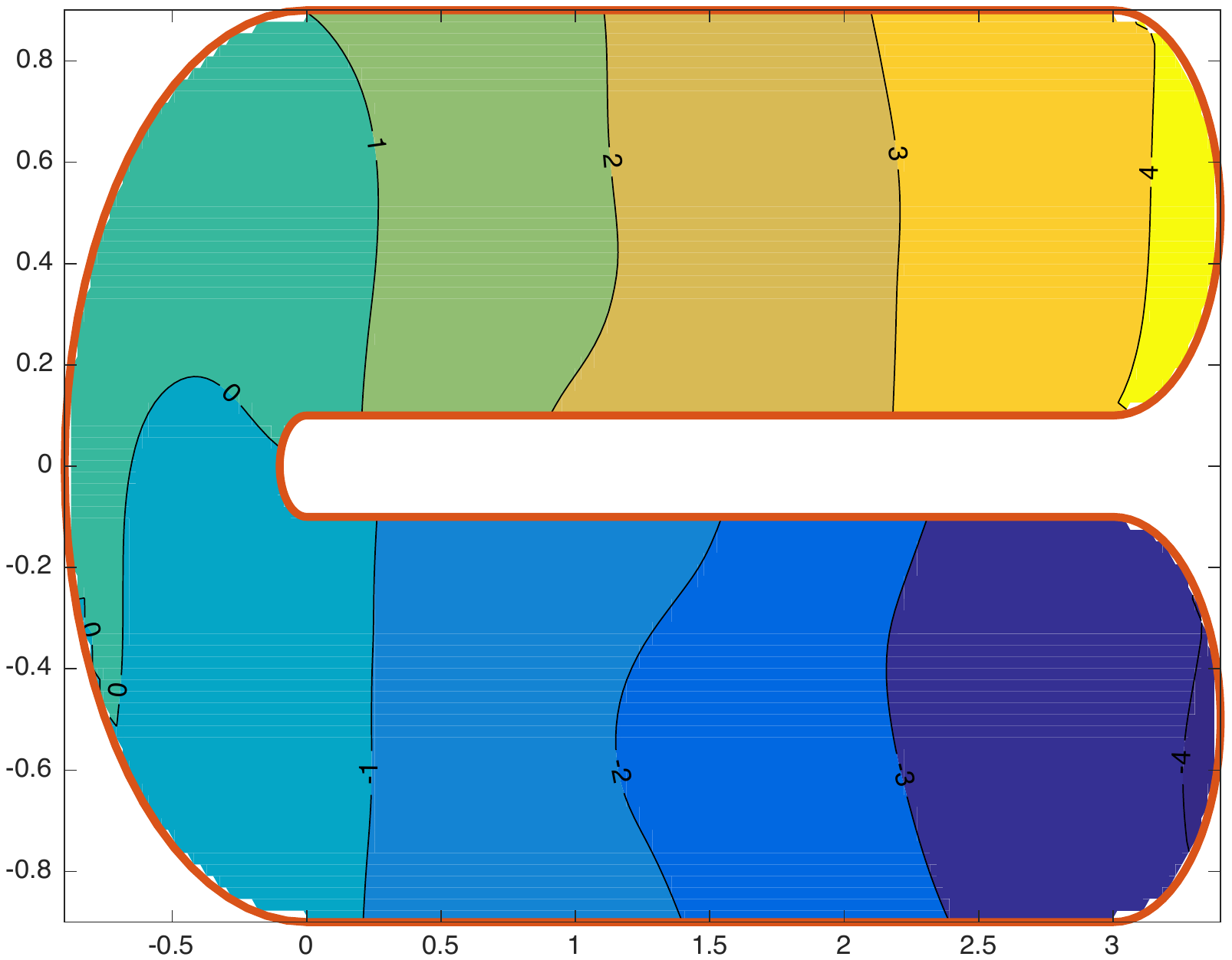} &\includegraphics[height=3cm]{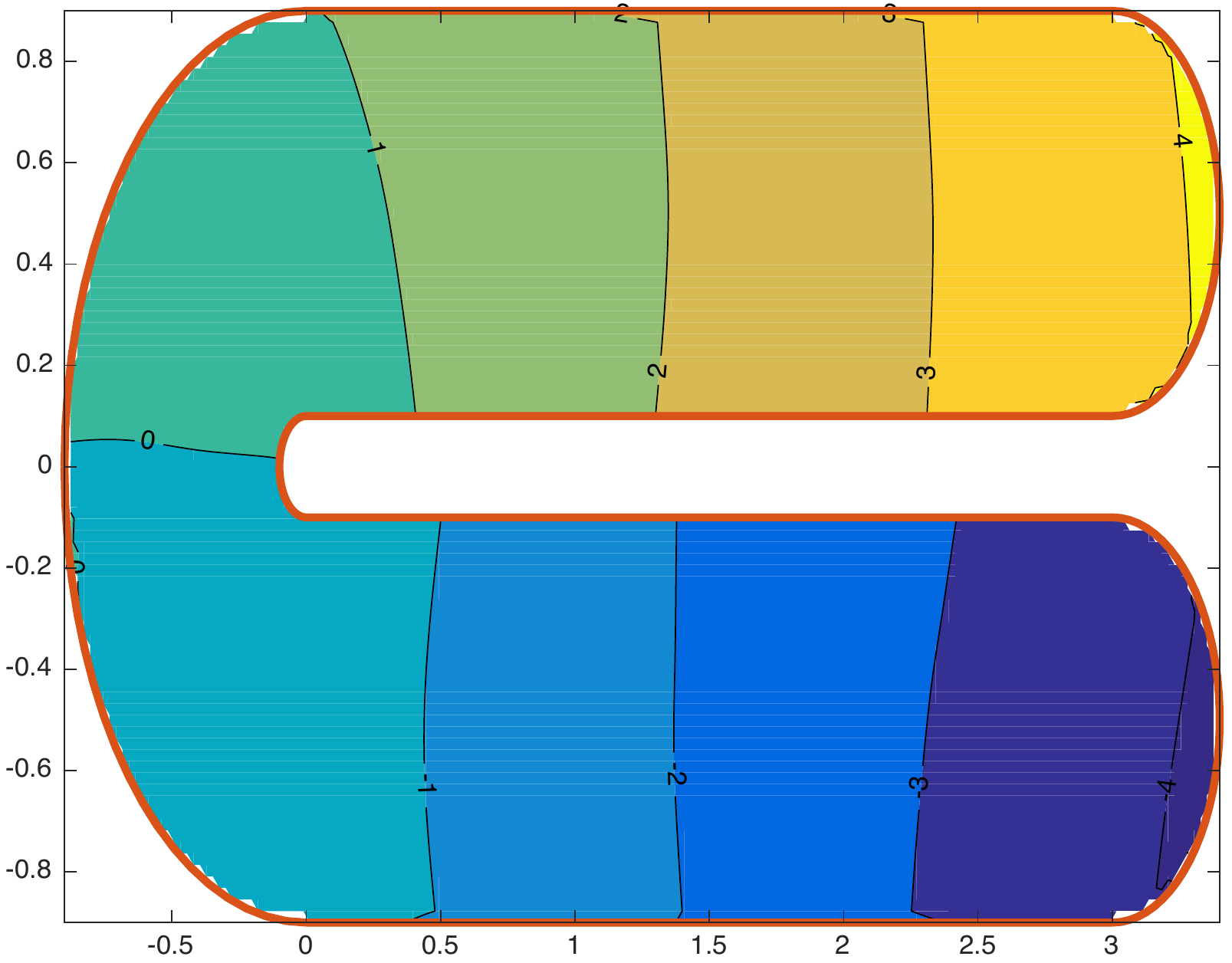}\\ [-6pt]
		\end{tabular}
	\end{center} \vskip -.15in
	\caption{Example 1. estimated functions using different triangulations when $n=200$.}
	\label{FIG:eg1-3}
\end{figure}	

Next we test the accuracy of the standard error estimation in (\ref{DEF:Sigma}) for $\widehat{\beta}_{1}$ and $\widehat{\beta}_{2}$. All the results based on triangulation $\triangle_2$ are listed in Table \ref{TAB:eg1-2}. The standard deviations of the estimated parameters computed based on 100 simulations are treated as the true standard errors (column labeled ``$\mathrm{SE}_{\mathrm{mc}}$''). Then we compared the mean and median of the 100 estimated standard errors calculated using (\ref{DEF:Sigma}) (columns labeled ``$\mathrm{SE}_{\mathrm{mean}}$'' and ``$\mathrm{SE}_{\mathrm{median}}$") with $\mathrm{SE}_{\mathrm{mc}}$. The column labeled ``$\mathrm{SE}_{\mathrm{mad}}$" is the interquartile range of the 100 estimated standard errors divided by 1.349. It can be  used as a robust estimate of the standard deviation. Table \ref{TAB:eg1-2} confirms the accuracy of the proposed standard error formula.

%%%%%%%%%%%%%%%%%%%%%%%%%%%%%%%%%%%%%%%%%%%%%%%%%%%%%%%%%%%%%%%%%%%%%%%
\begin{table}
	\begin{center}
		\caption{Example 1. standard error estimates of the coefficients using $\mathcal{S}$-PLSM-$\triangle_2$.}
		\label{TAB:eg1-2}
		\scalebox{0.85}{\begin{tabular*}{\columnwidth}{@{\extracolsep{\fill}}cccccccccc} \hline\hline
				\multirow{2}{*}{$\rho$} &\multicolumn{4}{c}{$\beta_1$}& &\multicolumn{4}{c}{$\beta_2$}\\ \cline{2-5} \cline{7-10}
				&$\mathrm{SE}_{\mathrm{mc}}$ &$\mathrm{SE}_{\mathrm{mean}}$ &$\mathrm{SE}_{\mathrm{median}}$ &$\mathrm{SE}_{\mathrm{mad}}$ &
				&$\mathrm{SE}_{\mathrm{mc}}$ &$\mathrm{SE}_{\mathrm{mean}}$ &$\mathrm{SE}_{\mathrm{median}}$ &$\mathrm{SE}_{\mathrm{mad}}$
				\\ \hline
				0.3 &0.0646 &0.0485 &0.0483 &0.0036 &&0.0264 &0.0243 &0.0242 &0.0013\\
				0.5 &0.0578 &0.0579 &0.0579 &0.0065 &&0.0286 &0.0243 &0.0243 &0.0015\\
				0.7 &0.0660 &0.0640 &0.0618 &0.0106 &&0.0273 &0.0243 &0.0243 &0.0015\\ \hline\hline
		\end{tabular*}}
	\end{center}
\end{table}

%%%%%%%%%%%%%%%%%%%%%%%%%%%%%%%%%%%%%%%%%%%%%%%%%%%%%%%%%%%%%%
\vskip .12in \noindent \textbf{5.2. Example 2} \vskip .10in
\label{SUBSEC:example2}

In this example, we consider the case that the random noises are spatially correlated. Following \cite{Nandy:Lim:Maiti:17}, we consider a rectangle domain with 20$\times$20  lattice grid points, and then, for each of the 100 Monte Carlo experiments, we randomly sample $n=100$ grid points. The response variable $Y_{i}$'s are generated from the following model:
$Y_{i}=\mathbf{Z}_{i}^{\top}\bs{\beta}+\varepsilon_{i}$, $i=1,\ldots,n$,
where $\bs{\beta}=(1,-1,0,0,0,0,0,0)^{\top}$ and $\varepsilon$ is generated from a stationary gaussian process with mean zero. All the covariates are generated independently from $N(0,1)$.

We compare the selection and estimation performance of the $\mathcal{S}$-PLSM with the $\mathcal{S}$-SWR and the sparse linear model ($\mathcal{S}$-LM). %For $\mathcal{S}$-SWR, we consider three different covariance models: (i) Exponential; (ii) Matern with $\nu=5/2$ ; and (ii) Gaussian.
For $\mathcal{S}$-SWR, we calculate the weight matrix using the gaussian covariance structure. The model selection and estimation results are summarized in Table \ref{TAB:eg2-1}. As expected, when the true error structure follows a stationary gaussian process, the $\mathcal{S}$-SWR performs perfect and the selection is $100\%$ correct. The linear model cannot capture the error structure in this scenario and it tends to increase false positive rate.
However, the proposed $\mathcal{S}$-PLSM method still performs really well in this case, and the correct selection rate achieves $98\%$, which demonstrates that our method is pretty robust in presence of spatial dependence.

\begin{table}
	\begin{center}
		\caption{Example 2. model selection and estimation results.}
		\label{TAB:eg2-1}
		\scalebox{0.85}{\begin{tabular*}{\columnwidth}{@{\extracolsep{\fill}}lccccc}\hline\hline
				\multirow{2}{*}{Method} &\multicolumn{3}{c}{Selection}  &\multicolumn{2}{c}{RMSE} \\ \cline{2-4} \cline{5-6}
				&F &T &C &$\beta_{1}$ &$\beta_{2}$ \\ \hline
				%----------------------------------------------------------------------
				ORACLE &0.00 &6.00 &100 &0.0600 &0.0500 \\
				$\mathcal{S}$-LM &0.00 &5.81 & 89 &0.1230 &0.0884 \\
				$\mathcal{S}$-SWR &0.00 &6.00 &100 &0.0796 &0.0635 \\
				$\mathcal{S}$-PLSM &0.00 &5.98 &98 &0.0600 &0.0500 \\ \hline\hline
		\end{tabular*}}
	\end{center}
\end{table}

%%%%%%%%%%%%%%%%%%%%%%%%%%%%%%%%%%%%%%%%%%%%%%%%%%%%%%%%%%%%%%
\vskip .12in \noindent \textbf{5.3. Example 3} \vskip .10in
\label{SUBSEC:example3}

We conduct another simulation study using the covariates and domain of the data from the mortality analysis described in Section 6. Specifically, we generate the response variable $Y_{i}$ from the following PLSM:
\begin{equation*}
Y_{i}=\mathbf{Z}_{i}^{\top}\bs{\beta}+\alpha(\mathbf{X}_{i})+\varepsilon_{i},~~~i=1,\ldots,n,
\end{equation*}
where $Z_{ij}$, $j=1,\ldots,11$, are the same as the explanatory variables used in the mortality data, the true $\beta_{j}$'s and $\alpha(\cdot)$ are set to be the same as the estimates obtained by PLSM with the SCAD penalty. The random error, $\varepsilon_{i},~i=1,\ldots,n$, are generated independently from $N(0,\widehat{\sigma}^{2})$ distribution, where $\widehat{\sigma}^{2}$ is the variance estimate of the measurement error obtained from the mortality data.

We fit an $\mathcal{S}$-PLSM and an $\mathcal{S}$-SWR with the SCAD penalty for the simulated dataset, where the triangulation used for the $\mathcal{S}$-PLSM is given in Figure \ref{FIG:Mortality-tri}. To see the effect of model misspecifiation on selection, we also consider a $\mathcal{S}$-LM with the SCAD penalty without including the spatial information. We repeat the generation and fitting procedures 100 times.

The variable selection and the parameter estimation results are summarized in Table \ref{TAB:eg3-1}. From this table, we find that the number of covariates  selected is much larger than the true number of nonzero components when the misspecified LM is used. The $\mathcal{S}$-SWR outperforms slightly the $\mathcal{S}$-LM in terms of the ``F" and ``T" values. However, the $\mathcal{S}$-PLSM has comparable performance with the ORACLE, and it performs much better than the $\mathcal{S}$-LM and the $\mathcal{S}$-SWR.

The last column in Table \ref{TAB:eg3-1} provides the 10-fold cross-validation RMSPE for the response variable, defined as $\left\{{n}^{-1}\sum_{m = 1}^{10} \sum_{i \in \kappa_{m}} (\widehat{Y}_i-Y_i) ^ 2\right\}^{1/2}$ over the 100 replications, where $\kappa_1, \ldots, \kappa_{10}$ comprise a random partition of the dataset into $10$ disjoint subsets of equal size. The cross-validation RMSPE shows the superior performance of the $\mathcal{S}$-PLSM as it provides more accurate predictions compared with the $\mathcal{S}$-LM though it includes fewer explanatory variables than the $\mathcal{S}$-LM.

\begin{table}
	\begin{center}
		\caption{Example 3. model selection and estimation results}
		\label{TAB:eg3-1}
		\scalebox{0.85}{\begin{tabular}{lcccccccccc}\hline\hline
				\multirow{3}{*}{Method}  &\multicolumn{3}{c}{Selection}
				&\multicolumn{6}{c}{RMSE}  & RMSPE\\ \cline{2-4}  \cline{6-10}
				&\multirow{2}{*}{F} &\multirow{2}{*}{T} &\multirow{2}{*}{C} & &Affluence &Disadvantage &ViolentCrime &Urban &\multirow{2}{*}{$\alpha(\cdot)$} &\multirow{2}{*}{$Y$}\\
				& & & & &$\beta_{4}$ &$\beta_{5}$ &$\beta_{6}$ &$\beta_{9}$ & &\\ \hline
				ORACLE &0.00 &7.00 &100 &&0.034 &0.020 &0.014 &0.013 &0.183 &0.766\\
				$\mathcal{S}$-LM &0.45 &3.03 &0 &&0.049 &0.091 &0.110 &0.080 &-- &0.860\\
				$\mathcal{S}$-SWR &0.08 &5.82 &60 &&0.025 &0.022 &0.031 &0.026 &-- &0.862\\
				$\mathcal{S}$-PLSM &0.06 &6.87 &86 &&0.034 &0.020 &0.021 &0.015 &0.184 &0.796\\
				\hline \hline
		\end{tabular}}
	\end{center}
	-- indicates the measurement is not applicable.
\end{table}

%%%%%%%%%%%%%%%%%%%%%%%%%%%%%%%%%%%%%%%%%%%%%%%%%%%%%%%%%%%%%%
%%%%%%%%%%%%%%%%%%%%%%%%%%%%%%%%%%%%%%%%%%%%%%%%%%%%%%%%%%%%%%
%%%%%%%%%%%%%%%%%%%%%%%%%%%%%%%%%%%%%%%%%%%%%%%%%%%%%%%%%%%%%%
\setcounter{chapter}{6} \setcounter{equation}{0} \vskip .10in
\noindent \textbf{6. Application to U.S. Mortality Data} \label{SEC:application} \vskip 0.05in

We apply the proposed method to the United States mortality study. Mortality is an overall assessment of the population health of an area. The concentration of high mortality in specific areas in the U.S. has been an important public health concern and received considerable scholarly and policy attention in recent years \citep{Chen:Deng:12,Hoyert:12,Yang:Noah:15,Bauer:Kramer:16}. In the past few decades, the U.S. has witnessed an exceptional decrease in mortality, from almost 20 deaths per 1,000 population in 1930 to roughly 8 deaths per 1,000 population in 2010 \citep{Hoyert:12}. Despite the significant decrease in overall mortality through the years, disparities in mortality have persisted along various dimensions, such as, gender, age, race/ethnicity, income inequality, social affluence, concentrated disadvantage, safety and geographic space \citep{Chen:Deng:12,Yang:Noah:15}.

One of the goals of the study is to investigate the spatial pattern and identify important socioeconomic and racial/ethnic factors that affect mortality. The original mortality dataset is based on the county level, and it includes 3,037 counties from 48 states of the United States and the District of Columbia. The response variable is the average age-standardized mortality rates per 1,000 population based on county level over the period of 1998-2002, and it is publicly available from the Institute for Health Metrics and Evaluation \citep{ihme}. We classify all the counties in the dataset into six different groups according to their mortality rates: (i) less than 7.5, (ii) 7.50--9.00, (iii) 9.00--10.00, (iv) 10.00--11.00, (v) 11.00--12.50, and (vi) more than 12.50, and these groups are plotted in Figure \ref{FIG:Mortality}, which represents the observed mortality rate from each of 3037 counties in the United States.

Similar as in \cite{Chen:Deng:12,Sparks:Sparks:10,Yang:Jensen:11,Yang:Noah:15}, the explanatory variables in the study consist of many socioeconomic and racial/ethnic factors from year 2000, such as African-American rate, Hispanic/Latino rate, Gini coefficient, social affluence, disadvantage, violent crime rate (per 1,000 population), property crime rate (per 1,000 population), residential stability, urban rate, percentage of population without health insurance coverage and local government expenditure on health per population. Specifically, the information of Gini coefficient is publicly available at U.S. Census Bureau historical income tables (\url{https://www.census.gov/data/tables/time-series/dec/historical-income-counties.html}), and all the other explanatory variables can be obtained from U.S. Census Bureau and U.S. Federal Bureau Investigation (\url{https://www.census.gov/support/USACdataDownloads.html}). Meanwhile, the longitudes and latitudes of the geographic center of each county in the U.S. are available at \url{https://www.census.gov/geo/maps-data/data/gazetteer.html}.

According to \cite{Chen:Deng:12} and \cite{Yang:Noah:15}, social affluence is measured by the percentage of households that have income over \$75,000, the percentage of population obtaining at least a bachelor degree and percent of people in managerial and professional positions. As stated in \cite{Sparks:Sparks:10} median house value is another important socioeconomic factor that influences mortality rate. Therefore, we also include median house value as an indicator of social affluence. Based on \cite{Yang:Noah:15}, due to the highly positive correlation between those four variables, factor analysis is used to combine those four variables in a certain scale. Similarly, we apply factor analysis to combine public assistance rate, the percentage of female-headed families and the unemployment rate together to measure concentrated disadvantages. The factor of residential stability is measured by the percentage of population five years and over by residence in year 1995 lived in the same house in year 2000 and the ratio of housing units occupied by owners. As these two variables are highly correlated, following \cite{Yang:Noah:15}, we standardize each of them and take the average to get a single indicator for residential stability factor.

\begin{table}
	\begin{center}
		\caption{Variables in the mortality dataset}
		\label{TAB:covariates}
		\scalebox{0.85}{\begin{tabular}{ll}
				\hline \hline
				Variable &  Description \\
				\hline
				Mortality & mortality rate per 1,000 population \\
				AA$^{\ast}$ & African-American rate \\
				HL$^{\ast}$ & Hispanic/Latino rate \\
				Gini &  Gini coefficient showing the inequality between different levels of people in society\\
				Affluence$^{\ast}$ & social affluence factors:\\
				&percentage of households that have income over \$75,000\\
				&percentage of population obtaining at least a bachelor degree\\
				&percent of people in managerial and professional positions\\
				&median house value\\
				Disadvantage$^{\ast}$ & disadvantage factors:\\
				&public assistance rate\\
				&percentage of female-headed families\\
				&unemployment rate\\
				ViolentCrime$^{\ast}$ &  violent crime rate per 1000 population\\
				PropertyCrime$^{\ast}$ &  property crime rate per 1000 population\\
				ResidStab & residential stability \\
				Urban$^{\ast}$ & urban rate\\
				HealthCover$^{\ast}$ &  percentage of population without health insurance coverage\\
				ExpHealth$^{\ast}$ & local government expenditures in health per population\\
				Lat, Lon: & Latitude and longitude of the approximate geographic center of the county.\\
				\hline \hline
		\end{tabular}} 
	\end{center} \vspace{-.5cm}
	\begin{tablenotes}
		\small 
		\item Note: The covariates with $^{\ast}$ represent that they are transformed from the original value by $f(x)=\log (x+\delta)$. For example, $\text{AA}^{\ast}=\log(\text{AA}+\delta)$, where $\delta$ is a small number.
	\end{tablenotes}
\end{table}

As indicated in Table \ref{TAB:covariates}, we first apply the logarithm to each of the predictors except Gini coefficient and residential stability, then we standardize them before applying our method of variable selection. We fit the mortality data using the following PLSM:
\begin{align*}
\texttt{Mortality}&=\beta_0+\beta_1 \texttt{AA}+\beta_2 \texttt{HL}+\beta_3 \texttt{Gini}+\beta_4 \texttt{Affluence}+\beta_5 \texttt{Disadvantage} \notag\\
&~~+\beta_6 \texttt{ViolentCrime}+\beta_7 \texttt{PropertyCrime}+\beta_8 \texttt{ResidStab}\notag\\
&~~+\beta_9 \texttt{Urban}+\beta_{10} \texttt{HealthCover}+\beta_{11} \texttt{ExpHealth} + \alpha(\texttt{Lat},\texttt{Lon}).
\end{align*}

For the bivariate spline smoothing, we use a triangulation with 262 triangles and 167 vertices; see Figure \ref{FIG:Mortality-tri}). It has been proved in \cite{Lai:Schumaker:07}, when $d\geq 3r+2$, the bivariate spline achieves full approximation power, and thus, we suggest of using $d=5$ and $r=1$ when we generate the Bernstein basis polynomials. Then we apply the selection approach introduced in Section 2. Figure \ref{FIG:mortality-est} (d) plots the estimated surface of the $\alpha(\cdot)$ function in the PLSM.

\begin{figure}[htbp]
	\begin{center}
		\includegraphics[scale=0.2]{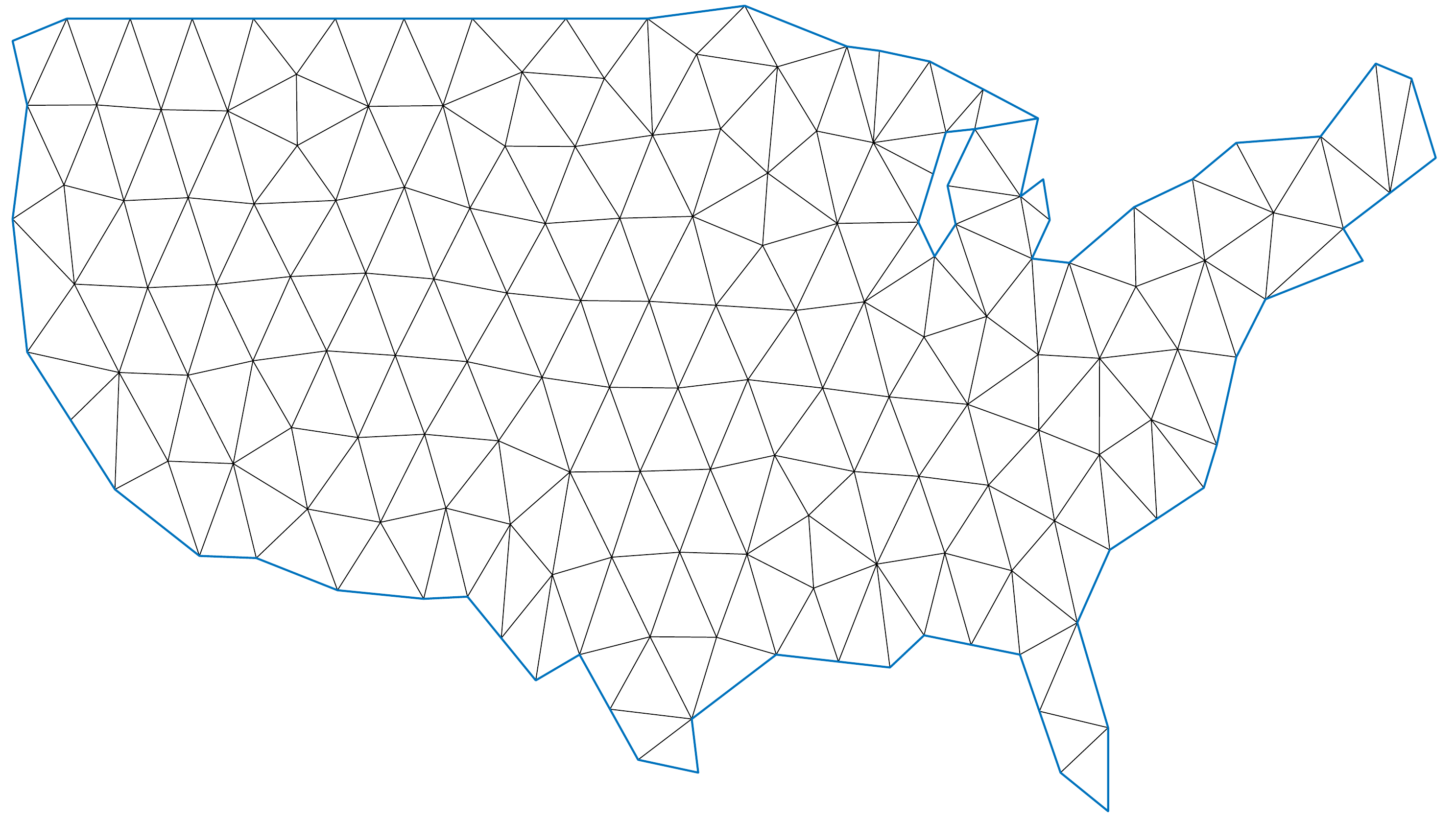}
		\caption{A triangulation of the domain of the U.S.}
		\label{FIG:Mortality-tri}
	\end{center}
\end{figure}

%%%%%%%%%%%%%%%%%%%%%%%%%%%%%%%%%%%%%%%%%%%%%%%%%%%%%%%%%%%%%
%FIG:APPdata
\begin{figure}[htbp]
	\begin{center}
		\begin{tabular}{cc}
			\includegraphics[scale=0.15]{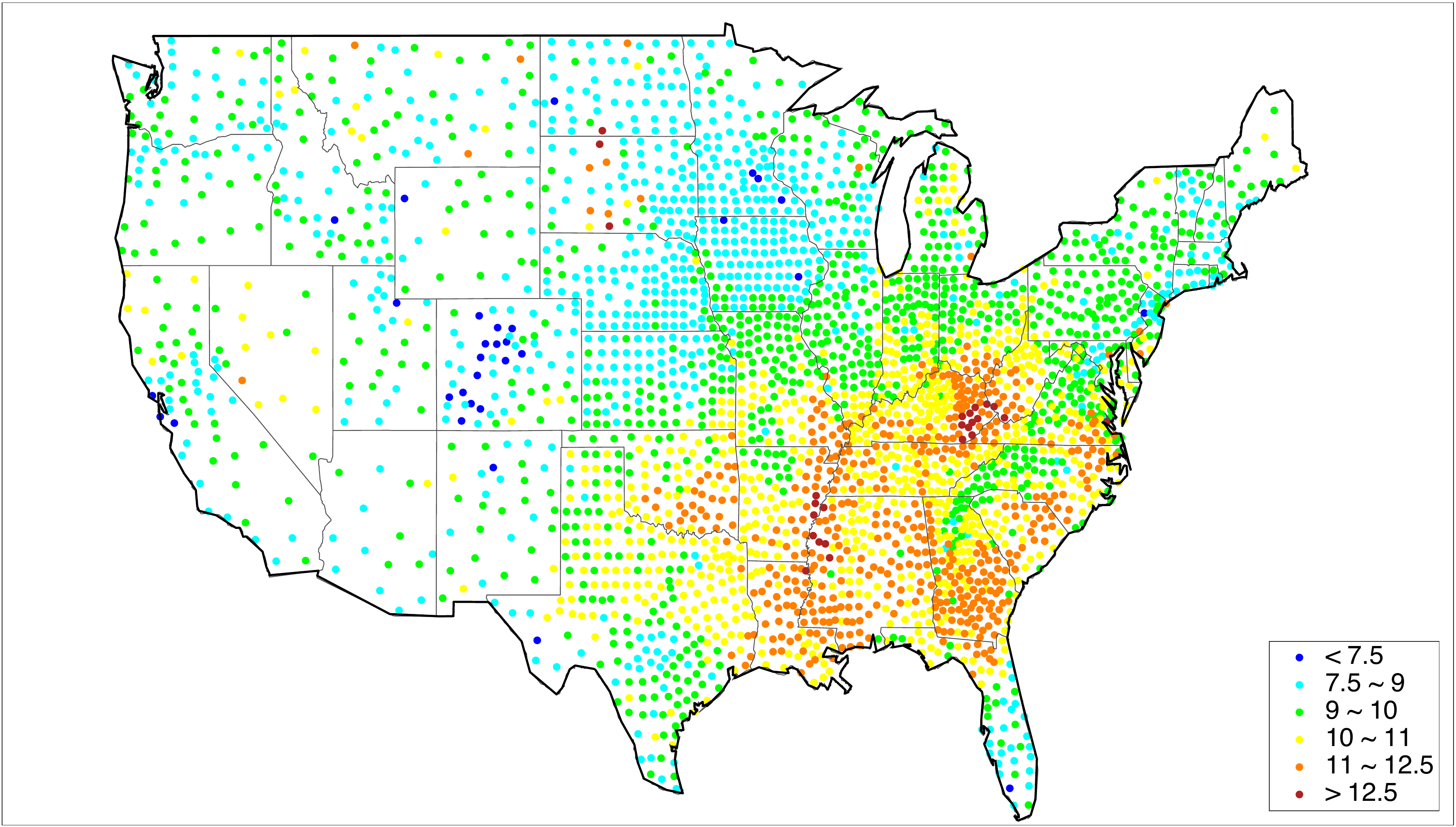} &\includegraphics[scale=0.15]{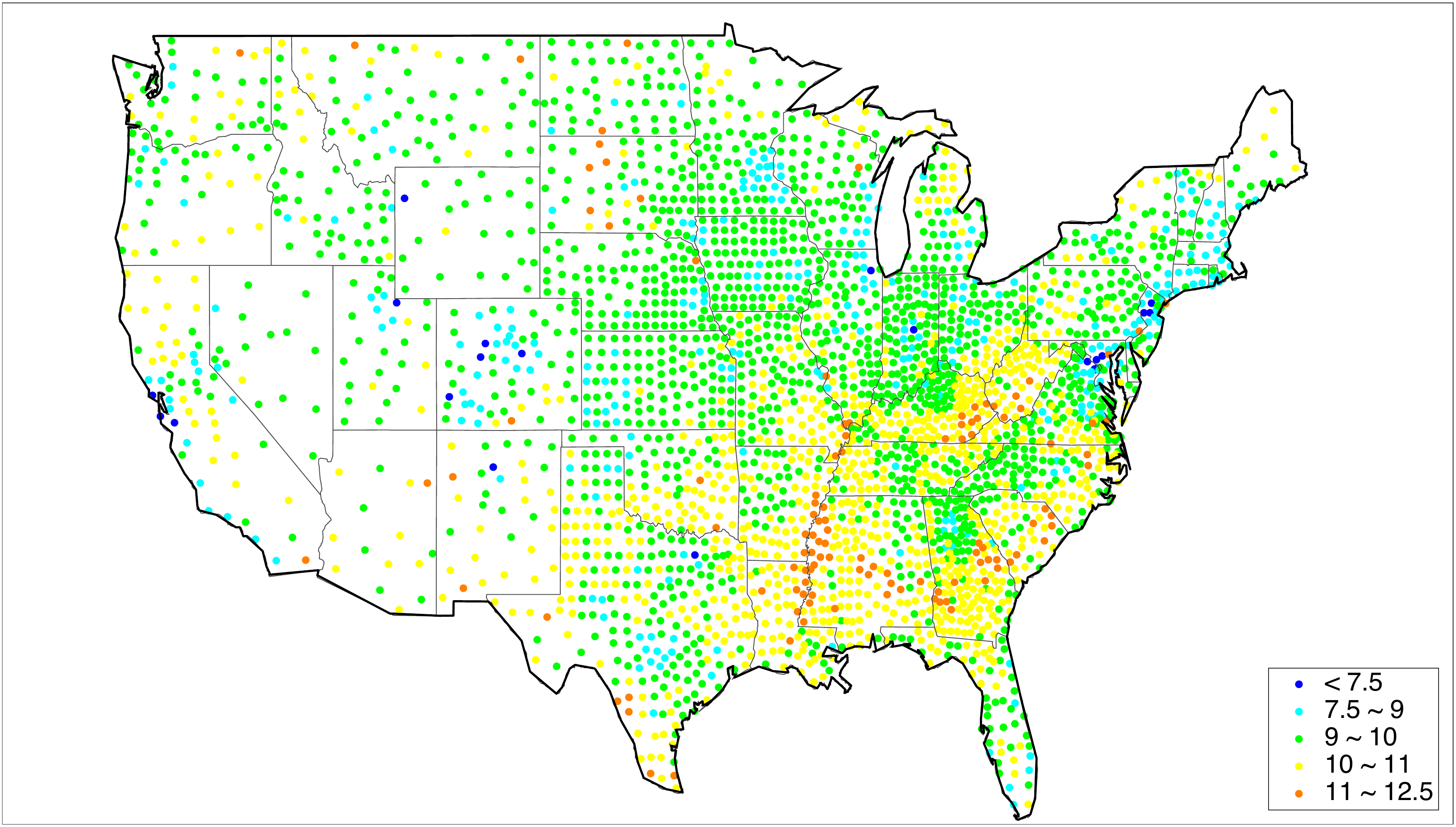}\\
			(a) &(b)\\[6pt]
			\includegraphics[scale=0.15]{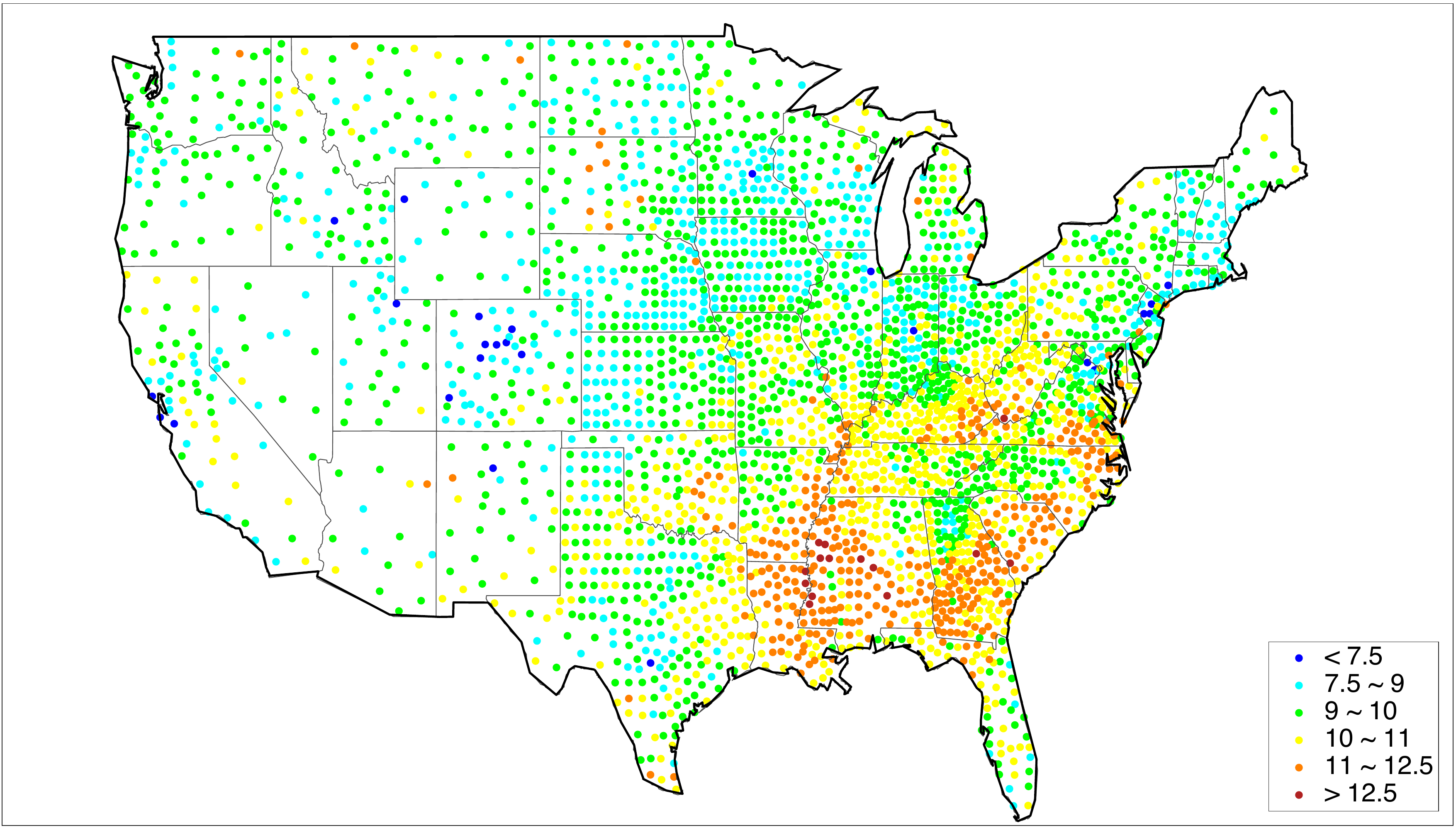} &~~~~~~\includegraphics[scale=0.2]{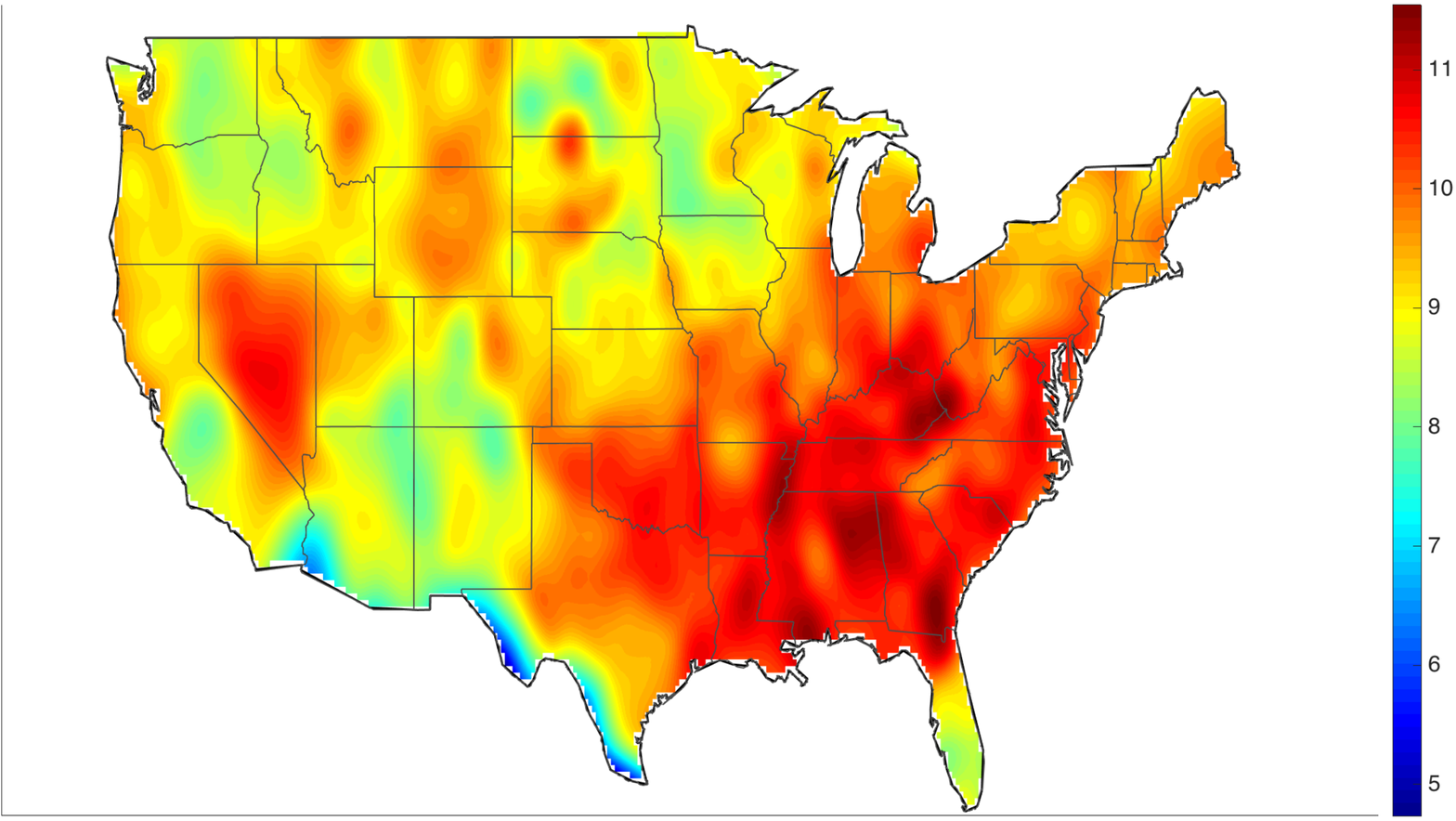}\\
			(c) &(d)\\
		\end{tabular}
	\end{center}
	\caption{(a) estimated mortality rate via the $\mathcal{S}$-PLSM with SCAD penalty; (b) estimated mortality rate via the $\mathcal{S}$-SWR with SCAD penalty; (c) estimated mortality rate via the $\mathcal{S}$-LM with SCAD penalty; (d) estimated spatial effect of $\alpha$ function via the $\mathcal{S}$-PLSM with SCAD penalty.}
	\label{FIG:mortality-est}
\end{figure}

%%%%%%%%%%%%%%%%%%%%%%%%%%%%%%%%%%%%%%%%%%%%%%%%%%%%%%%%%%%%%%%%%
The selected variables are presented in the second column in Table \ref{TAB:APPvs1}, from which one sees that $\mathcal{S}$-PLSM selects four explanatory  variables: Affluence, Disadvantage, ViolentCrime and Urban. The estimates of the coefficient (EST) and the standard errors (SE) for these selected variables with the associated $p$-values are shown in Columns 2--4 in Table \ref{TAB:APPvs1}. For comparison, we also analyze the data using the $\mathcal{S}$-SWR with a gaussian spatially weighted matrix and the na\"{i}ve $\mathcal{S}$-LM without adjusting the spatial correlation. Our method of variable selection has a strict sense of selecting variables in the sense of eliminating more variables. Table \ref{TAB:APPvs1} shows that our method provides a more parsimonious model and it eliminates four more variables among the variables selected by the $\mathcal{S}$-SWR or $\mathcal{S}$-LM. The results in Table \ref{TAB:APPvs1} also show that our method provides more accurate estimation with the mean squared error (MSE) of 0.2762, compared to the MSE of 0.8628 via $\mathcal{S}$-SWR and 0.6770 via $\mathcal{S}$-LM.

\begin{table}
	\begin{center}
		\caption{US morality rates: variable selection result.}
		\label{TAB:APPvs1}
		\scalebox{0.85}{\begin{tabular*}{\columnwidth}{@{\extracolsep{\fill}}lrrccc} \hline\hline
				\multirow{2}{*}{Variable} & \multicolumn{3}{c}{$\mathcal{S}$-PLSM} & \multirow{2}{*}{$\mathcal{S}$-SWR} &\multirow{2}{*}{$\mathcal{S}$-LM}\\ \cline{2-4}
				&EST &SE &$p$-value & \\ \hline
				AA & -- &-- &-- &\checkmark &-- \\
				HL & -- &-- &-- &\checkmark &-- \\
				Gini  &-- &-- &-- &-- &--   \\
				Affluence &$-$0.4666 &0.0160 &$<$0.0001 &\checkmark &\checkmark\\
				Disadvantage &0.4234 &0.0159 &$<$0.0001 &\checkmark &\checkmark\\
				ViolentCrime &0.0668 &0.0143 &$<$0.0001 &\checkmark &\checkmark\\
				PropertyCrime &-- &-- &--  &\checkmark &\checkmark\\
				ResidStab &-- &-- &-- &-- &--   \\
				Urban &0.1095 &0.0155 &$<$0.0001 &\checkmark &\checkmark\\
				HealthCover &-- &-- &-- &\checkmark&\checkmark\\
				ExpHealth &-- &-- &-- &\checkmark &--\\ \hline
				MSE &\multicolumn{3}{c}{0.2762} &0.8628 &0.6770\\
				MSPE &\multicolumn{3}{c}{0.4123} &0.8770 &0.6923\\ \hline\hline
		\end{tabular*}}
		
		{\footnotesize Note: ``\checkmark" indicates that variable is selected; ``--" indicates that variable is not selected.}
	\end{center}
\end{table} 

To further validate the variable selection and prediction results, we use 80\% of the observations to build the model and use the other 20\% to test the prediction accuracy. All the results are summarized based on 100 partitions. In a conclusion, we have African-American rate, social affluence, concentrated disadvantage, violent crime rate and urban rate as the selected significant variables.  Table \ref{TAB:APPvs1} shows that the mean squared prediction error (MSPE) of the mortality rate (per 1,000 population) is 0.6923 and 0.8770 for the $\mathcal{S}$-LM and $\mathcal{S}$-SWR, respectively, while the corresponding MSPE for the $\mathcal{S}$-PLSM is only 0.4123 with about $40\% \sim 50\%$ reduction.

We plot the estimated mortality rates via the $\mathcal{S}$-PLSM, the $\mathcal{S}$-SWR and the $\mathcal{S}$-LM with the SCAD penalty; see Figure \ref{FIG:mortality-est} (a)--(c), respectively. Both the $\mathcal{S}$-SWR and the $\mathcal{S}$-LM significantly underestimate the mortality rate in the South region of the U.S. and overestimate the mortality rate in the Midwest region. In contrast, the $\mathcal{S}$-PLSM fitting provides much more accurate estimates of the mortality rate.

Finally we perform model diagnostics for the $\mathcal{S}$-PLSM to check whether it adequately fits the data. Figure \ref{FIG:APP_res} (a) and (b) show a scatter plot and a histogram of the residuals of U.S. mortality rates. In addition, we conduct the Moran's I to test the spatial autoregression for the residuals. The test statistic is $-0.035$, and the $p$-value for the Moran's I test is $1$, which indicates that the process of the residuals is very likely a spatially independent random process.

%%%%%%%%%%%%%%%%%%%%%%%%%%%%%%%%%%%%%%%%%%%%%%%%%%%%%%%%%%%%%
\begin{figure}[htbp]
	\begin{center}
		\begin{tabular}{cc}
			\includegraphics[scale=0.21]{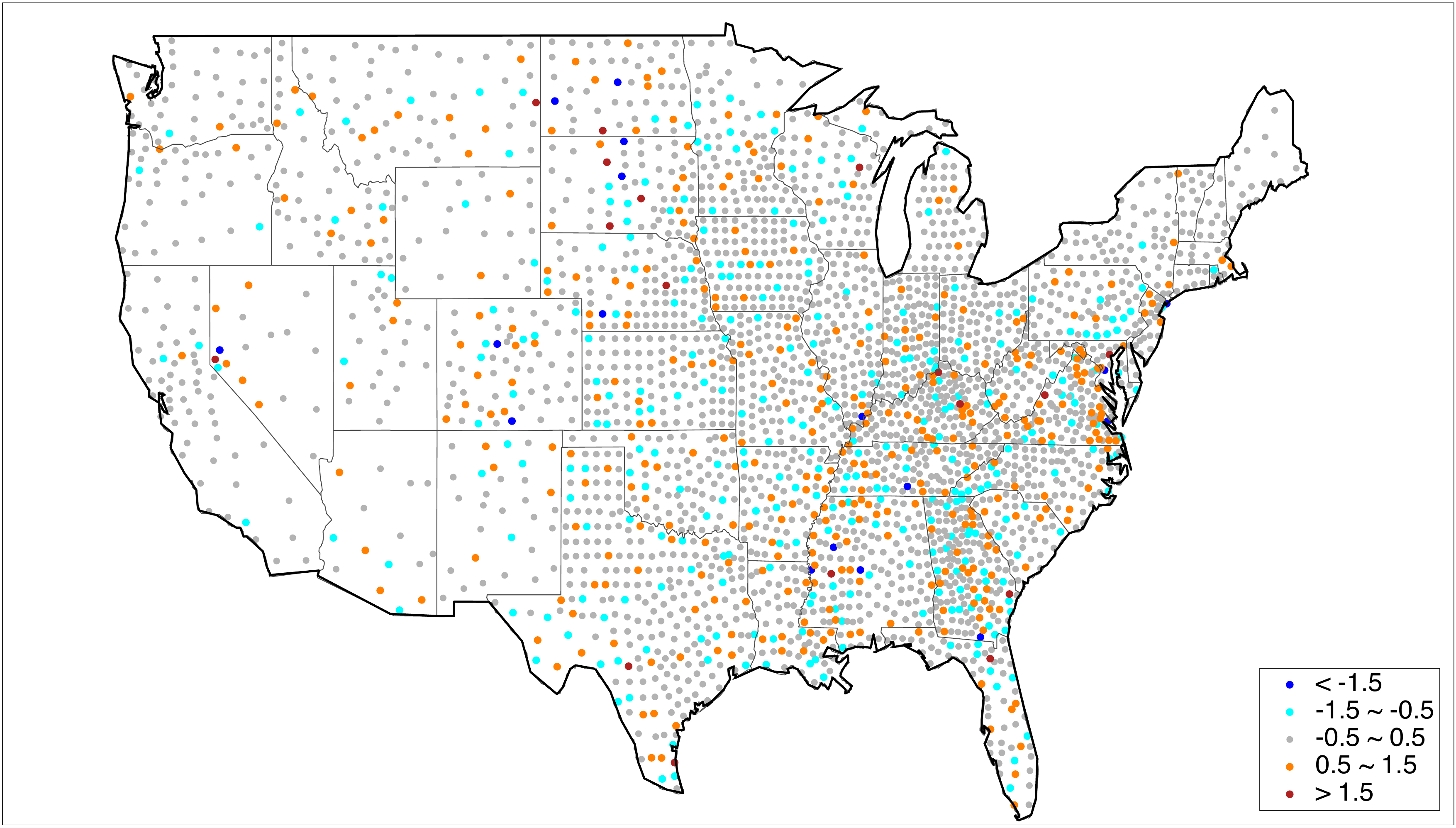} &\includegraphics[scale=0.2]{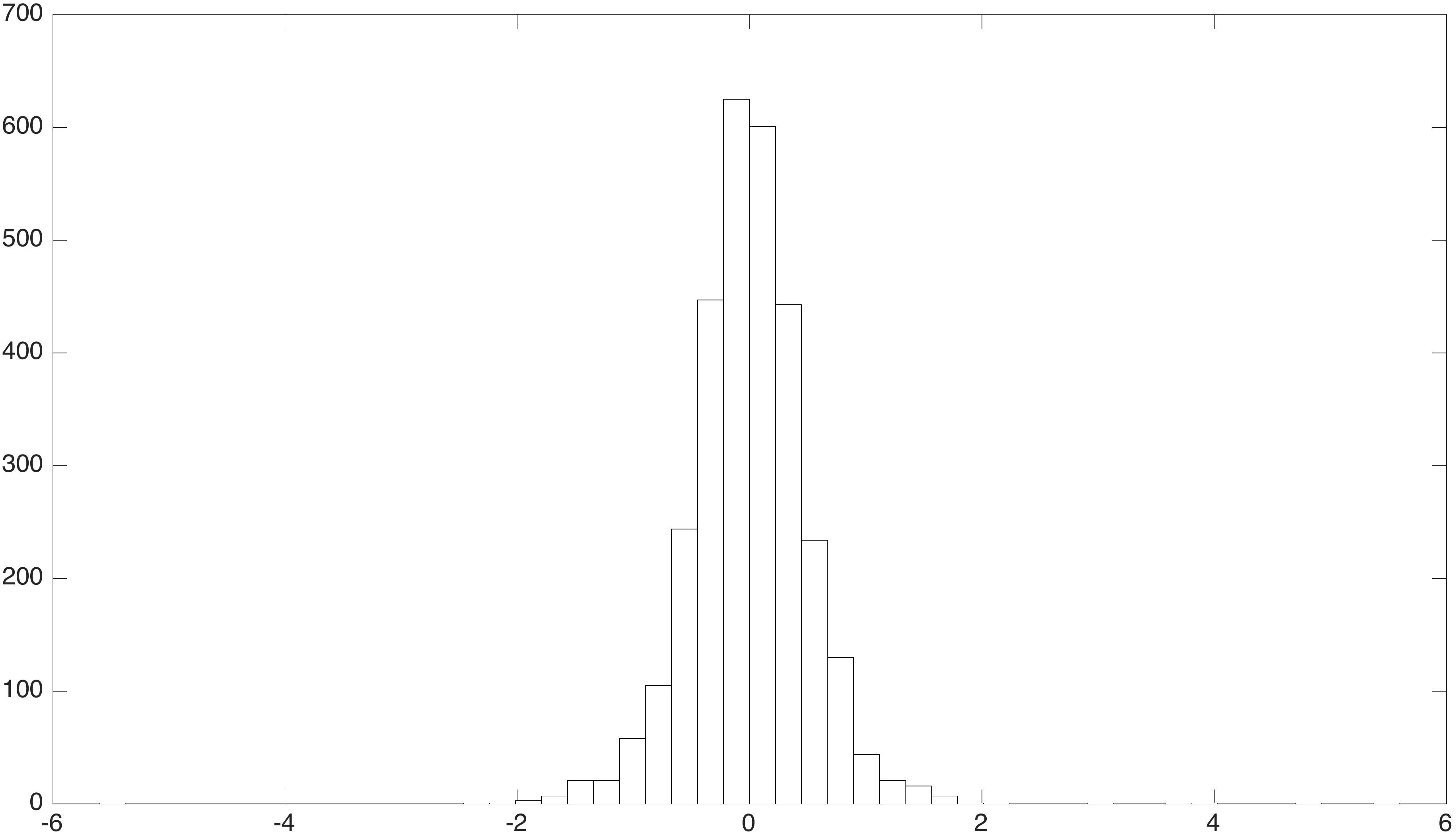} \\
			(a) &(b)\\
		\end{tabular}
	\end{center}
	\caption{(a) scatter plot and (b) histogram of the residuals of mortality rates via the $\mathcal{S}$-PLSM.}
	\label{FIG:APP_res}
\end{figure}

%%%%%%%%%%%%%%%%%%%%%%%%%%%%%%%%%%%%%%%%%%%%%%%%%%%%%%%%%%%%%%
%%%%%%%%%%%%%%%%%%%%%%%%%%%%%%%%%%%%%%%%%%%%%%%%%%%%%%%%%%%%%%
%%%%%%%%%%%%%%%%%%%%%%%%%%%%%%%%%%%%%%%%%%%%%%%%%%%%%%%%%%%%%%
\setcounter{chapter}{7} \setcounter{equation}{0} \vskip .10in
\noindent \textbf{7. Concluding Remarks} \label{SEC:conclusion} \vskip 0.05in

In this study, we propose an efficient method for simultaneous estimation and variable selection in the PLSM for spatial data distributed on complex domains. When data are collected from irregularly shaped regions, we find in simulation studies that variable selection methods developed for regression models might usually perform poorly when the spatial information is ignored or handled inappropriately. This has motivated us for developing the proposed method in this paper. We adopt bivariate splines over triangulation to avoid the ``leakage" problem in the estimation of the nonparametric spatial component. A new type of double-penalized least squares has been developed to identify and estimate the components in the PLSM simultaneously, which is sufficiently fast for the user to analyze large data sets within seconds. The ``oracle" property of the proposed estimator of the parametric part has been established, and consistency of the proposed estimator of the nonparametric part is shown. The numerical results in the simulation demonstrate much better finite sample properties of the proposed estimators compared to the regression models when the spatial effect is unadjusted or adjusted inappropriately.

The selection consistency and the ``oracle" property obtained in this paper are based on the assumption that the errors are independent. Although this assumption is not uncommon in the nonparametric spatial smoothing literature, it is more realistic to relax the independence assumption. For example, \cite{Gao:Lu:Tjstheim:06} investigated the semiparametric spatial regression model for regularly spaced grid points under some stationary and mixing conditions. However, the data collected in our study are randomly distributed over complex domains with irregular boundaries. It is challenging to define the ``mixing" condition appropriately in this case, which warrants further research. As illustrated in Example 2 in the simulation studies, the spatial dependence can be alleviated by choosing an appropriate triangulation; it may not fully vanish, and certainly, there is more future work ahead to investigate this issue.

The proposed method in this paper can be easily extended to the case that $p$ is diverging or $p\gg n$, and our simulation studies have shown that the variable selection method also performs well for those cases. In future research, we will investigate the properties and performance of the proposed method for the more challenging high/ultra-high situation.

%%%%%%%%%%%%%%%%%%%%%%%%%%%%%%%%%%%%%%%%%%%%%%%%%%%%%%%%%%%%%%
%%%%%%%%%%%%%%%%%%%%%%%%%%%%%%%%%%%%%%%%%%%%%%%%%%%%%%%%%%%%%%
%%%%%%%%%%%%%%%%%%%%%%%%%%%%%%%%%%%%%%%%%%%%%%%%%%%%%%%%%%%%%%
\setcounter{chapter}{8} \setcounter{equation}{0} \vskip .10in
\noindent \textbf{Acknowledgment}

Guannan Wang's research was partially supported by the Faculty Summer Research Grant from College of William \& Mary. The authors are very grateful to Ming-Jun Lai for providing us with the Matlab code on triangulation and bivariate spline basis construction. The authors would like to thank Lily Wang and Lei Gao for providing expertise that greatly assisted the research. The authors would like to thank the Editor, the Associate Editor and the referees for their constructive comments and suggestions.

%%%%%%%%%%%%%%%%%%%%%%%%%%%%%%%%%%%%%%%%%%%%%%%%%%%%%%%%%%%%%%
%%%%%%%%%%%%%%%%%%%%%%%%%%%%%%%%%%%%%%%%%%%%%%%%%%%%%%%%%%%%%%
%%%%%%%%%%%%%%%%%%%%%%%%%%%%%%%%%%%%%%%%%%%%%%%%%%%%%%%%%%%%%%
\setcounter{chapter}{9} \setcounter{equation}{0} \vskip .10in
\noindent \textbf{Data Availability Statement}

The datasets that support the findings of this study are openly available. The response variable is the average age-standardized mortality rates per 1,000 population based on county level over the period of 1998-2002, and it is publicly available from the Institute for Health Metrics and Evaluation \citep{ihme}. The explanatory variables in the study consist of many socioeconomic and racial/ethnic factors from year 2000, such as African-American rate, Hispanic/Latino rate, Gini coefficient, social affluence, disadvantage, violent crime rate (per 1,000 population), property crime rate (per 1,000 population), residential stability, urban rate, percentage of population without health insurance coverage and local government expenditure on health per population. Specifically, the information of Gini coefficient is publicly available at U.S. Census Bureau historical income tables (\url{https://www.census.gov/data/tables/time-series/dec/historical-income-counties.html}), and all the other explanatory variables can be obtained from U.S. Census Bureau and U.S. Federal Bureau Investigation (\url{https://www.census.gov/support/USACdataDownloads.html}). Meanwhile, the longitudes and latitudes of the geographic center of each county in the U.S. are available at \url{https://www.census.gov/geo/maps-data/data/gazetteer.html.}

%%%%%%%%%%%%%%%%%%%%%%%%%%%%%%%%%%%%%%%%%%%%%%%%%%%%%%%%%%%%%%
%%%%%%%%%%%%%%%%%%%%%%%%%%%%%%%%%%%%%%%%%%%%%%%%%%%%%%%%%%%%%%
%%%%%%%%%%%%%%%%%%%%%%%%%%%%%%%%%%%%%%%%%%%%%%%%%%%%%%%%%%%%%%
\newpage
\vskip 0.10in \noindent \textbf{Appendices}

\setcounter{chapter}{10} \renewcommand{\thetheorem}{A.\arabic{theorem}}
\renewcommand{\theproposition}{A.\arabic{proposition}}
\renewcommand{\thelemma}{A.\arabic{lemma}}
\renewcommand{\thecorollary}{A.\arabic{corollary}}
\renewcommand{\theequation}{A.\arabic{equation}} \renewcommand{\thesubsection}{A.\arabic{subsection}}
\renewcommand{\thetable}{{\arabic{table}}} \setcounter{table}{0}
\renewcommand{\thefigure}{\arabic{figure}} \setcounter{figure}{0}
\setcounter{equation}{0} \setcounter{lemma}{0} \setcounter{proposition}{0}
\setcounter{theorem}{0} \setcounter{subsection}{0}\setcounter{corollary}{0}
\vskip .05in \noindent \textbf{A. Some Preliminary Results}

For any function $f$ defined over domain $\Omega$,
let $E_{n}\left( f\right) =n^{-1}\sum_{i=1}^{n}f\left( \mathbf{X}_{i}\right) $ and $E\left(
f\right) =E[f\left( \mathbf{X}\right) ]$. Define the empirical inner product and norm as
$\left\langle f_{1},f_{2}\right\rangle _{n}=E_{n}\left( f_{1}f_{2}\right) $ and $\left\|
f_{1}\right\| _{n}^{2}=\left\langle f_{1},f_{1}\right\rangle _{n}$ for measurable
functions $f_{1}$ and $f_{2}$ on $\Omega$. The theoretical $L^2$ inner product and the induced
norm are given by $\left\langle f_{1},f_{2}\right\rangle _{L^2}=E\left(f_{1}f_{2}\right) $
and $\left\| f_{1}\right\|_{L^2}^{2}=\left\langle f_{1},f_{1}\right\rangle
_{L^2}$. Furthermore, let $\left\| \cdot\right\|
_{\mathcal{E}_{\upsilon }}$ be the norm introduced by the inner product $\left\langle \cdot, \cdot\right\rangle _{\mathcal{E}_{\upsilon
}}$, where, for $g_{1}$ and $g_{2}$ on $\Omega$,
\begin{equation*}
\left\langle g_{1},g_{2}\right\rangle_{\mathcal{E}_{\upsilon}} = \int_\Omega \sum_{i+j=\upsilon}
\binom{\upsilon}{i}
\left(\frac{\partial^{(\upsilon)}}{\partial x_1^i \partial x_2^j} g_1\right)
\left(\frac{\partial^{(\upsilon)}}{\partial x_1^i \partial x_2^j}g_2\right)dx_{1}dx_{2}.
\end{equation*}

We cite Lemma 2 in the Supplement of \cite{Lai:Wang:13} below, which shows that the uniform difference between the empirical and theoretical inner products is negligible.

\begin{lemma}
	\label{LEM:Rnorder}
	Let $f_{1}=\sum_{\xi \in \mathcal{K}}c_{\xi}B_{\xi }$, $f_{2}=\sum_{\zeta \in \mathcal{K}}\widetilde{c}_{\zeta}B_{\zeta }$ be any spline functions in $\mathbb{S}$. Under Assumption 7, we have
	\[
	\sup\limits_{f_{1},f_{2}\in \mathbb{S}}\left|
	\frac{\left\langle f_{1},f_{2}\right\rangle_{n}-\left\langle f_{1},f_{2}\right\rangle _{L^2}}{\left\|
		f_{1}\right\| _{L^2}\left\| f_{2}\right\| _{L^2}}\right| =O_{P}\left\{(N\log n)^{1/2}/{n}^{1/2}\right\}.
	\]
\end{lemma}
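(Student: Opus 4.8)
The plan is to convert the functional supremum into the spectral norm of a normalized, centered Gram matrix, and then control that spectral norm by a matrix concentration inequality. Writing every spline as $f_1=\mathbf{B}(\cdot)^{\top}\mathbf{c}$ and $f_2=\mathbf{B}(\cdot)^{\top}\widetilde{\mathbf{c}}$ with coefficient vectors $\mathbf{c},\widetilde{\mathbf{c}}\in\mathbb{R}^{N}$, and setting $\mathbf{G}_{n}=n^{-1}\sum_{i=1}^{n}\mathbf{B}(\mathbf{X}_{i})\mathbf{B}(\mathbf{X}_{i})^{\top}$ and $\mathbf{G}=E\{\mathbf{B}(\mathbf{X})\mathbf{B}(\mathbf{X})^{\top}\}$, the numerator in the lemma is $\mathbf{c}^{\top}(\mathbf{G}_{n}-\mathbf{G})\widetilde{\mathbf{c}}$ and the denominator is $(\mathbf{c}^{\top}\mathbf{G}\mathbf{c})^{1/2}(\widetilde{\mathbf{c}}^{\top}\mathbf{G}\widetilde{\mathbf{c}})^{1/2}$. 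Substituting $\mathbf{u}=\mathbf{G}^{1/2}\mathbf{c}$, $\mathbf{v}=\mathbf{G}^{1/2}\widetilde{\mathbf{c}}$ and maximizing over unit vectors shows the supremum equals $\|\mathbf{M}_{n}\|$, the operator norm of $\mathbf{M}_{n}=\mathbf{G}^{-1/2}(\mathbf{G}_{n}-\mathbf{G})\mathbf{G}^{-1/2}$. Since the smooth space $\mathbb{S}$ corresponds to coefficient vectors lying in a subspace of $\mathbb{R}^{N}$, bounding $\|\mathbf{M}_{n}\|$ over all of $\mathbb{R}^{N}$ gives an upper bound, which is all that is needed; and for this reduction to be legitimate $\mathbf{G}$ must be positive definite, which I verify next.

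Observe that $\mathbf{M}_{n}=n^{-1}\sum_{i=1}^{n}(\mathbf{W}_{i}\mathbf{W}_{i}^{\top}-\mathbf{I})$ is a sum of i.i.d.\ centered matrices, where $\mathbf{W}_{i}=\mathbf{G}^{-1/2}\mathbf{B}(\mathbf{X}_{i})$ and $E(\mathbf{W}_{i}\mathbf{W}_{i}^{\top})=\mathbf{I}$. Two standard facts about stable bivariate spline bases \citep{Lai:Schumaker:07}, valid under Assumptions 5--6, drive the estimate. First, the $L^{2}$-stability of the Bernstein basis gives $\|\mathbf{B}(\cdot)^{\top}\mathbf{c}\|_{L^2}^{2}\asymp|\triangle|^{2}\|\mathbf{c}\|^{2}$ uniformly, so the eigenvalues of $\mathbf{G}$ are of order $|\triangle|^{2}\asymp N^{-1}$; in particular $\mathbf{G}$ is invertible and $\lambda_{\min}(\mathbf{G})^{-1}\lesssim N$. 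Second, at any point $\mathbf{x}$ only the $\binom{d+2}{2}=O(1)$ basis polynomials supported on the triangle containing $\mathbf{x}$ are nonzero, each bounded by $1$, so $\sup_{\mathbf{x}}\|\mathbf{B}(\mathbf{x})\|^{2}=O(1)$. Combining these, $\|\mathbf{W}_{i}\|^{2}\le\lambda_{\min}(\mathbf{G})^{-1}\sup_{\mathbf{x}}\|\mathbf{B}(\mathbf{x})\|^{2}\lesssim N$ almost surely.

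With these bounds I would apply the matrix Bernstein inequality to $\mathbf{M}_{n}$. The uniform bound on the summands is $R=\max_{i}n^{-1}\|\mathbf{W}_{i}\mathbf{W}_{i}^{\top}-\mathbf{I}\|\lesssim N/n$, and since $E\{(\mathbf{W}\mathbf{W}^{\top}-\mathbf{I})^{2}\}=E(\|\mathbf{W}\|^{2}\mathbf{W}\mathbf{W}^{\top})-\mathbf{I}\preceq CN\mathbf{I}$, the variance proxy is $\sigma^{2}=\big\|\sum_{i}E[\{n^{-1}(\mathbf{W}_{i}\mathbf{W}_{i}^{\top}-\mathbf{I})\}^{2}]\big\|\lesssim N/n$. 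Matrix Bernstein then gives $P(\|\mathbf{M}_{n}\|\ge t)\le 2N\exp\{-(t^{2}/2)/(\sigma^{2}+Rt/3)\}$. Choosing $t=C_{0}(N\log n/n)^{1/2}$ makes $t\to0$ under Assumption 7, which forces $N\asymp K\lesssim n^{1/3}=o(n)$, so $\sigma^{2}$ dominates $Rt$ and the exponent is of order $-\log n$; since $\log N\lesssim\log n$ the prefactor $2N$ is absorbed and the bound is $o(1)$ for $C_{0}$ large. Hence $\|\mathbf{M}_{n}\|=O_{P}\{(N\log n)^{1/2}/n^{1/2}\}$, exactly the asserted rate.

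The main obstacle is the first spline fact: the uniform lower eigenvalue bound $\lambda_{\min}(\mathbf{G})\gtrsim|\triangle|^{2}$, i.e.\ the $L^{2}$-stability of the suitably scaled Bernstein basis over the whole quasi-uniform family, which is precisely where Assumptions 5--6 and the local Markov/norming estimates of \cite{Lai:Schumaker:07} enter; the companion bound $\sup_{\mathbf{x}}\|\mathbf{B}(\mathbf{x})\|^{2}=O(1)$ and the identity $E(\mathbf{W}\mathbf{W}^{\top})=\mathbf{I}$ are comparatively routine. A second point requiring care is confirming that one remains in the variance-dominated regime of Bernstein's inequality, which is guaranteed by the ceiling $\gamma\le1/3$ in Assumption 7. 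If a self-contained argument avoiding matrix concentration is preferred, one can instead discretize the unit sphere of $\mathbb{R}^{N}$ by an $\varepsilon$-net, apply the scalar Bernstein inequality to $\mathbf{u}^{\top}\mathbf{M}_{n}\mathbf{v}$ for each net pair, and take a union bound; there the delicate step is to use the local support of the basis so as to recover the sharp factor $(N\log n)^{1/2}$ rather than the cruder $N/n^{1/2}$ that a naive net bound would produce.
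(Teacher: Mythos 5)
Your proposal is correct: the reduction of the supremum to the spectral norm of $\mathbf{M}_{n}=\mathbf{G}^{-1/2}(\mathbf{G}_{n}-\mathbf{G})\mathbf{G}^{-1/2}$ is valid (the smoothness constraints only shrink the feasible set, so bounding over all of $\mathbb{R}^{N}$ suffices), the two spline facts you invoke (eigenvalues of $\mathbf{G}$ of order $|\triangle|^{2}\asymp N^{-1}$ under Assumptions 5--6, and $\sup_{\mathbf{x}}\|\mathbf{B}(\mathbf{x})\|^{2}=O(1)$ by local support) are exactly the standard stability estimates, and your regime check that $Rt=o(\sigma^{2})$ because $N\log n=o(n)$ under Assumption 7 is what makes the choice $t=C_{0}(N\log n/n)^{1/2}$ deliver the stated rate. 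However, your route is genuinely different from the paper's: the paper gives no proof at all, simply quoting Lemma 2 in the Supplement of \cite{Lai:Wang:13}. The cited proof is elementary rather than matrix-analytic: it expands $\langle f_{1},f_{2}\rangle_{n}-\langle f_{1},f_{2}\rangle_{L^2}$ over pairs of Bernstein basis functions, applies the scalar Bernstein inequality to each term $\langle B_{\xi},B_{\zeta}\rangle_{n}-\langle B_{\xi},B_{\zeta}\rangle_{L^2}$, takes a union bound over the $O(N)$ pairs with overlapping support, and then reassembles via Cauchy--Schwarz together with the stability bound $\|\mathbf{c}\|\,\|\widetilde{\mathbf{c}}\|\lesssim N\|f_{1}\|_{L^2}\|f_{2}\|_{L^2}$ --- precisely the bookkeeping you anticipate in your closing remark about recovering the sharp factor, except that the union bound is taken over basis pairs (where locality keeps the count at $O(N)$) rather than over an $\varepsilon$-net of the sphere. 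What your approach buys is a self-contained, conceptually cleaner argument in which the $\sqrt{\log n}$ factor emerges automatically from the dimensional prefactor $2N$ in the matrix Bernstein inequality; what the cited approach buys is avoidance of matrix concentration machinery, at the cost of tracking the overlap structure of the basis by hand. Both arguments rest on the same two ingredients and yield the identical rate $O_{P}\{(N\log n)^{1/2}/n^{1/2}\}$.
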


Following Lemma A.7 in \cite{Wang:Wang:Lai:Gao:18}, it is easy to obtain the following result in Lemma \ref{LEM:div2}.

\begin{lemma}
	Under Assumptions 1, 2, 7 and 8, there exist constants $0 < c_{Z} < C_{Z} < \infty$, such that with probability approaching 1 as $n\rightarrow \infty$, $c_{Z}\mathbf{I}_{p\times p}
	\leq n^{-1}(\mathbf{Z}-\widehat{\mathbf{Z}})^{\top}(\mathbf{Z}-\widehat{\mathbf{Z}}) \leq C_{Z}\mathbf{I}_{p\times p}$, where $\widehat{\mathbf{Z}}=\mathbf{H}_{\mathbf{B}}(\lambda_{1})\mathbf{Z}$ with  $\mathbf{H}_{\mathbf{B}}(\lambda_{1})$ in (\ref{DEF:HB}).
	\label{LEM:div2}
\end{lemma}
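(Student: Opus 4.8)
The plan is to compare the spline-residualized design $\mathbf{Z}-\widehat{\mathbf{Z}}$ with the ``oracle'' residual $\mathbf{Z}-\widetilde{\mathbf{Z}}$, whose $j$th column has entries $Z_{ij}-h_j(\mathbf{X}_i)$, and to show that replacing each unknown projection $h_j$ by its penalized bivariate spline estimator perturbs the Gram matrix only negligibly. First I would write $\mathbf{Z}-\widehat{\mathbf{Z}}=(\mathbf{Z}-\widetilde{\mathbf{Z}})-(\widehat{\mathbf{Z}}-\widetilde{\mathbf{Z}})$, so that, with $\mathbf{C}_n=\tfrac1n(\mathbf{Z}-\widetilde{\mathbf{Z}})^{\top}(\widehat{\mathbf{Z}}-\widetilde{\mathbf{Z}})$,
\[
\tfrac1n(\mathbf{Z}-\widehat{\mathbf{Z}})^{\top}(\mathbf{Z}-\widehat{\mathbf{Z}})
=\tfrac1n(\mathbf{Z}-\widetilde{\mathbf{Z}})^{\top}(\mathbf{Z}-\widetilde{\mathbf{Z}})
-\mathbf{C}_n-\mathbf{C}_n^{\top}
+\tfrac1n(\widehat{\mathbf{Z}}-\widetilde{\mathbf{Z}})^{\top}(\widehat{\mathbf{Z}}-\widetilde{\mathbf{Z}}).
\]
The argument then reduces to (i) pinning down the limit and the eigenvalue range of the first term, and (ii) showing the remaining two terms vanish in operator norm.

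For (i), set $\boldsymbol{\Sigma}^{*}=E[(\mathbf{Z}_i-\widetilde{\mathbf{Z}}_i)(\mathbf{Z}_i-\widetilde{\mathbf{Z}}_i)^{\top}]$, where $\widetilde{\mathbf{Z}}_i$ has $j$th entry $h_j(\mathbf{X}_i)$. Since the summands are i.i.d.\ and bounded by Assumption~1, the strong law gives $n^{-1}(\mathbf{Z}-\widetilde{\mathbf{Z}})^{\top}(\mathbf{Z}-\widetilde{\mathbf{Z}})\to\boldsymbol{\Sigma}^{*}$ entrywise, hence in operator norm. Because $\mathbf{Z}_i-\widetilde{\mathbf{Z}}_i=\mathbf{Z}_i-E(\mathbf{Z}_i\mid\mathbf{X}_i)$, we have $E[(\mathbf{Z}_i-\widetilde{\mathbf{Z}}_i)(\mathbf{Z}_i-\widetilde{\mathbf{Z}}_i)^{\top}\mid\mathbf{X}_i]=\mathrm{Cov}(\mathbf{Z}_i\mid\mathbf{X}_i)$, which is exactly the Schur complement of the leading entry of $E\{(1~\mathbf{Z}_i^{\top})^{\top}(1~\mathbf{Z}_i^{\top})\mid\mathbf{X}_i\}$. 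Assumption~2 forces the eigenvalues of the latter to be bounded away from $0$, and since a Schur complement's smallest eigenvalue dominates that of the full matrix, $\mathrm{Cov}(\mathbf{Z}_i\mid\mathbf{X}_i)$ is uniformly bounded below; averaging over $\mathbf{X}_i$ then bounds the smallest eigenvalue of $\boldsymbol{\Sigma}^{*}$ below, while Assumption~1 bounds it above. This fixes the target constants $c_Z,C_Z$.

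For (ii), the $j$th column of $\widehat{\mathbf{Z}}-\widetilde{\mathbf{Z}}$ is the gap between the penalized spline smoother $\mathbf{H}_{\mathbf{B}}(\lambda_1)$ applied to $z_j$ and the true projection $h_j$ at the design points. Writing $Z_{ij}=h_j(\mathbf{X}_i)+\{Z_{ij}-h_j(\mathbf{X}_i)\}$ exhibits $\widehat{\mathbf{Z}}_j$ as a penalized bivariate spline regression with a conditionally mean-zero, bounded error, so the penalized-spline convergence theory (Lemma~A.7 of \cite{Wang:Wang:Lai:Gao:18}, which also underlies Corollary~\ref{THM:g-convergence}) applies. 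Under Assumptions~4, 6, 7 and 8 this yields $n^{-1}\|\widehat{\mathbf{Z}}_j-\widetilde{\mathbf{Z}}_j\|^2=o_{P}(1)$ for each of the finitely many $j$, whence $n^{-1}(\widehat{\mathbf{Z}}-\widetilde{\mathbf{Z}})^{\top}(\widehat{\mathbf{Z}}-\widetilde{\mathbf{Z}})\to\mathbf{0}$ and, by Cauchy--Schwarz against the $O_{P}(1)$ bound from (i), $\mathbf{C}_n=o_{P}(1)$. Weyl's inequality then transfers the eigenvalue bounds of $\boldsymbol{\Sigma}^{*}$ to the perturbed matrix, proving the claim with probability tending to one.

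The main obstacle is step (ii): showing that the penalized bivariate spline fit of each bounded covariate consistently recovers its $L^2$ projection $h_j$ in empirical norm. This is where the full force of the assumptions enters --- the approximation power of $\mathbb{S}^{r}_{3r+2}(\triangle)$ for $h_j\in W^{\ell+1,\infty}(\Omega)$ (Assumptions~4 and 6), the control $K=Cn^{\gamma}$ on the spline dimension (Assumption~7), and the undersmoothing bound $\lambda_1=o(n^{1/2}K^{-1})$ keeping the roughness penalty from inflating the bias (Assumption~8). One should also invoke Lemma~\ref{LEM:Rnorder} to pass between the empirical and $L^2$ norms of the spline error. Since all of these inputs are already in hand, the remaining work --- separating the bias and stochastic parts of the smoother error and summing over $j=1,\dots,p$ --- is routine.
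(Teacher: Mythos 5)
Your proposal is correct and follows essentially the same route the paper intends: the paper's entire ``proof'' of this lemma is the one-line remark that it follows from Lemma A.7 of \cite{Wang:Wang:Lai:Gao:18} (the same penalized-spline convergence result you invoke), so your decomposition into the oracle Gram matrix $n^{-1}(\mathbf{Z}-\widetilde{\mathbf{Z}})^{\top}(\mathbf{Z}-\widetilde{\mathbf{Z}})$ plus an $o_P(1)$ spline-estimation perturbation, with the Schur-complement argument converting Assumption~2 into a lower bound on the eigenvalues of $E[(\mathbf{Z}_i-\widetilde{\mathbf{Z}}_i)(\mathbf{Z}_i-\widetilde{\mathbf{Z}}_i)^{\top}]$, is exactly the content the paper leaves implicit (and which it carries out in the same style, via empirical inner-product decompositions, in the proof of Theorem \ref{THM:ORACLE}). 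One caveat: your step (ii) invokes Assumptions 4--6 (smoothness of the $h_j$, the density bound, and quasi-uniformity of $\triangle$) on top of the stated Assumptions 1, 2, 7, 8; this mismatch is inherited from the paper's terse statement rather than a flaw in your argument, and it is harmless here because the lemma is only ever applied under the full set of Assumptions 1--8.
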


In the following, for any bivariate function $f(\cdot)$ and $\lambda>0$, define
\begin{equation*}
s_{\lambda,f}=\mathrm{argmin}_{s\in \mathbb{S}} \sum_{i=1}^{n}\{f(\mathbf{X}_{i})-s(\mathbf{X}_{i})\}^{2}+\lambda\mathcal{E}_{\upsilon}(s)
%\label{DEF:s_lambdag}
\end{equation*}
the penalized spline estimator of $f(\cdot)$. Then $s_{0,f}$ is the nonpenalized estimator of $f(\cdot)$.

Let $\nabla L(\bs{\beta})$ and $\nabla^{2} L(\bs{\beta})$ be the first order and second order partial derivatives of $L(\bs{\beta})$  in (6), then
$\nabla L\left(\bs{\beta}\right) =-(\mathbf{Z}-\widehat{\mathbf{Z}})^{\top}(\mathbf{Y}-\mathbf{Z}\bs{\beta})$ and
$\nabla^{2} L\left(\bs{\beta}\right) =(\mathbf{Z}-\widehat{\mathbf{Z}})^{\top}\mathbf{Z}$,
where
\begin{equation}
\widehat{\mathbf{Z}}=\mathbf{H}_{\mathbf{B}}(\lambda_{1})\mathbf{Z},
\label{DEF:Zhat}
\end{equation}
and according to the proof of Lemma A.10 in \cite{Wang:Wang:Lai:Gao:18}, $n^{-1}\nabla^{2} L\left(\bs{\beta}\right)=n^{-1}(\mathbf{Z}-\widehat{\mathbf{Z}})^{\top}(\mathbf{Z}-\widehat{\mathbf{Z}})+o_P(1)$.

%%%%%%%%%%%%%%%%%%%%%%%%%%%%%%%%%%%%%%%%%%%%%%%%%%%%%%%%%%%%%%
%%%%%%%%%%%%%%%%%%%%%%%%%%%%%%%%%%%%%%%%%%%%%%%%%%%%%%%%%%%%%%
%%%%%%%%%%%%%%%%%%%%%%%%%%%%%%%%%%%%%%%%%%%%%%%%%%%%%%%%%%%%%%
\newpage
\setcounter{chapter}{9} \renewcommand{\thetheorem}{B.\arabic{theorem}}
\renewcommand{\theproposition}{B.\arabic{proposition}}
\renewcommand{\thelemma}{B.\arabic{lemma}}
\renewcommand{\thecorollary}{B.\arabic{corollary}}
\renewcommand{\theequation}{B.\arabic{equation}} \renewcommand{\thesubsection}{B.\arabic{subsection}}
\renewcommand{\thetable}{{B.\arabic{table}}} \setcounter{table}{0}
\renewcommand{\thefigure}{B.\arabic{figure}} \setcounter{figure}{0}
\setcounter{equation}{0} \setcounter{lemma}{0} \setcounter{proposition}{0}
\setcounter{theorem}{0} \setcounter{subsection}{0} \setcounter{corollary}{0}
\vskip .05in \noindent \textbf{B. Proof of Theorem 1}

Let $\tau _{n}=n^{-1/2}+a_{n,\lambda_{2}}$. It suffices to show that for any given $\zeta >0$, there exists a large constant $C$ such that
\begin{equation}
\mathrm{Pr}\left\{\sup_{\Vert \mathbf{u}\Vert =C}R(\bs{\beta}%
_{0}+\tau _{n}\mathbf{u})>R(\bs{\beta}_{0})\right\}\geq
1-\zeta .  \label{EQ:existence}
\end{equation}%
Let $U_{n,1}=L(\bs{\beta}_{0}+\tau _{n}%
\mathbf{u})-L(\bs{\beta}_{0})$ and $U_{n,2}=n\sum_{k=1}^{q}\{p_{%
	\lambda _{2}}(|\beta _{k0}+\tau _{n}u_{k}|)-p_{\lambda _{2}}(|\beta
_{k0}|)\} $, where $q$ is the number of components of $\bs{\beta}_{10}$%
. Note that $p_{\lambda _{2}}\left( 0\right) =0$ and $p_{\lambda _{2}}\left(
|\beta |\right) \geq 0$ for all $\beta $. Thus, $R(\bs{\beta}_{0}+\tau _{n}\mathbf{u})-R(\bs{\beta}_{0})\geq
U_{n,1}+U_{n,2}$.

For $U_{n,1}$, we have $L(\bs{\beta}_{0}+\tau _{n}\mathbf{u})=L(\bs{\beta}_{0})+\tau _{n}\mathbf{u}^{\top}\nabla{L}(\bs{\beta}_{0})+\frac{1}{2}%
\tau _{n}^{2}\mathbf{u}^{\top}\nabla^{2}{L}(\bs{\beta}^{\ast})\mathbf{u}, $
where $\bs{\beta}^{\ast }=t( \bs{\beta}_{0}+\tau_n\mathbf{u}) +\left(
1-t\right) \bs{\beta}_{0}$, $t\in \lbrack 0,1]$, and $\nabla^{2}{L}(\bs{\beta}_{0})=(\mathbf{Z}-\widehat{\mathbf{Z}})^{\top}\mathbf{Z}$ with $\widehat{\mathbf{Z}}$ defined in (\ref{DEF:Zhat}). Let $\bs{\alpha}_{0}=(\alpha_{0}(\mathbf{X}_{1}),\ldots, \alpha_{0}(\mathbf{X}_{n}))^{\top}$.  Note that $-\nabla{L}(\bs{\beta}_{0})$ is equal to
\[
(\mathbf{Z}-\widehat{\mathbf{Z}})^{\top}(\mathbf{Y}-\mathbf{Z}\bs{\beta}_{0}) =(\mathbf{Z}-\widehat{\mathbf{Z}})^{\top}(\bs{\alpha}_{0}+\bs{\epsilon})
=\mathbf{Z}^{\top}\{\mathbf{I}-\mathbf{H}_{\mathbf{B}}(\lambda_{1})\}\bs{\alpha}_{0}
+\mathbf{Z}^{\top}\{\mathbf{I}-\mathbf{H}_{\mathbf{B}}(\lambda_{1})\}\bs{\epsilon}.
\]
Denote $\mathbf{Z}_{j}^{\top}=(Z_{1j},...,Z_{nj})$, and let $W_{j}=n^{-1}\mathbf{Z}_{j}^{\top}\{\mathbf{I}-\mathbf{H}_{\mathbf{B}}(\lambda_{1})\}\bs{\alpha}_{0}$, then, similar to the proof of Lemma A.7 in \cite{Wang:Wang:Lai:Gao:18}, we can decompose $W_j$ as follows:
\[
W_{j}=\langle z_{j}-h_{j}, \alpha_{0} -
s_{\lambda_1,\alpha_0} \rangle_{n}+\langle h_{j}-%
\widetilde{h}_{j}, \alpha_{0} -s_{\lambda_1,\alpha_0} \rangle_{n}
+\frac{\lambda_{1}}{n}\langle s_{\lambda_1,\alpha_0}, \widetilde{h}_{j} \rangle_{\mathcal{E}_{\upsilon}}= W_{j,1}+W_{j,2}+W_{j,3},
\]
where $h_{j}(\cdot)$ is defined in (11), and  $\widetilde{h}_{j}\in \mathbb{S}$ satisfy
\begin{equation}
\|\widetilde{h}_{j}-h_{j}\|_{\infty}\leq C \left| \triangle \right| ^{\ell +1}\left|h_{j}\right|_{\ell+1,\infty}.
\label{EQ:h_j_tilde}
\end{equation}
By Proposition 1 in \cite{Lai:Wang:13}, one has
\[
\left\|\alpha_{0}-s_{\lambda_1,\alpha_0}\right\| _{\infty}=O_{P}\left\{\left| \triangle \right| ^{\ell +1}\left|\alpha_{0}\right|
_{\ell+1,\infty}+\frac{\lambda_{1}}{n\left| \triangle \right| ^{3}}%
\left( \left|\alpha_{0}\right| _{2,\infty}+\left| \triangle \right| ^{\ell -1}\left|
\alpha_{0}\right| _{\ell +1,\infty}\right) \right\}.
\]
Next, note that $E\left(W_{j,1}\right) =0$, and
\[
\textrm{Var}\left( W_{j,1}\right) =\frac{1}{n^{2}}\sum\limits_{i=1}^{n}E\left[
\left\{Z_{ij} -h_{j}(\mathbf{X}_{i})\right\}\left(\alpha_{0}-s_{\lambda_1,\alpha_0}\right) \right] ^{2}
\leq \frac{\|\alpha_{0}-s_{\lambda_1,\alpha_0}\|_{\infty}^2 }{n}
\left\Vert z_{j}-h_{j}\right\Vert _{L^{2}}^{2},
\]
so one has
\begin{equation}
\left\vert W_{j,1}\right\vert =O_{P}\left\{\frac{\left| \triangle \right| ^{\ell +1}}{%
	n^{1/2}}\left|\alpha_{0}\right|
_{\ell+1,\infty}+\frac{\lambda_{1}}{n^{3/2}\left| \triangle \right| ^{3}}%
\left( \left|\alpha_{0}\right| _{2,\infty}+\left| \triangle \right| ^{\ell -1}\left|
\alpha_{0}\right| _{\ell +1,\infty}\right) \right\}.
\label{EQ:W1}
\end{equation}
For $W_{j,2}$, one has
\begin{align}
|W_{j,2}| \leq& \| h_{j}-\widetilde{h}_{j}\|_{n}\left\Vert \alpha_{0} -s_{\lambda_1,\alpha_0}
\right\Vert _{n} =O_{P}\left(\left| \triangle \right| ^{\ell +1}\left|h_{j}\right|
_{\ell+1,\infty}  \right)\label{EQ:W2}\\
&\times O_{P}\left\{\left| \triangle \right| ^{\ell +1}\left|\alpha_{0}\right|
_{\ell+1,\infty}+\frac{\lambda_{1}}{n\left| \triangle \right| ^{2}}%
\left( \left|\alpha_{0}\right| _{2,\infty}+\left| \triangle \right| ^{\ell -1}\left|\alpha_{0}\right|
_{\ell +1,\infty}\right) \right\}.\notag
\end{align}
Finally, one has
\begin{eqnarray}
\!\left\vert W_{j,3}\right\vert &\leq& \frac{\lambda_{1}}{n} \|s_{\lambda_1,\alpha_0}\|_{\mathcal{E}_{\upsilon}} \|\widetilde{h}_{j} \|_{\mathcal{E}_{\upsilon}}
\leq \frac{\lambda_{1}}{n} \|s_{0,\alpha_0}\|_{\mathcal{E}_{\upsilon}} \|\widetilde{h}_{j} \|_{\mathcal{E}_{\upsilon}} \label{EQ:W3} \\
&\leq& \frac{\lambda_{1}}{n} C_{1} \left( \left| \alpha_{0}\right| _{2,\infty}+
\left| \triangle \right| ^{\ell -1}\left| \alpha_{0}\right| _{\ell
	+1,\infty}\right) \left( \left| h_{j}\right| _{2,\infty}+
\left| \triangle \right| ^{\ell -1}\left| h_{j}\right| _{\ell
	+1,\infty}\right). \notag
\end{eqnarray}
Combining (\ref{EQ:W1})-(\ref{EQ:W3}), one has
\[
|W_{j}| =O_{P}\left[\frac{1}{\sqrt{n}}\left\{\left| \triangle \right| ^{\ell +1}\left|\alpha_{0}\right|
_{\ell+1,\infty}+\frac{\lambda_{1}}{n\left| \triangle \right| ^{3}}%
\left( \left|\alpha_{0}\right| _{2,\infty}+\left| \triangle \right| ^{\ell -1}\left|
\alpha_{0}\right| _{\ell +1,\infty}\right)\right\}
\right]
\]
for $j=1,\ldots,p$. Therefore, Assumptions 5--8 imply that
$\mathbf{Z}^{\top}\{\mathbf{I}-\mathbf{H}_{\mathbf{B}}(\lambda_{1})\}\bs{\alpha}_{0} =o_{P}(n^{1/2})$.

Next,
\begin{eqnarray*}
	\mathrm{Var} \left[\mathbf{Z}^{\top}\{\mathbf{I}-\mathbf{H}_{\mathbf{B}}(\lambda_{1})\}\bs{\epsilon}\left\vert
	\mathbf{Z},\mathbf{X}\right. \right]
	&=&\mathbf{Z}^{\top}
	\{\mathbf{I}-\mathbf{H}_{\mathbf{B}}(\lambda_{1})\}
	\{\mathbf{I}-\mathbf{H}_{\mathbf{B}}(\lambda_{1})\}
	\mathbf{Z}\sigma^{2}\notag\\
	&=&\sigma^2
	\sum_{i=1}^{n}
	(\mathbf{Z}_{i}-\widehat{\mathbf{Z}}_{i})(\mathbf{Z}_{i}-\widehat{\mathbf{Z}}_{i})^{\top},
	\label{EQ:var}
\end{eqnarray*}
where $\widehat{\mathbf{Z}}_{i}$ is the $i$th column of
$\mathbf{Z}^{\top} \mathbf{H}_{\mathbf{B}}(\lambda_{1})$.
Using Lemma \ref{LEM:div2}, we have
$\mathbf{Z}^{\top}\{\mathbf{I}-\mathbf{H}_{\mathbf{B}}(\lambda_{1})\}
\bs{\epsilon}=O_{P}(n^{1/2})$.
Thus, $\tau _{n}\mathbf{u}^{\top}\nabla{L}(\bs{\beta}_{0})=O_{P}(n^{1/2}\tau
_{n})\left\Vert \mathbf{u}\right\Vert $. Next according to the proof of Lemma A.10 in \cite{Wang:Wang:Lai:Gao:18} $n^{-1}\nabla^{2} L\left(\bs{\beta}\right) =n^{-1}(\mathbf{Z}-\widehat{\mathbf{Z}})^{\top}(\mathbf{Z}-\widehat{\mathbf{Z}})+o_P(1)
=E[(\mathbf{Z}_{i}-\widehat{\mathbf{Z}}_{i})(\mathbf{Z}_{i}-\widehat{\mathbf{Z}}_{i})^{\top}]+o_P(1)$, so one has $\frac{1}{2}\tau _{n}^{2}\mathbf{u}^{\top}\nabla^{2}{L}(\bs{\beta}_{0})\mathbf{u}=O_{P}(n\tau _{n}^{2})+o_{P}(1)$. Therefore,
\begin{equation}
U_{n,1}=O_{P}(n^{1/2}\tau _{n})+O_{P}(n\tau _{n}^{2})+o_{P}(1).
\label{EQ:Dn1}
\end{equation}%
For $U_{n,2}$, by a Taylor expansion
\begin{equation*}
p_{\lambda _{2}}(|\beta _{k0}+\tau _{n}u_{k}|)=p_{\lambda _{2}}(|\beta
_{k0}|)+\tau _{n}u_{k}p_{\lambda _{2}}^{\prime }\left( \left\vert \beta
_{k0}\right\vert \right) \mathrm{sgn}\left( \beta _{k0}\right) +\frac{1}{2}%
\tau _{n}^{2}u_{k}^{2}p_{\lambda _{2}}^{\prime \prime }\left( \left\vert
\beta _{k}^{\ast }\right\vert \right) ,
\end{equation*}%
where $\beta _{k}^{\ast }=(1-t)\beta _{k0}+t(\beta _{k0}+n^{-1/2}u_{k})$, $%
t\in [0,1]$, and
\[
p_{\lambda _{2}}(|\beta _{k0}+\tau_{n}u_{k}|)=p_{\lambda _{2}}(|\beta _{k0}|)+\tau_{n}u_{k}p_{\lambda _{2}}^{\prime }\left( \left\vert \beta_{k0}\right\vert \right) \mathrm{sgn}\left( \beta _{k0}\right) +\frac{1}{2}\tau _{n}^{2}u_{k}^{2}p_{\lambda _{2}}^{\prime
	\prime }\left( \left\vert \beta _{k0}\right\vert \right)+o(n^{-1}).
\]
Thus, by the Cauchy-Schwartz inequality,
\begin{eqnarray*}
	n^{-1}U_{n,2} &=&\tau _{n}\sum_{k=1}^{q}u_{k}p_{\lambda _{2}}^{\prime
	}\left( \left\vert \beta _{k0}\right\vert \right) \mathrm{sgn}\left( \beta
	_{k0}\right) +\frac{1}{2}\tau _{n}^{2}\sum_{k=1}^{q}u_{k}^{2}p_{\lambda _{2}}^{\prime \prime }\left( \left\vert \beta _{k0}\right\vert \right) \\
	&\leq &\sqrt{r}\tau _{n}a_{n,\lambda_2}\Vert \mathbf{u}\Vert +\frac{1}{2}\tau
	_{n}^{2}b_{n,\lambda_{2}}\Vert \mathbf{u}\Vert ^{2}=C\tau _{n}^{2}(\sqrt{q}+b_{n,\lambda_{2}}C).
\end{eqnarray*}%
As $b_{n,\lambda_{2}}\rightarrow 0$, the first two terms on the right hand side of (\ref{EQ:Dn1}) dominate $U_{n,2}$, by taking $C$ sufficiently large. Hence (\ref{EQ:existence}) holds for sufficiently large $C$.${\tiny \blacksquare }\medskip$

%%%%%%%%%%%%%%%%%%%%%%%%%%%%%%%%%%%%%%%%%%%%%%%%%%%%%%%%%%%%%%
%%%%%%%%%%%%%%%%%%%%%%%%%%%%%%%%%%%%%%%%%%%%%%%%%%%%%%%%%%%%%%
%%%%%%%%%%%%%%%%%%%%%%%%%%%%%%%%%%%%%%%%%%%%%%%%%%%%%%%%%%%%%%
\setcounter{chapter}{10} \renewcommand{\thetheorem}{C.\arabic{theorem}}
\renewcommand{\theproposition}{C.\arabic{proposition}}
\renewcommand{\thelemma}{C.\arabic{lemma}}
\renewcommand{\thecorollary}{C.\arabic{corollary}}
\renewcommand{\theequation}{C.\arabic{equation}} \renewcommand{\thesubsection}{C.\arabic{subsection}}
\renewcommand{\thetable}{{C.\arabic{table}}} \setcounter{table}{0}
\renewcommand{\thefigure}{C.\arabic{figure}} \setcounter{figure}{0}
\setcounter{equation}{0} \setcounter{lemma}{0} \setcounter{proposition}{0}
\setcounter{theorem}{0} \setcounter{subsection}{0} \setcounter{corollary}{0}
\vskip .05in \noindent \textbf{Proof of Theorem \protect\ref{THM:ORACLE}}

We first show that the estimator $\widehat{\bs{\beta}}$ must possess the sparsity property $\widehat{\bs{\beta}}_{2}=0$, which is stated as follows.

\begin{lemma}
	\label{LEM:sparsity} Under the conditions of Theorem \ref{THM:ORACLE}, with
	probability tending to 1, for any given $\bs{\beta}_{1}$ satisfying
	that $\Vert \bs{\beta}_{1}-\bs{\beta}_{10}\Vert =O_{P}(n^{-1/2})$
	and any constant $C$,
	$R\{(\bs{\beta}_{1}^{\top},\mathbf{0}^{\top})^{\top}\}=\min_{\Vert \bs{\beta}_{2}\Vert \leq Cn^{-1/2}}R\{(\bs{\beta}_{1}^{\top},%
	\bs{\beta}_{2}^{\top})\}$.
\end{lemma}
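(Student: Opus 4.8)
The plan is to run the classical Fan--Li sign argument: I will show that, with probability tending to one, the objective $R$ is coordinatewise monotone in each component of $\bs{\beta}_2$ on the interval $0<|\beta_j|\le Cn^{-1/2}$, so that the minimizer over the ball $\{\|\bs{\beta}_2\|\le Cn^{-1/2}\}$ is forced to the corner $\bs{\beta}_2=\mathbf{0}$. Fix any $j\in\{q+1,\ldots,p\}$, a truly null coordinate since $\bs{\beta}_{20}=\mathbf{0}$. Because the spline-penalty part of $R$ is smooth in $\bs{\beta}$ and only the shrinkage penalty is nonsmooth at $\beta_j=0$, for $\beta_j\ne0$ I may differentiate to obtain
\begin{equation*}
\frac{\partial R(\bs{\beta};\lambda_1,\lambda_2)}{\partial\beta_j}
=\nabla_j L(\bs{\beta};\lambda_1)+n\,p_{\lambda_2}'(|\beta_j|)\,\mathrm{sgn}(\beta_j),
\end{equation*}
where $\nabla_j L$ denotes the $j$th entry of $\nabla L(\bs{\beta})=-(\mathbf{Z}-\widehat{\mathbf{Z}})^{\top}(\mathbf{Y}-\mathbf{Z}\bs{\beta})$. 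It then suffices to prove that $\partial R/\partial\beta_j$ shares the sign of $\beta_j$ on the punctured interval.

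The next step is to bound the smooth gradient term uniformly over the relevant ball. Expanding $\nabla L$ about $\bs{\beta}_0$ gives $\nabla L(\bs{\beta})=\nabla L(\bs{\beta}_0)+\nabla^2 L(\bs{\beta}^{\ast})(\bs{\beta}-\bs{\beta}_0)$ for some $\bs{\beta}^{\ast}$ on the segment joining $\bs{\beta}$ and $\bs{\beta}_0$. From the proof of Theorem~1 I may use $\nabla L(\bs{\beta}_0)=\mathbf{Z}^{\top}\{\mathbf{I}-\mathbf{H}_{\mathbf{B}}(\lambda_1)\}\bs{\alpha}_0+\mathbf{Z}^{\top}\{\mathbf{I}-\mathbf{H}_{\mathbf{B}}(\lambda_1)\}\bs{\epsilon}=o_P(n^{1/2})+O_P(n^{1/2})=O_P(n^{1/2})$, while Lemma~\ref{LEM:div2} together with the identity $n^{-1}\nabla^2 L=n^{-1}(\mathbf{Z}-\widehat{\mathbf{Z}})^{\top}(\mathbf{Z}-\widehat{\mathbf{Z}})+o_P(1)$ gives $\nabla^2 L(\bs{\beta}^{\ast})=O_P(n)$. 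Since $\bs{\beta}_1$ is the given $\sqrt{n}$-consistent value and $\|\bs{\beta}_2\|\le Cn^{-1/2}$ with $\bs{\beta}_{20}=\mathbf{0}$, we have $\|\bs{\beta}-\bs{\beta}_0\|=O_P(n^{-1/2})$, so the second-order term is $O_P(n)\cdot O_P(n^{-1/2})=O_P(n^{1/2})$. Collecting both pieces yields $\nabla_j L(\bs{\beta})=O_P(n^{1/2})$ uniformly over the ball.

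Finally I factor out the scale $n\lambda_2$ to compare the two contributions:
\begin{equation*}
\frac{\partial R}{\partial\beta_j}
=n\lambda_2\left\{\lambda_2^{-1}p_{\lambda_2}'(|\beta_j|)\,\mathrm{sgn}(\beta_j)
+O_P\!\left((\sqrt{n}\,\lambda_2)^{-1}\right)\right\}.
\end{equation*}
The assumption $\sqrt{n}\,\lambda_2\to\infty$ forces the $O_P$ remainder to $o_P(1)$, whereas the condition $\liminf_{n}\liminf_{\beta_k\to0^{+}}\lambda_2^{-1}p_{\lambda_2}'(|\beta_k|)>0$ keeps the leading factor bounded away from zero for $|\beta_j|\le Cn^{-1/2}$. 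Hence, with probability tending to one, $\partial R/\partial\beta_j>0$ when $0<\beta_j\le Cn^{-1/2}$ and $\partial R/\partial\beta_j<0$ when $-Cn^{-1/2}\le\beta_j<0$; that is, $R$ is decreasing then increasing in $\beta_j$ and is minimized at $\beta_j=0$. Applying this coordinatewise for every $j\in\{q+1,\ldots,p\}$ shows the joint minimizer over $\{\|\bs{\beta}_2\|\le Cn^{-1/2}\}$ is $\bs{\beta}_2=\mathbf{0}$, which is the claim.

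I expect the main obstacle to be the \emph{uniformity} of the bound $\nabla_j L(\bs{\beta})=O_P(n^{1/2})$: the Taylor remainder involves $\nabla^2 L$ at an intermediate point $\bs{\beta}^{\ast}$ depending on $\bs{\beta}$, so one must argue that $\sup_{\bs{\beta}^{\ast}}\|n^{-1}\nabla^2 L(\bs{\beta}^{\ast})\|=O_P(1)$ over the shrinking ball, and that the penalty-derivative lower bound applies simultaneously across all null coordinates rather than merely pointwise.
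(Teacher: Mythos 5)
Your proposal is correct and follows essentially the same route as the paper's proof: the Fan--Li coordinatewise sign argument, the decomposition $\partial R/\partial\beta_j=\nabla_j L+n\,p_{\lambda_2}'(|\beta_j|)\,\mathrm{sgn}(\beta_j)$, a Taylor expansion of $\nabla L$ about $\bs{\beta}_0$ bounded by the Theorem 1 estimates and the Hessian identity, and the final factoring by $n\lambda_2$ so that $\sqrt{n}\lambda_2\to\infty$ and the $\liminf$ condition force the derivative to share the sign of $\beta_j$. The only discrepancies are cosmetic (a dropped minus sign on $\nabla L(\bs{\beta}_0)$, which is irrelevant to the order bounds), and your closing remark on uniformity flags a point the paper also leaves implicit.
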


\begin{proof}
	To prove that the minimizer is obtained at $\bs{\beta}_{2}=0$, it
	suffices to show that with probability tending to 1, as $n\rightarrow \infty
	$, for any $\bs{\beta}_{1}$ satisfying $\Vert \bs{\beta}_{1}-%
	\bs{\beta}_{10}\Vert =O_{P}(n^{-1/2})$, $\partial R(\bs{\beta}%
	)/\partial \beta _{k}$ and $\beta _{k}$ have different signs for $\beta
	_{k}\in (-Cn^{-1/2},Cn^{-1/2})$, for $k=q+1,\cdots ,p$. Note that
	\begin{equation*}
	\nabla{R}_{k}\left( \bs{\beta}\right) \equiv \frac{\partial
		R(\bs{\beta})}{\partial \beta _{k}}=\nabla{L}_{k}\left(
	\bs{\beta}\right) +np_{\lambda_{2}}^{\prime }\left( \left\vert
	\beta _{k}\right\vert \right) \mathrm{sgn}(\beta _{k}),
	\end{equation*}%
	where $\nabla{L}_{k}\left( \bs{\beta}\right) =\nabla{L}_{k}\left( \bs{\beta}_{0}\right) +\sum_{k^{\prime}=1}^{p}\nabla^{2}{L}_{kk^{\prime}}\{t%
	\beta_{k^{\prime}}+(1-t)\beta _{0k^{\prime}}\}\left( \beta
	_{k^{\prime}}-\beta _{0k^{\prime}}\right) $, $t\in [0,1]$. Let $\mathbf{e}_{k}$ be the zero vector except for an entry of one at position $k$, then
	\[
	\nabla{L}_{k}\left( \bs{\beta}_{0}\right)=-\mathbf{e}_{k}^{\top}\mathbf{Z}^{\top}(\mathbf{I}-\mathbf{H}_{\mathbf{B}}(\lambda_{1}))\bs{\alpha}_{0}
	-\mathbf{e}_{k}^{\top}\mathbf{Z}^{\top}(\mathbf{I}-\mathbf{H}_{\mathbf{B}}(\lambda_{1}))\bs{\epsilon}\\
	=-\mathbf{e}_{k}^{\top}\mathbf{Z}^{\top}(\mathbf{I}-\mathbf{H}_{\mathbf{B}}(\lambda_{1}))\bs{\epsilon}+o_{P}(n^{1/2}).
	\]
	According to Lemma A.10 in \cite{Wang:Wang:Lai:Gao:18},
	\begin{equation*}
	n^{-1}\nabla^{2}{L}\left( \bs{\beta}_{0}\right)=n^{-1}E\left\{(\mathbf{Z}_{i}-\widetilde{\mathbf{Z}}_{i})
	(\mathbf{Z}_{i}-\widetilde{\mathbf{Z}}_{i})^{\top}\right\}+o_{P}\left( 1\right),
	\end{equation*}%
	\begin{equation*}
	\frac{1}{n}\sum_{k^{\prime}=1}^{d_1}\nabla^{2}{L}_{kk^{\prime}}\left( \beta
	_{k^{\prime}}-\beta _{0k^{\prime}}\right) =(\bs{\beta}-\bs{\beta}%
	_{0})^{\top}\left[E\left\{(\mathbf{Z}_{i}-\widetilde{\mathbf{Z}}_{i})
	(\mathbf{Z}_{i}-\widetilde{\mathbf{Z}}_{i})^{\top}\right\}\mathbf{e}_{k}+o_{P}\left( 1\right) \right].
	\end{equation*}%
	Thus, for any $\bs{\beta}$ satisfying $\Vert \bs{\beta}-\bs{\beta}_{0}\Vert =O_{P}(n^{-1/2})$ as stated in the assumption, we have $%
	n^{-1}\nabla{L}_{k}\left( \bs{\beta}\right) =O_{P}(n^{-1/2})$. Therefore, for any nonzero $\beta _{k}$ and $k=q+1,\cdots,p$,
	\[
	\nabla{R}_{k}\left( \bs{\beta}\right)=n\lambda_{2}\left\{\lambda_{2}^{-1}p_{\lambda_{2}}^{\prime }\left(
	\left\vert \beta _{k}\right\vert \right) \mathrm{sgn}(\beta _{k})+O_{P}( n^{-1/2}\lambda_{2}^{-1})\right\}.
	\]
	Since $\liminf_{n\rightarrow \infty }\liminf_{\beta _{k}\rightarrow
		0^{+}}\lambda_{2}^{-1}p_{\lambda_{2}}^{\prime }(|\beta _{k}|)>0$ and $%
	\sqrt{n}\lambda_{2}\rightarrow \infty $, the sign of the derivative is
	determined by that of $\beta _{k}$. Thus, the desired result is obtained.
\end{proof}

%%%%%%%%%%%%%%%%%%%%%%%%%%%%%%%%%%%%%%%%%%%%%%%%%%%%%%%%%%%%%%%%%%%%%%%%%%%%%
\begin{proof}[Proof of Theorem \ref{THM:ORACLE}]
	From Lemma \ref{LEM:sparsity}, it follows that $\widehat{\bs{\beta}}%
	_{2}=\mathbf{0}$.
	\begin{eqnarray*}
		\nabla{R}\left( \bs{\beta}\right) &=&\nabla{L}(\bs{\beta}%
		_{0})+\nabla^{2}{L}(\bs{\beta}^{\ast })\left( \bs{\beta} -\bs{\beta}%
		_{0}\right) +n\left\{ p_{\lambda_{2}}^{\prime }\left( \left\vert \beta
		_{k0}\right\vert \right) \mathrm{sign}\left( \beta _{k0}\right) \right\}
		_{k=1}^{q} \\
		&&+ \sum_{k=1}^{q}\left\{p_{\lambda_{2}}^{\prime \prime }\left( \left\vert
		\beta _{k0}\right\vert \right) +o_{P}\left( 1\right) \right\} (		\widehat{\beta }_{k}-\beta_{k0}) ,
	\end{eqnarray*}%
	where $\bs{\beta}^{\ast }=t\bs{\beta}_{0}+\left( 1-t\right)
	\bs{\beta}$, $t\in \lbrack 0,1]$. Using an argument similar to the
	proof of Theorem \ref{THM:ROOTn}, it can be shown that there exists a $%
	\widehat{\bs{\beta}}_{1}$ in Theorem \ref{THM:ROOTn} that is a root-$n$
	consistent local minimizer of $R\left\{( \bs{\beta}_{1} ^{\top},%
	\mathbf{0}^{^{\top}}) ^{\top}\right\}$, satisfying $
	n^{-1}\nabla{R}
	\left\{(\widehat{\bs{\beta}}_{1}^{\top},%
	\mathbf{0}^{^{\top}})^{\top}\right\} =\mathbf{0}$.
	
	The left hand side of the above equation can be written as
	\[
	n^{-1}\mathbf{Z}_{1}^{\top}(\mathbf{I}-\mathbf{H}_{\mathbf{B}}(\lambda_{1}))\bs{\epsilon} +\left\{ p_{\lambda_{2}}^{\prime }\left(
	\left\vert \beta _{k0}\right\vert \right) \text{sign}\left( \beta
	_{k0}\right) \right\} _{k=1}^{q}+o_{P}( n^{-1/2})
	\]
	\[
	+\left[ E\left\{(\mathbf{Z}_{1i}-\widetilde{\mathbf{Z}}_{1i})(\mathbf{Z}_{1i}
	-\widetilde{\mathbf{Z}}_{1i})^{\top} \right\} +o_{P}\left( 1\right) \right](
	\widehat{\bs{\beta}}_{1}-\bs{\beta}_{10})
	+\left\{ \sum_{k=1}^{q}p_{\lambda_{2}}^{\prime \prime }\left( \left\vert
	\beta _{k0}\right\vert \right) +o_{P}\left( 1\right) \right\}(
	\widehat{\bs{\beta}}_{1}-\bs{\beta}_{10}) .
	\]
	Thus, one has
	\begin{eqnarray}
	\mathbf{0} &=&n^{-1}\mathbf{Z}_{1}^{\top}(\mathbf{I}-\mathbf{H}_{\mathbf{B}}(\lambda_{1}))\bs{\epsilon} +\mathbf{\kappa} _{n,\lambda_2}+o_{P}(n^{-1/2}) \nonumber \\
	&&+\left[E\left\{(\mathbf{Z}_{1i}-\widetilde{\mathbf{Z}}_{1i})(\mathbf{Z}_{1i}
	-\widetilde{\mathbf{Z}}_{1i})^{\top} \right\}+
	\mathbf{\Sigma}_{\lambda_2}+o_{P}(1) \right] (\widehat{\bs{\beta}}_{1}-\bs{\beta}_{10}).
	\label{EQ:d1}
	\end{eqnarray}
	Next we study the conditional variance of $\mathbf{Z}_{1}^{\top}(\mathbf{I}-\mathbf{H}_{\mathbf{B}}(\lambda_{1}))\bs{\epsilon}$ given $\mathbf{Z}_{1}$ and $\mathbf{X}$. We write
	\[	\mathrm{Var}\left\{\mathbf{Z}_{1}^{\top}(\mathbf{I}-\mathbf{H}_{\mathbf{B}}(\lambda_{1}))\bs{\epsilon}|\mathbf{Z}_{1}, \mathbf{X}\right\}=\sum_{i=1}^{n}(\mathbf{Z}_{1i}-\widehat{\mathbf{Z}}_{1i}) (\mathbf{Z}_{i}-\widehat{\mathbf{Z}}_{1i})^{\top}
	=\left(n\langle z_{j},z_{j^{\prime}}-s_{\lambda_{1},z_{j^{\prime}}}\rangle_{n}\right)_{1\leq j,j^{\prime}\leq q}.
	\]
	For $\widetilde{h}_{j}\in \mathbb{S}$ defined in (\ref{EQ:h_j_tilde}), one has
	\begin{equation}
	\langle z_{j},z_{j^{\prime}}-s_{\lambda_{1},z_{j^{\prime}}}\rangle_{n}
	=\langle z_{j}-\widetilde{h} _{j}, z_{j^{\prime}}-s_{\lambda_{1},z_{j^{\prime}}}\rangle_{n}
	+\frac{\lambda_{1}}{n}\langle s_{\lambda_{1},z_{j^{\prime}}}, \widetilde{h}_{j} \rangle_{\mathcal{E}_{\upsilon}}.
	\label{EQ:zj}
	\end{equation}
	Note that
	$\vert\langle s_{\lambda_{1},z_{j^{\prime}}}, \widetilde{h}_{j^{\prime}}\rangle_{\mathcal{E}_{\upsilon}}\vert
	\leq \|s_{\lambda_{1},z_{j^{\prime}}}\|_{\mathcal{E}_{\upsilon}} \|\widetilde{h}_{j^{\prime}} \|_{\mathcal{E}_{\upsilon}}
	\leq \|\widehat{z}_{j^{\prime},0}\|_{\mathcal{E}_{\upsilon}} \|\widetilde{h}_{j^{\prime}} \|_{\mathcal{E}_{\upsilon}}$, $
	\|s_{\lambda_{1},z_{j^{\prime}}}\|_{\mathcal{E}_{\upsilon}}\leq C|\triangle|^{-2} \|\widehat{z}_{j^{\prime},0}\|_{\infty}$. Thus,
	$\vert\langle s_{\lambda_{1},z_{j^{\prime}}}, \widetilde{h}_{j^{\prime}}\rangle_{\mathcal{E}_{\upsilon}}\vert
	\leq C |\triangle|^{-2} \|\widehat{z}_{j^{\prime},0}\|_{\infty} \|\widetilde{h}_{j^{\prime}} \|_{\mathcal{E}_{\upsilon}}
	\leq C^{*} |\triangle|^{-3} (| h_{j}^{\prime}|_{2,\infty}+
	|\triangle|^{\ell +1-\upsilon}| h_{j}^{\prime}|_{\ell+1,\infty})$. 
	We can decompose $\langle z_{j}-\widetilde{h} _{j}, z_{j^{\prime}}-s_{\lambda_{1},z_{j^{\prime}}}\rangle_{n}$ as follows:
	\begin{align}
	&\langle z_{j}-\widetilde{h} _{j}, z_{j^{\prime}}-s_{\lambda_{1},z_{j^{\prime}}}\rangle_{n}
	=\langle z_{j}-h_{j}, z_{j^{\prime}}-h_{j^{\prime}}\rangle_{n}
	+ \langle h_{j}-\widetilde{h}_{j}, h_{j^{\prime}}-\widetilde{h}_{j^{\prime}}\rangle_{n} +\langle z_{j}-h_{j}, h_{j^{\prime}}-\widetilde{h}_{j^{\prime}}\rangle_{n}\notag\\
	&
	+\langle h_{j}-\widetilde{h}_{j}, z_{j^{\prime}}-h_{j^{\prime}}\rangle_{n}
	+\langle z_{j}-h_{j}, \widetilde{h}_{j^{\prime}} -s_{\lambda_{1},z_{j^{\prime}}}\rangle_{n}
	+\langle h_{j}-\widetilde{h}_{j}, \widetilde{h}_{j^{\prime}}-s_{\lambda_{1},z_{j^{\prime}}} \rangle_{n}. \label{EQ:zj1}
	\end{align}
	According to (\ref{EQ:h_j_tilde}), the second term on the right side of (\ref{EQ:zj1}) satisfies that
	\begin{equation*}
	|\langle h_{j}-\widetilde{h}_{j}, h_{j^{\prime}}-\widetilde{h}_{j^{\prime}}\rangle _{\infty}|
	\leq \|h_{j}-\widetilde{h}_{j}\|_{\infty}
	\|h_{j^{\prime}}-\widetilde{h}_{j^{\prime}}\|_{\infty}
	=o_{P}(1).
	\label{EQ:zj12}
	\end{equation*}
	The third term on the right side of (\ref{EQ:zj1}) satisfies that
	\begin{equation*}
	\vert \langle z_{j}-h_{j}, h_{j^{\prime}}-\widetilde{h}_{j^{\prime}}\rangle  _{n}\vert
	\leq \left\{\| z_{j}-h_{j}\|_{L^{2}} (1+o_{P}(1))\right\}\|h_{j^{\prime}}-\widetilde{h}_{j^{\prime}}\|_{\infty}
	=o_{P}(1).
	\label{EQ:zj13}
	\end{equation*}
	Similarly, we have
	$|\langle h_{j}-\widetilde{h}_{j}, z_{j^{\prime}}-h_{j^{\prime}}\rangle_{n}|=o_{P}(1)$.
	From the triangle inequality, we have
	\[
	\|\widetilde{h}_{j} -s_{\lambda_{1},z_{j}} \| _{n} \leq
	\|\widetilde{h}_{j} -h_{j}\|_{n} + \|h_{j}-s_{0,z_{j}}\|_{n} + \|s_{0,z_{j}}-s_{\lambda_{1},z_{j}}\|_{n}.
	\]
	According to (\ref{EQ:h_j_tilde}) and Lemma A.9 in \cite{Wang:Wang:Lai:Gao:18}, 
	$\| \widetilde{h}_{j} -s_{\lambda_{1},z_{j}} \| _{n} \leq
	\|h_{j}-s_{0,z_{j}}\|_{n} + o_{P}(1)$.
	Let $h_{j,n}^{\ast}=\mathrm{argmin}_{h\in \mathbb{S}}\|z_j-h\|_{L^{2}}$, then, based on the triangle inequality, one has
	$\|h_{j}-s_{0,z_{j}}\|_{n}\leq\|h_{j}-h_{j,n}^{\ast }\|_{n}
	+\|h_{j,n}^{\ast}- s_{0,z_{j}}\|_{n}$. 
	It is clear that $\|h_{j}-h_{j,n}^{\ast}\|_{L^{2}}=o_{P}(1)$. By Lemma \ref{LEM:Rnorder}, one has $\|h_{j}-h_{j,n}^{\ast }\|_{n}=o_{P}(1)$.
	One also observes that $
	\|s_{0,z_{j}}-h_{j,n}^{\ast}\|_{L^{2}}^{2}=\|z_{j}-s_{0,z_{j}}\|_{L^{2}}^{2}-\|z_{j}- h_{j,n}^{\ast}\|_{L^{2}}^{2}$
	and
	$\|z_{j}-s_{0,z_{j}}\|_{n}\leq \|z_{j}- h_{j,n}^{\ast}\|_{n}$.
	Applying Lemma \ref{LEM:Rnorder} again, we have
	$\|s_{0,z_{j}}-h_{j,n}^{\ast}\|_{L^{2}}^{2}=o_{P}(\|z_{j}-h_{j,n}^{\ast}\|_{L^{2}}^{2})+o_{P}(\|z_{j}- s_{0,z_{j}}\|_{L^{2}}^{2})$. Moreover, there exists a constant $C$ such that $\|z_{j}- h_{j,n}^{\ast}\|_{L^{2}} \leq C$, and  
	$\|z_{j}-s_{0,z_{j}}\|_{L^{2}}\leq \|z_{j}- h_{j,n}^{\ast}\|_{L^{2}}+ \|h_{j,n}^{\ast}-s_{0,z_{j}}\|_{L^{2}}
	\leq C+ \|h_{j,n}^{\ast}-s_{0,z_{j}}\|_{L^{2}}$. Therefore, $\|h_{j,n}^{\ast}-s_{0,z_{j}}\|_{L^{2}}=o_{P}(1)$, then
	$\|h_{j,n}^{\ast }-s_{0,z_{j}}\|_{n}=o_{P}(1)$ by Lemma \ref{LEM:Rnorder}. Hence,
	\begin{equation}
	\|s_{0,z_{j}} -h_{j}\|_{n}=o_{P}(1).
	\label{EQ:zj-psijstar}
	\end{equation}
	Furthermore, by Lemma \ref{LEM:Rnorder} and (\ref{EQ:zj-psijstar}), one has
	\[
	\vert \langle z_{j}-h_{j}, \widetilde{h}_{j^{\prime}} -s_{\lambda_{1},z_{j^{\prime}}} \rangle_{n} \vert
	\leq  \left\{\| z_{j}-h_{j}\|_{L^{2}} (1+o_{P}(1))\right\}
	\left\{ \|h_{j}-s_{0,z_{j}}\|_{n} + o_{P}(1)\right\}=o_{P}(1).
	\]
	Similarly, one has
	\begin{equation}
	\vert \langle h_{j}-\widetilde{h}_{j}, \widetilde{h}_{j^{\prime}} -s_{\lambda_{1},z_{j^{\prime}}} \rangle_{n} \vert
	\leq  \|h_{j}-\widetilde{h}_{j}\|_{n}
	\left\{ \|h_{j}-s_{0,z_{j}}\|_{n} + o_{P}(1)\right\}=o_{P}(1).
	\label{EQ:zj6}
	\end{equation}
	Combining (\ref{EQ:zj})-(\ref{EQ:zj6}) yields
	$
	\langle z_{j},z_{j^{\prime}}-s_{\lambda_{1},z_{j^{\prime}}}\rangle_{n}
	= \langle z_{j}-h_{j},z_{j^{\prime}}-h_{j^{\prime}}^{\ast}\rangle_{n}  +o_{P}(1)
	$. Therefore,
	\begin{eqnarray*}
		n^{-1}\mathrm{Var}\left\{\mathbf{Z}_{1}^{\top}(\mathbf{I}-\mathbf{H}_{\mathbf{B}}(\lambda_{1}))\bs{\epsilon}|\mathbf{Z}_{1}, \mathbf{X}\right\}&=&n^{-1}\sum_{i=1}^{n}(\mathbf{Z}_{1i}-\widetilde{\mathbf{Z}}_{1i}) (\mathbf{Z}_{1i}-\widetilde{\mathbf{Z}}_{1i})^{\top}
		+o_{P}(1)\\
		&=&E[(\mathbf{Z}_{1i}-\widetilde{\mathbf{Z}}_{1i}) (\mathbf{Z}_{1i}-\widetilde{\mathbf{Z}}_{1i})^{\top}]+o_{P}(1),
	\end{eqnarray*}
	where $\widetilde{\mathbf{Z}}_{1i}=\left\{h_{1}(\mathbf{X}_{i}), \ldots, h_{q}(\mathbf{X}_{i})\right\}^{\top}$. By (\ref{EQ:d1}), Slutsky's Theorem and central limit theorem, one has
	$\sqrt{n}(\mathbf{\Sigma}_{s} +\mathbf{\Sigma} _{\lambda_{2} }) \left\{ \widehat{\bs{\beta}}_{1}-\bs{\beta}_{10}+(\mathbf{\Sigma}_{s} +\mathbf{\Sigma} _{\lambda_{2} }) ^{-1}\mathbf{\kappa}_{n,\lambda_{2}}\right\} \rightarrow \mathrm{N}(\mathbf{0},\sigma^{2}\mathbf{\Sigma}_{s})$ 
	using similar arguments as in the proof of Theorem 1 in \cite{Wang:Wang:Lai:Gao:18}, where
	$\mathbf{\Sigma}_{s}=\sigma^{-2}E[(\mathbf{Z}_{1}-\widetilde{\mathbf{Z}}_{1})
	(\mathbf{Z}_{1}-\widetilde{\mathbf{Z}}_{1})^{\top}]$.
	
	Hence the result in Theorem \ref{THM:ORACLE} is proved.
\end{proof}

%%%%%%%%%%%%%%%%%%%%%%%%%%%%%%%%%%%%%%%%%%%%%%%%%%%%%%%
%%%%%%%%%%%%%%%%%%%%%%%%%%%%%%%%%%%%%%%%%%%%%%%%%%%%%%%
%%%%%%%%%%%%%%%%%%%%%%%%%%%%%%%%%%%%%%%%%%%%%%%%%%%%%%%
\bibliographystyle{asa}
\bibliography{wileyNJD-APA}

\end{document}